\crefname{subappendix}{Appendix}{Appendices}
\crefname{appendix}{Appendix}{Appendices}
\crefname{section}{Section}{Sections}
\crefname{figure}{Figure}{Figures}
\newtheorem{theorem}{Theorem}[section]
\newtheorem{fact}[theorem]{Fact}
\newtheorem{lemma}[theorem]{Lemma}
\newtheorem{corollary}[theorem]{Corollary}
\newtheorem{proposition}[theorem]{Proposition}
\newtheorem{observation}[theorem]{Observation}
\theoremstyle{definition}
\newtheorem{definition}[theorem]{Definition}
\newtheorem{construction}[theorem]{Construction}
\newtheorem{claim}[theorem]{Claim}
\theoremstyle{remark}
\newtheorem{example}[theorem]{Example}
\crefname{section}{Section}{Sections}
\crefname{construction}{Construction}{Constructions}
\crefname{claim}{Claim}{Claims}
\crefname{fact}{Fact}{Facts}
\newcommand{\IPMFull}{\textsc{Internal Pattern Matching Queries}\xspace}
\newcommand{\IPM}{\textsc{IPM Queries}\xspace}
\newcommand{\IPMOne}{\textsc{IPM Query}\xspace}
\newcommand{\RIPM}{\textsc{Restricted IPM Queries}\xspace}
\newcommand{\RIPMOne}{\textsc{Restricted IPM Query}\xspace}
\newcommand{\PQ}{\textsc{Period Queries}\xspace}
\newcommand{\BQ}{\textsc{Prefix-Suffix Queries}\xspace}
\newcommand{\BQone}{\textsc{Prefix-Suffix Query}\xspace}
\newcommand{\PEQ}{\textsc{Periodic Extension Queries}\xspace}
\newcommand{\FC}{\textsc{Cyclic Equivalence Queries}\xspace}
\newcommand{\BLCPFull}{\textsc{Bounded Longest Common Prefix (LCP) Queries}\xspace}
\newcommand{\BLCP}{\textsc{Bounded LCP Queries}\xspace}
\newcommand{\ILCP}{\textsc{Interval Longest Common Prefix (LCP) Queries}\xspace}
\newcommand{\ILCPFull}{\textsc{Interval Longest Common Prefix (LCP) Queries}\xspace}
\newcommand{\LSC}{\textsc{LZ Substring Compression Queries}\xspace}
\newcommand{\GSC}{\textsc{Generalized LZ Substring Compression Queries}\xspace}
\newcommand{\LCEQFull}{\textsc{Longest Common Extension Queries}\xspace}
\newcommand{\LCEQ}{\textsc{LCE Queries}\xspace}
\newcommand{\LCE}{\operatorname{LCE}\xspace}
\newcommand{\LZ}{\operatorname{LZ}}
\newcommand{\Zz}{\mathbb{Z}_{\ge 0}}
\newcommand{\pillar}{\ensuremath{\mathtt{PILLAR}}\xspace}
\newcommand{\TM}{\mathsf{TM}}
\newcommand{\SPEC}{\mathsf{SpRuns}}
\newcommand{\R}{\mathsf{Samples}}
\newcommand{\N}{\mathsf{NHP}}
\newcommand{\HP}{\mathsf{HP}}
\newcommand{\Y}{\mathsf{Y}}
\newcommand{\B}{\mathbf{B}}
\newcommand{\ceil}[1]{\left\lceil #1 \right\rceil}
\newcommand{\floor}[1]{\left\lfloor #1 \right\rfloor}
\newcommand{\eps}{\varepsilon}
\newcommand{\sub}{\subseteq}
\newcommand{\integ}{\mathbb{Z}}
\newcommand{\lcs}{\operatorname{lcs}}
\newcommand{\lcp}{\operatorname{lcp}}
\newcommand{\lcpinf}[2]{\lcp({#1}^\infty, {#2})}
\newcommand{\dd}{\mathinner{\ldotp\ldotp}}
\DeclareMathOperator{\per}{per}
\newcommand{\val}{\mathsf{val}}
\newcommand{\aalph}{\mathit{alph}}
\newcommand{\pred}{\mathrm{pred}}
\newcommand{\suc}{\mathrm{succ}}
\newcommand{\rank}{\mathrm{rank}}
\newcommand{\osiemsiedem}{\left(\tfrac{8}{7}\right)}
\newcommand{\Occ}{\mathit{Occ}}
\newcommand{\Rot}{\mathsf{rot}}
\newcommand{\ROT}{\mathsf{Rot}}
\newcommand{\maybeqed}{}
\newcommand{\RUNS}{\mathsf{RUNS}}
\newcommand{\Runs}{\mathsf{R}}
\newcommand{\MaxReg}{\mathsf{MaxReg}}
\newcommand{\Mid}{\mathsf{Mid}}
\newcommand{\code}{\mathsf{code}}
\newcommand{\Oh}{\mathcal{O}}
\renewcommand{\S}{\mathsf{Sync}}
\renewcommand{\S}{\mathsf{Sync}}
\newcommand{\F}{\mathsf{F}}
\newcommand{\SF}{\mathsf{SyncFr}}
\newcommand{\rsucc}{\mathit{rsucc}}
\newcommand{\BLOCK}{\mathsf{Block}}
\newcommand{\run}{\mathsf{run}}
\newcommand{\id}{\mathsf{id}}
\newcommand{\len}{\mathsf{len}}
\newsavebox{\mybox}
\newenvironment{dsproblem}[1]
{\begin{center}\begin{lrbox}{\mybox}\begin{minipage}{0.96\columnwidth}{\textsc{#1}}\\}
{\end{minipage}\end{lrbox}\fbox{\usebox{\mybox}}\end{center}}
  \newcommand{\defdsproblem}[2]{
  \begin{dsproblem}{#1}
#2
  \end{dsproblem}
  }
\title{Internal Pattern Matching Queries in~a Text and Applications\thanks{A preliminary version of this work was presented at the 26th Annual {ACM-SIAM} Symposium on Discrete Algorithms (SODA 2015)~\cite{DBLP:conf/soda/KociumakaRRW15} and included in the Ph.D.\ thesis of the first author~\cite{phd}.}}
\author[1]{Tomasz Kociumaka}
\author[2,3]{Jakub Radoszewski}
\author[2]{Wojciech Rytter}
\author[2]{Tomasz Wale\'n}
\affil[1]{Max Planck Institute for Informatics,  Saarbrücken, Germany}
\affil[ ]{\texttt{tomasz.kociumaka@mpi-inf.mpg.de}}
 \affil[2]{University of Warsaw, Warsaw, Poland}
 \affil[ ]{\texttt{[jrad, rytter, walen]@mimuw.edu.pl}}
 \affil[3]{Samsung R\&D Warsaw, Warsaw, Poland}
\date{\empty}
\begin{document}
\maketitle

\begin{abstract}
We consider several types of internal queries, that is, questions about \emph{fragments} of a given text $T$
specified in constant space by their locations in $T$.
Our main result is an optimal data structure for Internal Pattern Matching (IPM) queries which,
given two fragments $x$ and $y$, ask for a representation of all fragments contained in $y$ and matching $x$ exactly.
This problem can be viewed as an internal version of the fundamental Exact Pattern Matching problem:
we are looking for occurrences of one substring of $T$ within another substring.
Our data structure answers IPM queries in time proportional to the quotient $|y|/|x|$ of fragments' lengths,
which is required due to the information content of the output.
If $T$ is a text of length $n$ over an integer alphabet of size $\sigma$,
then our data structure occupies $\Oh(n/ \log_\sigma n)$ machine words (that is, $\Oh(n\log \sigma)$ bits)
and admits an $\Oh(n/ \log_\sigma n)$-time construction algorithm.

We show the applicability of IPM queries for answering internal queries corresponding to other classic string processing problems.
Among others, we derive optimal data structures reporting the periods of a fragment and testing the cyclic equivalence of two fragments.
Since the publication of the conference version of this paper (SODA 2015), IPM queries
have found numerous further applications, following the path paved by the classic Longest Common Extension (LCE) queries of Landau and Vishkin (JCSS, 1988).
In particular, IPM queries have been implemented in grammar-compressed and dynamic settings
and, along with LCE queries, constitute elementary operations of the \pillar model, developed by Charalampopoulos, Kociumaka, and Wellnitz
(FOCS 2020) to design universal approximate pattern-matching algorithms.

All our algorithms are deterministic, whereas the data structure in the conference version of the paper only admits randomized construction in $\Oh(n)$ expected time.
To achieve this, we provide a novel construction of \emph{string synchronizing sets} of Kempa and Kociumaka (STOC 2019).
Our method, based on a new \emph{restricted} version of the recompression technique of Jeż (\emph{J.\ ACM}, 2016),
yields a hierarchy of $\Oh(\log n)$ string synchronizing sets covering the whole spectrum of fragments' lengths.
\end{abstract}

\section{Introduction}
In this paper, we consider \emph{internal queries}, which ask to solve instances of a certain string processing problem with input strings given as fragments of a fixed string $T$ represented by their endpoints. The task is to preprocess $T$, called the \emph{text}, into a data structure that efficiently answers certain types of internal queries.
In more practical applications, the text $T$ can be interpreted as a corpus containing all the data gathered for a given study
(such as the concatenation of genomes to be compared).
On the other hand, when internal queries arise during the execution of algorithms, then $T$ is typically the input to be processed.

The setting of internal queries does not include problems like text indexing, where the text is available in advance but the pattern is explicitly provided at query time.
This restricts the expressibility of internal queries but at the same time allows for greatly superior query times, which do not need to account for reading any strings.
Another difference is in the typical usage scenario: Data structures for indexing problems are primarily designed to be interactively queried by a user.
In other words, they are meant to be constructed relatively rarely, but they need to be stored for prolonged periods of time.
As a result, space usage (including the multiplicative constants) is heavily optimized, while the efficiency of construction procedures is of secondary importance.
On the other hand, internal queries often arise during bulk processing of textual data.
The relevant data structures are then built within the preprocessing phase of the enclosing algorithms, 
so the running times of the construction procedures are counted towards the overall processing time. 
In this setting, efficient construction is as significant as fast queries. 

The origins of internal queries in texts can be traced back to the invention of the \LCEQFull (\LCEQ),
originally introduced by Landau and Vishkin~\cite{DBLP:journals/jcss/LandauV88}.
In such a query, given two fragments of the text $T$, one is to compute the length of their longest common prefix.
The classic data structure for \LCEQ takes linear space, supports constant-time queries, and can be constructed in linear time~\cite{DBLP:journals/siamcomp/HarelT84,DBLP:journals/jcss/LandauV88,DBLP:journals/jacm/Farach-ColtonFM00}.
Although this trade-off is optimal for large alphabets, a text of length $n$ over an alphabet of size $\sigma$ can, in general, be stored in $\Oh(n\log \sigma)$ bits (which is $\Oh(n / \log_\sigma n)$ machine words) and read in $\Oh(n / \log_\sigma n)$ time.
Only recently, Kempa and Kociumaka~\cite{phd,Kempa2019} showed that such complexity and preprocessing time are sufficient for constant-time \LCEQ. 

Components with optimal deterministic construction and constant-time queries are particularly valued,
because algorithms can use them essentially for free,
i.e., with no negative effect on the overall complexity in the standard theoretical setting.
We develop data structures with such guarantees for internal versions of several natural problems in text processing, including Exact Pattern Matching.

We always denote the input text by $T$ and its length by $n$.
We also make a standard assumption (cf.\ \cite{DBLP:journals/jacm/Farach-ColtonFM00}) that the characters of $T$ are (or can be identified with) integers $[0\dd \sigma)$
\footnote{ For $i,j\in \integ$, we
denote $[i\dd j]=\{k \in \integ : i \le k \le j\}$, $[i\dd j)=\{k \in \integ : i \le k < j\}$, and $(i\dd j]={\{k \in \integ: i < k \le j\}}$. },
where the $\sigma = n^{\Oh(1)}$, that is, $T$ is over a polynomially-bounded \emph{integer alphabet}.
Our results are designed for the standard word RAM model with machine words of $\omega \ge \log n$ bits%
\footnote{Throughout this paper, $\log$ denotes the base-2 logarithm (any other base is explicitly provided in the subscript).}.
In this model, the text $T$ can be represented using $\Oh(n/\log_\sigma n)$ machine words, that is, $\Oh(n \log \sigma)$ bits, in a
so-called packed representation; see~\cite{DBLP:journals/tcs/Ben-KikiBBGGW14}.

Before we formally state our results, let us recall a few elementary notations related to strings;
a more thorough exposition is provided in \cref{chp:prelim}.
For an alphabet $\Sigma$ and an integer $m \in \Zz$, by $\Sigma^m$ we denote the set of length-$m$ strings over $\Sigma$.
For a string $w$, by $|w|$ we denote its length and by $w[0],\ldots,w[|w|-1]$ its subsequent characters.
We call $w[i \dd j)$ a \emph{fragment} of $w$; it can be interpreted as a range $[i \dd j)$ of positions in~$w$.
The fragment is called a \emph{prefix} if $i=0$ and a \emph{suffix} if $j=|w|$.
A fragment $w[i \dd j)$ corresponds to a substring of $w$, that is, a string composed of letters $w[i],\ldots,w[j-1]$.
Let us note that many fragments can correspond to (be occurrences of) the same substring.
The \emph{concatenation} of two strings $u,v$ is denoted $u\cdot v$ or~$uv$.
We identify strings of length 1 with the underlying characters, which lets us write $w=w[0]\cdots w[|w|-1]$.
The length of the longest common prefix of two strings $v,w$ is denoted by $\lcp(v,w)$.
For a string $w$ and an integer $k\in \Zz$, we denote the concatenation of $k$ copies of $w$ by $w^k$.
An integer $p$ is a \emph{period} of a string $w$ if $p \in [1 \dd |w|]$ and $w[i]=w[i+p]$ for all $i \in [0 \dd |w|-p)$.
The smallest period of $w$ is denoted as $\per(w)$.

\subsection{Internal Pattern Matching (IPM) Queries}
Our main contribution here is a data structure
for the internal version of the Exact Pattern Matching problem,
which asks for the occurrences of one fragment within another fragment.

\defdsproblem{\textsc{Internal Pattern Matching (IPM) Queries}}{Given fragments $x$ and $y$ of 
 $T$ satisfying $|y|< 2|x|$, report the fragments matching $x$ and contained in $y$ (represented as an arithmetic progression of starting positions).}
Let us note that if $|y|\ge2|x|$, then one can  ask $\Oh(|y|/|x|)$ \IPM 
(for the occurrences of $x$ within fragments of length $2|x|-1$ contained in $y$, with overlaps of at least $|x|-1$ characters between the subsequent fragments) and output $\Oh(|y|/|x|)$ arithmetic progressions.
We impose the restriction $|y|<2|x|$ so that the output can be represented in constant space using the following folklore fact:

\begin{fact}[Breslauer and Galil~\cite{DBLP:journals/algorithmica/BreslauerG95}, Plandowski and Rytter~\cite{DBLP:conf/icalp/PlandowskiR98}]\label{fct:single}
  Let $x$, $y$ be strings satisfying $|y|< 2|x|$.
  The set of starting positions of the occurrences of $x$ in $y$ forms a single arithmetic progression.
\end{fact}

We design an optimal data structure for \IPM.

\begin{restatable}[Main result]{theorem}{thmipm}\label{thm:ipm}
  For every text $T\in [0\dd \sigma)^n$, there exists a data structure of size $\Oh(n/\log_\sigma n)$ which answers \IPM in $\Oh(1)$ time.  The data structure can be constructed in $\Oh(n / \log_\sigma n)$ time given the packed representation of $T$.
\end{restatable}

Linear-time solutions of the exact pattern matching problem~\cite{morris1970linear,DBLP:journals/siamcomp/KnuthMP77,DBLP:journals/cacm/BoyerM77} are among the foundational results of string algorithms.
Although this alone can motivate the study of \IPM, the significance of our result is primarily due to a growing collection of applications.
In this work, we show that several further internal queries can be answered efficiently using \cref{thm:ipm}; these are described in detail in \cref{subsec:app}.
Other applications of \IPM, developed after the conference version of the paper~\cite{DBLP:conf/soda/KociumakaRRW15}, are listed in~\cref{sec:alien}.

\subsection{Applications of IPM Queries}\label{subsec:app}
\paragraph{\underline{Period Queries}.}
One of the central notions of combinatorics on words is that of a \emph{period} of a string.
Already Knuth, Morris, and Pratt~\cite{morris1970linear,DBLP:journals/siamcomp/KnuthMP77}
provided a linear-time procedure listing all periods of a given string (as a side result of their linear-time pattern-matching algorithm).

The sorted sequence of periods of a length-$m$ string can be cut into $\Oh(\log m)$ arithmetic progressions~\cite{DBLP:journals/siamcomp/KnuthMP77}.
A complete characterization of the possible families of periods~\cite{DBLP:journals/jct/GuibasO81} further shows
that the size of such a representation ($\Oh(\log^2 m)$ bits) is asymptotically tight.
Hence, we adopt it in the internal version of the problem of finding all periods of a string,
formally specified below.

\defdsproblem{\PQ}{Given a fragment $x$ of $T$, report all periods of $x$ (represented by disjoint arithmetic progressions).}

We have introduced \PQ in~\cite{DBLP:conf/spire/KociumakaRRW12},
presenting two solutions. 
The first data structure takes $\Oh(n\log n)$ space and answers \PQ in the optimal $\Oh(\log |x|)$ time
after $\Oh(n\log n)$-time randomized construction.
The other one is based on orthogonal range searching;
its size is $\Oh(n+S_\rsucc(n))$ and the query time is $\Oh(Q_\rsucc(n)\cdot \log |x|)$,
where $S_{\rsucc}(n)$ and $Q_{\rsucc}(n)$
are analogous values for data structures answering range successor queries; see~\cref{sec:GSC} for a definition.
The state-of-the-art trade-offs are $S_{\rsucc}(n)=\Oh(n)$ and $Q_{\rsucc}(n)=\Oh(\log^{\eps} n)$ for every constant $\eps>0$~\cite{DBLP:conf/swat/NekrichN12},
$S_{\rsucc}(n)=\Oh(n \log \log n)$ and $Q_{\rsucc}(n)=\Oh(\log \log n)$~\cite{DBLP:journals/ipl/Zhou16},
as well as $S_{\rsucc}(n)=\Oh(n^{1+\eps})$ and $Q_{\rsucc}(n)=\Oh(1)$ for every constant $\eps>0$~\cite{DBLP:journals/tcs/CrochemoreIKRTW12}.
The first two of these data structures can be constructed in time $C_{\rsucc}(n)=\Oh(n\sqrt{\log n})$~\cite{DBLP:conf/soda/BelazzouguiP16,Gao2020},
and the third one in time $C_{\rsucc}(n)=\Oh(n^{1+\eps})$~\cite{DBLP:journals/tcs/CrochemoreIKRTW12}.

In this paper, we develop an asymptotically optimal data structure answering \PQ.

\begin{restatable}{theorem}{thmapppq}\label{thm:app-pq}
For every text $T\in [0\dd \sigma)^n$, there exists a data structure of size $\Oh(n/\log_\sigma n)$ which answers \PQ in $\Oh(\log |x|)$ time.
The data structure can be constructed in $\Oh(n/\log_\sigma n)$ time given the packed representation of $T$.
\end{restatable}

Our query algorithm is based on the intrinsic relation between periods and \emph{borders} (i.e., substrings being both prefixes and suffixes) of a string.
In fact, to answer each \textsc{Period Query}, it combines the results of the following \BQ
used with $x=y$ to determine the borders of~$x$.

\defdsproblem{\BQ}{Given fragments $x$ and $y$ of $T$ and a positive integer $d$, report all suffixes of $y$ of length in $[d \dd 2d)$
that also occur as prefixes of $x$ (represented as an arithmetic progression of their lengths).}

In other words, we prove the following auxiliary result.

\begin{restatable}{theorem}{thmappbq}\label{thm:app-bq}
For every text $T\in [0\dd\sigma)^n$, there exists a data structure of size $\Oh(n/\log_\sigma n)$ which answers \BQ in $\Oh(1)$ time. The data structure can be constructed in $\Oh(n/\log_\sigma n)$ time given the packed representation of $T$.
\end{restatable}

In many scenarios, very long periods ($p=m-o(m)$ for a string of length $m$) are irrelevant.
The remaining periods correspond to borders of length $\Theta(m)$ and thus can be retrieved with just a constant number of \BQ.
The case of $p \le \tfrac12 m$ is especially important
since fragments $x$ with periods not exceeding $\tfrac12|x|$, called \emph{periodic} fragments, can be uniquely extended to \emph{maximal repetitions}, also known as \emph{runs} (see~\cref{chp:prelim}).
We denote the unique run extending a periodic fragment $x$ by $\run(x)$.
If $x$ is not periodic, we leave $\run(x)$ undefined, which we denote as $\run(x)=\bot$.

\defdsproblem{\PEQ}{Given a fragment $x$ of $T$, compute the run $\run(x)$ extending $x$.}

\noindent
\cref{thm:app-bq}, along with the optimal data structure for \LCEQ~\cite{Kempa2019}, imply the following result.

\begin{restatable}{theorem}{thmrun}\label{thm:run}
  For every text $T\in [0\dd\sigma)^n$, there exists a data structure of size $\Oh(n/\log_\sigma n)$ which answers \PEQ in $\Oh(1)$ time. The data structure can be constructed in $\Oh(n/\log_\sigma n)$ time given the packed representation of $T$.
\end{restatable}

Bannai et al.~\cite{DBLP:journals/siamcomp/BannaiIINTT17} presented an alternative implementation of \PEQ with $\Oh(n)$-time construction and $\Oh(1)$-time queries.
The underlying special case of \PQ also generalizes \textsc{Primitivity Queries}
(asking if a fragment $x$ is \emph{primitive}, i.e., whether it does not match $u^k$ for any string $u$ and integer $k\ge 2$),
earlier considered by Crochemore et al.~\cite{DBLP:journals/tcs/CrochemoreIKRRW14},
who developed a data structure of size $\Oh(n+S_\rsucc(n))$ with  $\Oh(Q_{\rsucc}(n))$-time query algorithm.

\paragraph{\underline{Cyclic Equivalence Queries}.}
We consider a \emph{rotation} operation which moves the last character of a given string $u$ to the front.
Formally, if $|u|=m$, then $\Rot(u)=u[m-1] u[0] \cdots u[m-2]$.
In general, for $j\in \integ$, we define the $\Rot^j$ function as the $j$th function power of $\Rot$.
Two strings $u$ and $v$ are called cyclically equivalent if $u=\Rot^j(v)$ for some integer $j$.
A classic linear-time algorithm for checking cyclic equivalence of strings $u$ and $v$ consists in pattern matching for $u$ in $v^2$~\cite{DBLP:journals/ipl/Manacher76}; there is also a simple linear-time constant-space algorithm~\cite{Jewels}.
We consider the following queries.

\defdsproblem{\FC}{
Given two fragments $x$ and $y$ of $T$, return $\{j\in \integ : \Rot^j(x)=y\}$ (represented as an arithmetic progression).
}

\begin{restatable}{theorem}{thmfc}\label{thm:app-fc}
  For every text $T\in [0\dd \sigma)^n$, there exists a data structure of size $\Oh(n/\log_\sigma n)$ which answers \FC in $\Oh(1)$ time. The data structure can be constructed in $\Oh(n/\log_\sigma n)$ time given the packed representation of $T$.
\end{restatable}

\paragraph{\underline{Bounded LCP Queries}.}
Keller et al.~\cite{DBLP:journals/tcs/KellerKFL14} used the following queries in the solution to their \GSC problem:
\defdsproblem{\BLCPFull}{Given two fragments $x$ and $y$ of $T$, find the longest prefix $p$ of $x$ which  occurs in $y$.}

Our result for \IPM can be combined with the techniques of~\cite{DBLP:journals/tcs/KellerKFL14} in a more efficient implementation of \BLCP.
Compared to the original version, the resulting data structure,
specified below, has a $\log \log |p|$ factor instead of a $\log |p|$ factor in the query time.

\begin{restatable}{theorem}{thmblcp}\label{thm:blcp}
For every text $T$ of length $n$ over an alphabet $[0 \dd n^{\Oh(1)})$, there exists a data structure of size $\Oh(n+S_{\rsucc}(n))$
which answers \BLCP in $\Oh(Q_{\rsucc}(n)\log\log |p|)$ time.
The data structure can be constructed in $\Oh(n+C_{\rsucc}(n))$ time.
\end{restatable}

In \cref{chp:app}, we formally define \GSC and discuss how \cref{thm:blcp} lets us improve and extend the results of~\cite{DBLP:journals/tcs/KellerKFL14}.

\paragraph{\underline{Earlier Versions of Our Results}.}
Weaker versions of \cref{thm:ipm,thm:app-pq,thm:app-bq,thm:run,thm:app-fc} (with $\Oh(n)$ space and construction time) and \cref{thm:blcp} were published in the conference version of the paper~\cite{DBLP:conf/soda/KociumakaRRW15} and the Ph.D.\ thesis of the first author~\cite{phd}. Moreover, the construction algorithms provided in~\cite{DBLP:conf/soda/KociumakaRRW15} were Las-Vegas randomized, with linear bounds on the \emph{expected} running time only.

\subsection{Further Applications of IPM Queries}\label{sec:alien}
Since the publication of the conference version of this work~\cite{DBLP:conf/soda/KociumakaRRW15}, 
the list of internal queries, predominantly implemented using \IPM, has grown to include
shortest unique substrings~\cite{DBLP:journals/algorithms/Abedin0PT20},
longest common substring~\cite{DBLP:journals/corr/abs-1804-08731},
suffix rank and selection~\cite{WaveletSuffixTree,DBLP:conf/cpm/Kociumaka16},
BWT substring compression~\cite{WaveletSuffixTree},
shortest absent string~\cite{DBLP:journals/tcs/BadkobehCKP22},
dictionary matching~\cite{DBLP:journals/algorithmica/Charalampopoulos21},
string covers~\cite{DBLP:conf/spire/CrochemoreIRRSW20},
masked prefix sums~\cite{DBLP:conf/spire/Das0KMW22},
circular pattern matching~\cite{CPM2023}, and
longest palindrome~\cite{DBLP:conf/walcom/MitaniMSH23}.

Furthermore, \IPM have been used in efficient algorithms for many problems such
as approximate pattern matching~\cite{DBLP:conf/focs/Charalampopoulos20,DBLP:conf/focs/Charalampopoulos22},
approximate circular pattern matching~\cite{DBLP:journals/jcss/Charalampopoulos21,DBLP:conf/esa/Charalampopoulos22},
RNA folding~\cite{DBLP:conf/icalp/0001KS22},
and computing string covers~\cite{DBLP:conf/esa/RadoszewskiS20}.
Additionally, \IPM have found further indirect applications that are based on the internal queries from \cref{subsec:app}:
\PEQ have been applied for
approximate period recovery~\cite{DBLP:journals/tcs/AmirALS18,DBLP:journals/algorithmica/AmirBKLS22},
dynamic repetition detection~\cite{DBLP:conf/esa/AmirBCK19},
identifying two-dimensional maximal repetitions~\cite{DBLP:conf/esa/AmirLMS18},
enumeration of distinct substrings~\cite{DBLP:conf/spire/Charalampopoulos20},
and pattern matching with variables~\cite{DBLP:journals/toct/FernauMMS20,DBLP:conf/spire/KosolobovMN17},
whereas
\BQ have been applied for detecting gapped repeats and subrepetitions~\cite{DBLP:journals/jda/KolpakovPPK17,DBLP:journals/mst/GawrychowskiIIK18},
in the dynamic longest common substring problem~\cite{DBLP:journals/corr/abs-1804-08731},
and for computing the longest unbordered substring~\cite{DBLP:journals/corr/abs-1805-09924}.

The fundamental role of \IPM as a building block for the design of string algorithms
motivated their efficient implementation in the compressed and dynamic settings~\cite{DBLP:conf/focs/Charalampopoulos20}.
In particular, the \pillar model, introduced in~\cite{DBLP:conf/focs/Charalampopoulos20}
with the aim of unifying approximate pattern-matching algorithms across different settings, includes \IPM as one of the primitives.
It also includes \LCEQ, defined as $\LCE(i,i')=\lcp(w[i\dd |w|),w[i'\dd |w|))$, and \LCEQ on reversed strings,
as well as the following basic primitives:
\begin{itemize}
  \item $\mathtt{Extract}(w,\ell,r)$: Given a string $w$ and integers $0\le \ell \le r \le |w|$, retrieve the string $w[\ell\dd r)$.
  \item $\mathtt{Access}(w,i)$: Given a string $w$ and a position $i\in [0\dd |w|)$, retrieve the character $w[i]$.
  \item $\mathtt{Length}(w)$: Retrieve the length $|w|$ of the string $w$.
\end{itemize}
The argument strings of \pillar primitives are represented as fragments of one or more strings in a given text collection $\mathcal{X}$.

Using an earlier version of \cref{thm:ipm}, providing $\Oh(n)$-time deterministic construction~\cite{phd}, it has been observed \cite[Theorem 7.2]{DBLP:conf/focs/Charalampopoulos20} that, after $\Oh(n)$-time preprocessing of a collection $\mathcal{X}$ of strings of total length $n$,
each \pillar operation can be performed in $\Oh(1)$ time.
We improve upon this result using \cref{thm:ipm} to implement \IPM and
the following implementation of \LCEQ.
\begin{proposition}[Kempa and Kociumaka~\cite{Kempa2019}]\label{prop:lce}
For every text $T$ of length $n$ over alphabet $[0\dd \sigma)$, there exists a data structure of size $\Oh(n/\log_\sigma n)$ which answers \LCEQ in $\Oh(1)$ time. The data structure can be constructed in $\Oh(n/\log_\sigma n)$ time given the packed representation of $T$.
\end{proposition}

We apply elementary bit-wise operations for $\mathtt{Access}$ queries
(the $\mathtt{Extract}$ and $\mathtt{Length}$ queries are straightforward since $w=X[\ell\dd r)$ is represented by a pointer to $X\in \mathcal{X}$ and the two endpoints $\ell$ and $r$).
This gives the following result.

\begin{theorem}\label{thm:pillar}
A collection $\mathcal{X}$ of strings of total length $n$ over alphabet $[0\dd \sigma)$
can be preprocessed in $\Oh(|\mathcal{X}|+n/\log_\sigma n)$ time so that each \pillar operation can be performed in $\Oh(1)$ time.
\end{theorem}

Since the approximate pattern-matching algorithms of \cite{DBLP:conf/focs/Charalampopoulos20,DBLP:conf/focs/Charalampopoulos22,DBLP:conf/esa/Charalampopoulos22} are implemented in the \pillar model,
\cref{thm:pillar} immediately improves the running times for strings over small alphabets.
In particular, given a pattern $p\in [0\dd \sigma)^m$, a text $t\in [0\dd \sigma)^n$, and a threshold $k$, the occurrences of $p$ in $t$ with at most $k$ mismatches (substitutions) can be reported in time $\Oh(n/\log_\sigma n + (n/m)\cdot \min(k\sqrt{m\log m},k^2))$,
whereas the occurrences with at most $k$ edits (insertions, deletions, and substitutions) can be reported in time $\Oh(n/\log_\sigma n + (n/m)\cdot k^3\sqrt{\log m\log k})$. In both cases, the improvement is that the $\Oh(n)$ term, dominating the complexity of previous state-of-the-art solutions~\cite{DBLP:conf/stoc/ChanGKKP20,DBLP:conf/focs/Charalampopoulos22} for small values of $k$, is replaced by $\Oh(n/\log_\sigma n)$. 
A similar phenomenon applies to circular pattern matching with at most $k$ mismatches~\cite{DBLP:conf/esa/Charalampopoulos22}, where we achieve $\Oh(n/\log_\sigma n + (n/m)\cdot k^3 \log \log k)$ time for the reporting version of the problem and $\Oh(n/\log_\sigma n + (n/m)\cdot k^2 \log k/ \log \log k)$ for the decision version.

\subsection{Related Queries}\label{sec:related}
Internal queries are not the only problems in the literature involving fragments of a static text.
Other variants include
Interval LCP queries~\cite{DBLP:conf/soda/CormodeM05,DBLP:journals/tcs/KellerKFL14},
Range LCP queries~\cite{DBLP:conf/cocoon/Abedin0HNSST18,RangeLCP,DBLP:conf/spire/AmirLT15,DBLP:conf/spire/PatilST13},
Substring Hashing queries~\cite{DBLP:conf/cpm/FarachM96,DBLP:conf/esa/Gawrychowski11,DBLP:conf/esa/GawrychowskiLN14},
fragmented Pattern Matching queries~\cite{DBLP:journals/talg/AmirLLS07,DBLP:conf/esa/GawrychowskiLN14},
and Cross-Document Pattern Matching queries~\cite{DBLP:journals/jda/KopelowitzKNS14}, to mention a few.
In particular, Interval LCP queries can be used to solve the decision version of \IPM.
Keller et al.~\cite{DBLP:journals/tcs/KellerKFL14}
showed how to answer the decision version of \IPM in $\Oh(Q_{\rsucc}(n))$ time
using a data structure of size $\Oh(n+S_{\rsucc}(n))$ that can be constructed in $\Oh(n+C_{\rsucc}(n))$ time.
The aforementioned query time is valid for arbitrary lengths $|x|$ and $|y|$,
so the efficiency of this data structure is incomparable to our~\cref{thm:ipm}.

\subsection{Technical Contributions}\label{sec:techniques}
Below, we briefly introduce the most important technical contributions of our work.
We start with a high-level overview of our \IPM data structure for large alphabets,
that is, $\sigma = n^{\Theta(1)}$.
In this setting, the space and construction time bounds of \cref{thm:ipm} simplify to $\Oh(n)$.

\paragraph{String Matching by Deterministic Sampling}\label{sec:samp}
The idea of (deterministic) sampling is a classic technique originally developed for parallel string matching algorithms~\cite{DBLP:journals/siamcomp/Vishkin91} and later applied in other contexts such as quantum string matching~\cite{DBLP:journals/jda/HariharanV03}.
Algorithms using this approach to find the occurrences of a pattern $x$ in a text $y$ follow a three-phase scheme.
In the preprocessing phase, a subset of characters of $x$, called the \emph{sample}, is determined.
Then, in the filtering phase, the algorithm selects all fragments $y[i\dd i+|x|)$ of $y$ that match the sample.
Finally, in the verification phase, the algorithm checks whether each candidate $y[i\dd i+|x|)$ matches the whole pattern $x$.
The efficiency of this scheme hinges on two properties of the chosen samples: (1) they must be simple enough to support efficient filtering,
and (2) they must carry enough information to leave few candidates for the verification phase.
For example, Vishkin~\cite{DBLP:journals/siamcomp/Vishkin91} chooses a sample of size $\Oh(\log |x|)$ so that the starting positions of the candidates surviving the filtering phase form $\Oh(|y|/|x|)$ arithmetic progressions whose difference is the smallest period of~$x$.

Our data structure answering \IPM, described in \cref{chp:ipm}, uses contiguous samples, that is, the sample of a pattern $x$ can be interpreted as a fragment of $T$ contained in $x$. Moreover, we minimize the total number of samples across all the fragments of $T$ rather than the size of each sample,
and we aim to choose the samples \emph{consistently}, so that any fragment matching a sample is a sample itself.
These constraints are infeasible for highly periodic ($\HP$) patterns, whose length is much larger than the smallest period (we set $\frac13|x|$ as the cut-off point for the period), and such patterns are handled in \cref{chp:per} using different techniques outlined later on.
As for non-highly-periodic ($\N$) patterns, we choose $\Oh(n)$ samples in total so that the smallest period of the sample of $x$ is $\Omega(|x|)$.
The small number of samples allows storing them explicitly, whereas the large period guarantees that the sample of $x$ has $\Oh(1)$ occurrences within any fragment $y$ of length $|y|<2|x|$ (due to \cref{fct:single}).
Thus, at query time, our data structure uses the precomputed set of samples (stored in appropriate deterministic dictionaries~\cite{DBLP:conf/icalp/Ruzic08,DBLP:conf/focs/PatrascuT14}) to identify the sample of $x$ and its occurrences in $y$.
Then, we use \LCEQ to test which of these occurrences extend to occurrences of~$x$.

\paragraph{Sample Selection: String Synchronizing Sets Hierarchy and Restricted Recompression}
The challenge of implementing the strategy outlined above is to consistently pick $\Oh(n)$ samples among $\Theta(n^2)$ fragments of $T$.
The natural first step is to restrict the selection to $\Theta(n\log n)$ fragments whose lengths form a geometric series,
but any further reduction in the number of samples requires non-trivial symmetry breaking.

In the conference version of this paper~\cite{DBLP:conf/soda/KociumakaRRW15}, we employed a strategy reminiscent of the \emph{minimizers} technique popular in bioinformatics~\cite{DBLP:journals/bioinformatics/RobertsHHMY04,Wood_2014,DBLP:journals/bioinformatics/DeorowiczKGD15,DBLP:journals/bioinformatics/Li18}
and known under different names in many other applications~\cite{DBLP:conf/sigmod/SchleimerWA03,DBLP:conf/icdm/SorokinaGWG06,DBLP:journals/ce/ButakovS09,DBLP:journals/tissec/PonecGWB10}.
In that approach, candidate fragments are consistently assigned uniformly random weights, and the sample of $x$ is defined as the minimum-weight fragment of a certain length (such as $2^{\floor{\log |x|}-1}$) contained in $x$. 
With minor adaptations (necessary to avoid highly periodic samples), this scheme yields $\Oh(n/2^{k})$ length-$2^k$ samples in expectation for every $k\in [0\dd \floor{\log n}]$.
Subsequently, our sample selection algorithm was adapted for answering \LCEQ~\cite{DBLP:conf/soda/BirenzwigeGP20,phd} and for the Burrows--Wheeler transform construction~\cite{Kempa2019}. The latter paper contributed the following clean notion, which has since been applied in many further contexts (see e.g.~\cite{DBLP:conf/focs/KempaK20,DBLP:conf/esa/Charalampopoulos21,DBLP:conf/stoc/KempaK22,DBLP:conf/soda/JinN23}).

\begin{restatable}[\textbf{Synchronizing Set}; Kempa and Kociumaka~\cite{Kempa2019}]{definition}{defsss}\label{def:sss}
  Let $T$ be a string of length $n$ and let $\tau \in [1\dd \floor{\frac{n}{2}}]$.
  A set $\S\sub [0\dd n-2\tau]$ is a
  \emph{$\tau$-synchronizing set} of $T$ if it satisfies the following conditions:
  \begin{description}
\item[Consistency:] For all $i,j\in [0\dd n-2\tau]$, if $i\in \S$ and the fragments $T[i\dd i+2\tau)$ and $T[j\dd j+2\tau)$ match, then $j\in \S$.
\item[Density:]
For every $i\in [0\dd n-3\tau+1]$, we have $[i\dd i+\tau)\cap \S = \emptyset\;\Longleftrightarrow\; \per(T[i\dd i+3\tau-1)) \le \tfrac13\tau$.
\end{description}
We say that the elements of a fixed $\tau$-synchronizing set $\S$ are $\tau$-\emph{synchronizing positions} and the fragments $T[s\dd s+2\tau)$ for $s\in \S$ are $\tau$-\emph{synchronizing fragments}; the set of $\tau$-synchronizing fragments is denoted~$\SF$.
\end{restatable}

Kempa and Kociumaka~\cite{Kempa2019} obtained the following result building upon our original sample-selection algorithm~\cite{DBLP:conf/soda/KociumakaRRW15}
and its derandomized version presented in~\cite{phd}.
\begin{restatable}[{\cite[Proposition 8.10 and Theorem 8.11]{Kempa2019}}]{proposition}{prpsynch}\label{prp:synch}
For every text $T \in [0\dd \sigma)^n$ with $\sigma=n^{\Oh(1)}$ and $\tau\in [1\dd \floor{\frac{n}{2}}]$, there exists a $\tau$-synchronizing set of size $\Oh(n/\tau)$ that can be constructed in $\Oh(n)$ time. Moreover, if $\tau \leq \frac15 \log_\sigma n$ and $T$ is given in a packed representation, then the construction time can be improved to $\Oh(n/\tau)$.
\end{restatable}
It is not hard to argue (see \cref{lem:repr}) that one can pick as samples all the length-$1$ fragments as 
well as all the $2^{k-1}$-synchronizing fragments for all $k\in [1\dd \floor{\log n}]$ and some fixed $2^{k-1}$-synchronizing sets.
Unfortunately, it takes $\Oh(n \log n)$ time to construct these synchronizing sets using the algorithm of \cref{prp:synch}.
Consequently, the deterministic version of \cref{thm:ipm} published in~\cite{phd} follows a sophisticated approach relying on just two string synchronizing sets, constructed in $\Oh(n)$ time each.
In this paper, we apply a more natural strategy and show that the entire hierarchy of  $2^{k-1}$-synchronizing sets
can be constructed in deterministic $\Oh(n)$ time. Formally, our new result reads as follows:

\begin{restatable}[\bf Construction of Synchronizing Sets Hierarchy]{theorem}{thmsss}\label{thm:sss}
  After $\Oh(n)$-time preprocessing of a text $T \in [0\dd \sigma)^n$ with $\sigma=n^{\Oh(1)}$,
  given $\tau\in [1\dd \floor{\frac{n}{2}}]$, one can construct
  in $\Oh(\frac{n}{\tau})$ time a $\tau$-synchronizing set $\S$ of size  $|\S|< \tfrac{70n}{\tau}$.
  Moreover, no $\tau$-synchronizing fragment induced by $\S$ has period $p\le \frac{\tau}{3}$.
\end{restatable}

As opposed to~\cite{Kempa2019}, our construction does not use minimizers, but we rely on 
locally consistent parsing, a concept dating back to mid-1990s~\cite{FirstConsistentParsing,SymmetryBreakingST,DBLP:conf/focs/SahinalpV96,DBLP:journals/algorithmica/MehlhornSU97}. Specifically, we adapt the \emph{recompression} technique of Jeż~\cite{DBLP:journals/talg/Jez15,DBLP:journals/jacm/Jez16},
which results in a simple and efficient parsing scheme.

All locally consistent parsing algorithms construct a sequence of factorizations of $T$, where
level-$0$ phrases consist of single letters of $T$, the only level-$q$ phrase (for some $q=\Oh(\log n)$) consists of the entire text $T$,
and, for each $k \in [1\dd q]$, the level-$k$ phrases are concatenations of level-$(k-1)$ phrases.
In this context, local consistency means that whether or not two subsequent level-$(k-1)$ phrases are grouped into the same level-$k$ phrase 
is a local decision that depends only on a few neighboring  level-$(k-1)$ phrases.
Unfortunately, the lengths of level-$(k-1)$ phrases can vary significantly between regions of the text, and thus it is not possible to make sure that the symmetry-breaking decisions are made based on fixed-size contexts, as required by the consistency property of string synchronizing sets (\cref{def:sss}).

Consequently, we alter the original recompression algorithm and introduce a small but consequential restriction: phrases that are deemed too long for their level are never merged with their neighbors. Although this trick does not eliminate very long phrases, it lets us quantify local consistency in terms of fixed-size contexts. In the following lemma, proved in \cref{sec:recompression}, the parameter $\alpha_k$ can be interpreted as the level-$k$ context size.
\begin{restatable}{lemma}{lemcons}\label{lem:cons}
Let $\B_k$ denote the set of starting positions of level-$k$ phrases (except for the leftmost one). 
Then, for every $i,j\in [\alpha_k\dd n-\alpha_k]$, if $i\in \B_k$ and the fragments $T[i-\alpha_k\dd i+\alpha_k)$ and $T[j-\alpha_k\dd j+\alpha_k)$ match, then $j\in \B_k$.
\end{restatable}

Restricted recompression has already been used in~\cite{DBLP:journals/tit/KociumakaNP23} to prove that every text admits a run-length straight-line program of a particular size.
Restricted variants of other locally consistent parsing schemes have been used in~\cite{DBLP:conf/latin/KociumakaNO22} and \cite{DBLP:conf/stoc/KempaK22} to derive efficient compressed and dynamic text indexes. In both cases, the counterpart of \cref{lem:cons} is what differentiates restricted variants from the original implementations of the locally consistent parsing schemes.

In \cref{chp:lce}, we provide further intuition and explain how to define the synchronizing sets in terms of the sets $\B_k$ of phrase boundaries.
\cref{sec:recompression} provides the formal definition and analysis of the restricted recompression technique.
The proof of \cref{thm:sss} is given in \cref{sec:correctness}.

\paragraph{\IPM in Highly Periodic Patterns}
As mentioned above, our deterministic sampling strategy applies only to non-highly-periodic ($\N$) patterns.
In \cref{chp:per}, we develop a complementary data structure that is responsible for handling highly-periodic ($\HP$) patterns.

For this, we build upon the fact that the structure of highly-periodic fragments can be encoded by \emph{maximal repetitions} (also known as runs)~\cite{DBLP:journals/dam/Main89,DBLP:conf/focs/KolpakovK99,DBLP:journals/siamcomp/BannaiIINTT17}.
In particular, we rely on the notion of \emph{compatibility}~\cite{DBLP:journals/tcs/CrochemoreIKRRW14}: two strings are compatible whenever their 
string periods are cyclically equivalent.
If $x$ is periodic, every matching fragment $x'$ can be extended to a run compatible with $x$.
Moreover, due to the assumption $|y|<2|x|$, if $x'$ is contained within $y$, then the run must contain the middle position of~$y$.
Consequently, we develop a simple new data structure that, for any position in $T$, in $\Oh(1)$ time  
lists all runs with a certain minimum length (such as $|x|$) and maximum period (such as $\frac13|x|$). 
Next, we use the techniques of~\cite{DBLP:journals/tcs/CrochemoreIKRRW14} to eliminate runs incompatible with $x$ 
and find the occurrences of $x$ within each compatible run.

\paragraph{\IPM in Texts over Small Alphabets}
The aforementioned techniques allow answering \IPM using $\Oh(n)$ space and $\Oh(n)$ construction time.
In the case of small alphabets, that is, $\sigma = n^{o(1)}$, both complexities can be improved to $\Oh(n/\log_\sigma n)$.
For this, in \cref{sec:packed}, we reduce \IPM in $T\in [0\dd \sigma)^n$ to \IPM in a text $T'$ of length $\Oh(n/\log_\sigma n)$
over an alphabet of size $n^{\Theta(1)}$. 
We use a $\tau$-synchronizing set $\S$, constructed using \cref{prp:synch} for appropriate $\tau=\Theta(\log_\sigma n)$,
to partition $T$ into blocks, and we encode these blocks in $T'$.
By the density property of $\S$, each block has length at most $\tau$ or shortest period at most $\frac13\tau$.
Moreover, the consistency property implies that the block boundaries within $x$ match the block boundaries within any occurrence of $x$.
Consequently, we retrieve the fragments $\Phi(x)$ and $\Phi(y)$ of $T'$, encoding the blocks contained within $x$ and $y$, respectively,
identify the occurrences of $\Phi(x)$ in $\Phi(y)$, and apply \LCEQ (\cref{prop:lce}) through \cref{lem:peralg} below to check which of them correspond to occurrences of $x$ in $y$.
A major challenge in implementing this strategy is that $\Phi(y)$ can be much longer than $\Phi(x)$ despite $|y|<2|x|$.
In particular, $\Phi(x)$ can be empty if $|x|=\Oh(\tau)$ (in that case, we precompute the answers) or if $\per(x)\le \frac13\tau$ (in that case, we reuse the techniques for $\HP$ patterns). In the remaining cases, we show that it suffices to trim $\Phi(y)$ to a carefully defined fragment of length $\Oh(|\Phi(x)|)$.

\paragraph{Applications of \IPM}
\cref{chp:app} demonstrates the usage of \IPM for answering \BQ, \FC, and \BLCP; it also covers applications of all these queries.
The common feature of our solutions is that we make a constant number of \IPM to list candidates (suffixes for \BQ, rotations for \FC, and previous occurrences for \BLCP), and then verify each of them using \LCEQ.
In the non-periodic case, verifying $\Oh(1)$ candidates does not constitute any significant challenge. 
Otherwise, we exploit the structural insight of \cref{fct:single}: the sequence $\mathbf{p}=(p_i)_{i=0}^{k-1}$
of reported starting positions forms a \emph{periodic progression}, meaning that $T[p_0\dd p_1)=\cdots = T[p_{k-2}\dd p_{k-1})$,
and thus, for any position $q$, the answers to queries $\LCE(p_i,q)$ can be obtained in bulk using just $\Oh(1)$ $\LCE$ queries. 
Formally, our main auxiliary result reads as follows:

\begin{restatable}{lemma}{peralg}\label{lem:peralg}
  Consider a text $T$ equipped with a data structure answering \LCEQ in $\Oh(1)$ time.
  Given a fragment $v$ of $T$ and a collection of fragments $u_i=T[p_i\dd r)$ represented with a periodic progression $\mathbf{p}=(p_i)_{i=0}^{k-1}$ and a position $r \ge p_{k-1}$,
  the following queries can be answered in $\Oh(1)$ time:
  \begin{enumerate}[label=(\alph*)]
    \item\label{it:perpref} Report indices $i\in [0\dd k)$ such that $u_i$ matches a prefix of $v$, represented as a subinterval of $[0\dd k)$.
     \item\label{it:permax} Report indices $i\in [0\dd k)$ maximizing $\lcp(u_i,v)$, represented as a subinterval of $[0\dd k)$.
  \end{enumerate}
\end{restatable}

\section{Strings and Periodicity}\label{chp:prelim}

We consider \emph{strings} over an \emph{alphabet} $\Sigma$, i.e., finite sequences of \emph{characters} from the set~$\Sigma$.
The set of all strings over $\Sigma$ is denoted by $\Sigma^*$, and $\Sigma^+$ is the set of non-empty strings over $\Sigma$.
Occasionally, we also work with the family $\Sigma^\infty$ of \emph{infinite strings} indexed by non-negative integers.

\subsection{Basic Notations on Strings}\label{sec:comb}
Let us recall that, for a string $w$, by $|w|$ we denote its length, and by $w[0],\ldots,w[|w|-1]$ its subsequent characters.
By $\varepsilon$ we denote the empty string.
By $\aalph(w)$ we denote the set $\{w[0],\ldots,w[|w|-1]\}$.
The string $w^R = w[|w|-1]\cdots w[0]$ is called the \emph{reverse} of $w$.
A string $u$ is called a \emph{substring} of $w$ if $u=w[i] \cdots w[j-1]$
for some $i,j\in [0\dd |w|]$ with $i\le j$.
In this case, we say that $u$ \emph{occurs} in $w$ at position~$i$, and we denote by $w[i\dd j)=w[i\dd j-1]$
the \emph{occurrence} of $u$ at position $i$.
By $\Occ(u,w)$ we denote the set of starting positions of all occurrences of $u$ in $w$.

We call $w[i\dd j)$ a \emph{fragment} of $w$; formally, a fragment can be interpreted as a tuple
consisting of (a pointer to) the string $w$ and the two endpoints $i,j\in [0\dd |w|]$ with $i\le j$.
If a fragment $w[i\dd j)$ is an occurrence of a string $u$, then we write $u\cong w[i\dd j)$ and say that $w[i\dd j)$ \emph{matches} $u$.
Similarly, if $w[i\dd j)$ and $w[i'\dd j')$ are occurrences of the same string, we denote this by $w[i\dd j)\cong w[i'\dd j')$,
and we say that these fragments match.
On the other hand, the equality of fragments $w[i\dd j)=w[i'\dd j')$ is reserved for occasions when $w[i\dd j)$ is the same fragment as $w[i'\dd j')$ (i.e., $i=i'$ and $j=j'$). 

We assume that a fragment $x=w[i\dd j)$ of a string $w$ inherits some notions from the underlying substring: the \emph{length} $|x|=j-i$, the characters $x[i']=w[i+i']$ for $i'\in [0\dd |x|)$, and the \emph{subfragments} $x[i'\dd j')=w[i+i'\dd i+j')$ for $i',j'\in [0\dd |x|)$ with $i'\le j'$.
A fragment $w[i\dd j)$ also has a natural interpretation as a \emph{range} $[i\dd j)$ of positions in~$w$.
This lets us consider \emph{disjoint} or \emph{intersecting} (\emph{overlapping}) fragments and define the containment relation ($\sub$) on fragments of $w$. 
Moreover, for $i,j,k\in [0\dd |w|]$ with $i\le j \le k$, the fragments $w[i\dd j)$ and $w[j\dd k)$ are called \emph{consecutive} and $w[i\dd k)=w[i\dd j)\cdot w[j\dd k)$ is assumed to be their \emph{concatenation}.
If fragments $w[i\dd j)$ and $w[i'\dd j')$ intersect, we denote their \emph{intersection} $w[\max(i,i')\dd \min(j,j'))$ by $w[i\dd j)\cap w[i'\dd j')$.
 
A fragment $x$ of $w$ of length $|x|<|w|$ is called a \emph{proper} fragment of $w$.
A fragment $w[i\dd j)$ is a \emph{prefix} of $w$ if $i=0$ and a \emph{suffix} of $w$ if $j=|w|$.
We extend the notions of a prefix and a suffix to the underlying substrings.

We denote by $\prec$ the natural order on $\Sigma$
and extend this order in the standard way to the \emph{lexicographic} order on $\Sigma^{*}$.

The notion of concatenation $uv$ extends to $u\in \Sigma^*$ and $v\in \Sigma^\infty$, resulting in $uv\in \Sigma^{\infty}$.
Also, the notion of the longest common prefix of two strings naturally extends to $\Sigma^*\cup \Sigma^\infty$.
For a string $w\in \Sigma^+$, we also introduce the infinite power $w^\infty\in \Sigma^\infty$, i.e., the concatenation of infinitely many copies of $w$.

\subsection{Periodic Structures in Strings}\label{sec:per_struct}
An integer $p\in [1\dd |w|]$ is a \emph{period} of a string $w\in \Sigma^+$ if $w[i]=w[i+p]$ holds for all $i\in  [0\dd |w|-p)$.
We call $w$ \emph{periodic} if its smallest period satisfies $\per(w)\le \tfrac12 |w|$.
A \emph{border} of a string $w$ is a substring of $w$ which occurs both as a prefix and as a suffix of~$w$.
Note that $p$ is a period of $w$ if and only if $w$ has a border of length $|w|-p$.
Periods of a string $w$ satisfy the following periodicity lemma.
\begin{lemma}[Periodicity Lemma~\cite{MR0162838,fine1965uniqueness}]\label{lem:per}
Let $w$ be a string with periods $p$ and $q$.
If $p+q-\gcd(p,q)\le |w|$, then $\gcd(p,q)$ is also a period of $w$.
\end{lemma}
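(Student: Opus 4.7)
The plan is to proceed by induction on $p+q$, mimicking the Euclidean algorithm. Without loss of generality assume $p \ge q$. The base case is $p = q$, where $\gcd(p,q) = p$ is already a period by hypothesis. For the inductive step with $p > q$, the key intermediate claim is that $p - q$ is also a period of $v$. Once this is granted, I invoke the inductive hypothesis on the pair of periods $(p-q, q)$, whose sum $p$ is strictly less than $p + q$, whose gcd still equals $\gcd(p, q)$, and which satisfies the hypothesis of the lemma since $p \le |v|$ follows from $p + q \le |v|$ and $q \ge 1$.

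For the intermediate claim I fix $i \in [1, |v|-(p-q)]$ and prove $v[i] = v[i+p-q]$, splitting into two cases according to whether $i + p \le |v|$ or $i + p > |v|$. In the first case I chain $v[i] = v[i+p] = v[i+p-q]$, using period $p$ forwards and then period $q$ backwards; the latter is legal since $i + p - q \ge 1$ because $p \ge q$. In the second case $i + p > |v|$ combined with $p + q \le |v|$ forces $i > q$, so $i - q \ge 1$, and I chain $v[i] = v[i-q] = v[i-q+p]$, using period $q$ backwards and then period $p$ forwards; the index $i-q+p$ is at most $|v|$ because $i \le |v| - (p - q)$.

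The main obstacle is the case analysis: one must carefully verify that every index appearing in the chained equalities lies in $[1, |v|]$. The hypothesis $p + q \le |v|$ is precisely what guarantees that, for every $i$, at least one of the two directions of index shift stays inside the word; without it one could have $i + p > |v|$ yet $i - q < 1$, and neither chain would apply. Everything else---checking that $\gcd(p-q, q) = \gcd(p, q)$ and that the induction terminates because $p+q$ strictly decreases---is routine.
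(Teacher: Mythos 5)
Your proof is correct. The paper does not prove this lemma at all---it cites it as the classical Fine--Wilf Periodicity Lemma---so there is no in-paper argument to compare against; your subtractive-Euclidean induction, with the two-case index chase showing that $p-q$ is a period, is the standard textbook proof of exactly the (weak) version stated here, and all the index bounds you check ($i+p-q\ge 1$ in the first case, $i>q$ forced by $i+p>|v|$ and $p+q\le|v|$ in the second) are right. One could remark that the sharper classical hypothesis $p+q-\gcd(p,q)\le|v|$ also suffices, but the paper only states and uses the weaker bound, so nothing more is needed.
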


For an integer $k\ge 2$, the string $w^k$ is called a \emph{power} of $w$ (with \emph{root} $w$).
A string $u\in \Sigma^+$ is \emph{primitive} if it is not a power,
i.e., $u\ne w^k$ for every integer $k\ge 2$ and every root~$w$.
For a string $u$, the shortest string $w$ that satisfies $u=w^k$ for some $k \in \mathbb{Z}_{+}$ is called the \emph{primitive root} of $u$.
Primitive strings enjoy a synchronizing property, which is an easy consequence of \cref{lem:per}.

\begin{lemma}[{see~\cite[{Lemma~1.11}]{AlgorithmsOnStrings}}]\label{lem:synchr}
A non-empty string $u$ is primitive if and only if it occurs exactly twice in $u^2$ (as a prefix and as a suffix).
\end{lemma}

By \cref{lem:synchr}, a string $w\in \Sigma^n$ is primitive if and only if it has exactly $n$ distinct rotations.

\paragraph{Maximal Repetitions (Runs)}\label{sec:runs}
A \emph{run} (maximal repetition)~\cite{DBLP:journals/dam/Main89,DBLP:conf/focs/KolpakovK99} in a string $w$ is a periodic fragment
$\gamma=w[i\dd j]$ which can be extended neither to the left nor to the right without increasing the smallest period $p=\per(\gamma)$,
that is, $w[i-1] \ne w[i+p-1]$ and $w[j+1] \ne w[j-p+1]$ provided that the respective positions exist.
We assume that runs are stored together with their periods so that $\per(\gamma)$ can be retrieved in constant time.
We denote the set of all runs in a string $w$ by $\RUNS(w)$.

\begin{figure}[htpb]
\begin{center}
\begin{tikzpicture}[xscale=0.5,yscale=0.5]

\foreach \i/\x in {0/b,1/a,2/a,3/b,4/a,
                   5/b,6/a,7/a,8/b,9/a,
                   10/b,11/b}{
  \draw (\i,0) node[above] {\texttt{\x}};
  \draw (\i,-0.5) node {\scriptsize \i};
}

\foreach \ud in {0,5,9}{
  \begin{scope}[xshift=\ud cm,yshift=0.2cm]
  \clip (0.4,0.5) rectangle (2.6,1);
  \foreach \dx in {-1,0,1,2}{
    \draw[xshift=\dx cm] (0.5, 0.5) sin (1, 0.8) cos (1.5, 0.5);
  }
  \end{scope}
}

\begin{scope}
  \clip (6.4,1.1) rectangle (10.6,1.8);
  \foreach \dx in {-2,0,2,4}{
    \draw[xshift=\dx cm] (6.5, 1.2) sin (7.5, 1.7) cos (8.5, 1.2);
  }
\end{scope}

\begin{scope}[yshift=0.4cm]
  \clip (-0.6,1.4) rectangle (10.6,2.6);
  \foreach \dx in {-5,0,5,10}{
    \draw[xshift=\dx cm] (-0.5, 1.5) sin (2, 2.5) cos (4.5, 1.5);
  }
\end{scope}

\begin{scope}
  \clip (1.4,-1.7) rectangle (6.6,-1);
  \foreach \dx in {-2,0,2,4}{
    \draw[xshift=\dx cm] (1.5, -1) sin (2.5, -1.5) cos (3.5, -1);
  }
\end{scope}

\begin{scope}
  \clip (3.4,-2.6) rectangle (9.6,-1.8);
  \foreach \dx in {-3,0,3,6}{
    \draw[xshift=\dx cm] (3.5, -1.8) sin (5, -2.5) cos (6.5, -1.8);
  }
\end{scope}

\end{tikzpicture}
\end{center}
\vspace*{-0.7cm}
\caption{\label{fig:run}
  Runs in $w=\mathtt{baababaababb}$.
}
\end{figure}

\begin{example}
  For a string $w=\mathtt{baababaababb}$, we have \[\RUNS(w)=\{w[1\dd 3),
  \, w[6\dd 8),\, w[10\dd 12),\, w[2\dd 7),\, w[7\dd 11),\, 
  w[4\dd 10),\, w[0\dd 11)\};\]
  see \cref{fig:run}.
  We have three runs with period 1:
  $w[1\dd 3)\cong \mathtt{aa}$, $w[6\dd 8)\cong \mathtt{aa}$, and $w[10\dd 12)\cong \mathtt{bb}$;
  two runs with period 2: $w[2\dd 7) \cong \mathtt{ababa}$ and $w[7\dd 11)\cong \mathtt{abab}$;
  one run with period 3: $w[4\dd 10)\cong \mathtt{abaaba}$;
  and one run with period 5: $w[0\dd 11)\cong \mathtt{baababaabab}$.
\end{example}

Our results rely on the following asymptotic bounds related to runs.

\begin{proposition}[\cite{DBLP:conf/focs/KolpakovK99,DBLP:journals/siamcomp/BannaiIINTT17}]\label{fct:runs}
Given a text $T$ of length $n$, the set $\RUNS(T)$ of all runs in $T$ (together with their periods) can be computed in $\Oh(n)$ time. In particular, $|\RUNS(T)|=\Oh(n)$.
\end{proposition}

\begin{figure}[th]
\begin{center}
\begin{tikzpicture}[scale=0.7]

\filldraw[fill=black!15] (3,0) rectangle (7,.5);
\draw (1, 0) -- (10, 0);
\draw (1, .5) -- (10, .5);
\draw[dashed] (0,0) -- (1,0);
\draw[dashed] (0,.5) -- (1,.5);
\draw[dashed] (10,0) -- (11,0);
\draw[dashed] (10,.5) -- (11,.5);

\begin{scope}
\clip(3, 0.5) rectangle (7, 00.85);
\foreach \x in {3, 4.6, 6.2} {
	\draw (\x, 0.5) sin (\x+.8, .8) cos (\x + 1.6, 0.5);
}
\end{scope}

\begin{scope}
\clip(1.7, 0.5) rectangle (3, 00.85);
\foreach \x in {1.4} {
	\draw[densely dashed] (\x, 0.5) sin (\x+.8, .8) cos (\x + 1.6, 0.5);
}
\end{scope}
\begin{scope}
\clip(7, 0.5) rectangle (9.7, 0.85);
\foreach \x in {6.2, 7.8, 9.4} {
	\draw[densely dashed] (\x, 0.5) sin (\x+.8, .8) cos (\x + 1.6, 0.5);
}
\end{scope}

\draw (5, 0.25) node {$u$};

\draw[dotted] (1.7, .7) -- (1.7, -.3);
\draw[dotted] (9.7, .7) -- (9.7, -.3);
\draw [decorate,decoration={brace,amplitude=12pt}] (9.7, 0) -- node[below=11pt] {$\gamma$} (1.7, 0);
\draw[dotted] (3, .5) -- (3,1);
\draw[dotted] (4.6, .5) -- (4.6, 1);
\draw[<->] (4.6, .9) -- node[above] {$p$} (3, .9);

\end{tikzpicture}
\end{center}
\vspace*{-0.7cm}
\caption{\label{fig:run-extension}
A run $\gamma$ extending a fragment $u$, that is, $\gamma=\run(u)$, satisfies $\per(u)=\per(\gamma)=p \le \tfrac12|u| \le \tfrac12|\gamma|$.
}
\end{figure}

We say that a run $\gamma$ \emph{extends} a fragment $x$ if $x\sub \gamma$ and $\per(x)=\per(\gamma)$; see \cref{fig:run-extension}.
Every periodic fragment can be extended to a run with the same period. Moreover, the following easy consequence of \cref{lem:per} implies that this extension is unique. For the fixed text $T$, we denote the unique run extending a periodic fragment $x$ by $\run(x)$.
\begin{fact}\label{fct:uni}
Let $\gamma\ne \gamma'$ be runs in a string $w$.
If $p=\per(\gamma)$ and $p'=\per(\gamma')$,
then $|\gamma \cap \gamma'| < p+p'-\gcd(p,p')$.
\end{fact}
\begin{proof}
For a proof by contradiction, suppose that  $|\gamma \cap \gamma'| \ge p+p'-\gcd(p,p')$.
By \cref{lem:per}, this means that $\gcd(p,p')$ is a period of the intersection $\gamma\cap \gamma'$, which we denote $w[\ell\dd r]$.
Since $\gamma\ne \gamma'$, exactly one of these runs must contain position $\ell-1$ or $r+1$.
Due to symmetry, we may assume without loss of generality that $\gamma$ contains position $\ell-1$.
Observe that positions $\ell+p-1$ and $\ell+p'-1$ are located within $\gamma\cap \gamma'$, so $w[\ell+p-1]=w[\ell+p'-1]$,
because $\gcd(p,p')$ divides $|p-p'|$.

On the other hand, $w[\ell-1]=w[\ell+p-1]$ (because $\gamma$ contains position $\ell-1$) and $w[\ell-1]\ne w[\ell+p'-1]$ (by maximality of $\gamma'$).
Thus, $w[\ell-1]=w[\ell+p-1]=w[\ell+p'-1]\ne w[\ell-1]$, which is a contradiction that concludes the proof.
\end{proof}

Fragments $x$ satisfying $\run(x)=\gamma$ admit the following elegant characterization.

\begin{observation}\label{obs:char}
Consider a text $T$ and a run $\gamma \in \RUNS(T)$.
A fragment $x$ of $T$ satisfies $\run(x)=\gamma$ if and only if $x$ is contained in $\gamma$
and $|x|\ge 2\per(\gamma)$.
\end{observation}

A string $x$ is called \emph{highly periodic} if $\per(x) \le \frac13|x|$; otherwise, it is called \emph{non-highly-periodic}.
Highly periodic and non-highly-periodic strings are further denoted as $\HP$-strings and $\N$-strings, respectively.

\section{Synchronizing Sets Hierarchy}\label{chp:lce}
Recall that a $\tau$-\emph{synchronizing set} consists of the starting positions of selected length-$2\tau$ fragments.
It is defined as follows to satisfy consistency and density conditions.

\begin{definition}[\textbf{Synchronizing Set}; Kempa and Kociumaka~\cite{Kempa2019}]
  Let $T$ be a string of length $n$ and let $\tau\in [1\dd \floor{\frac{n}{2}}]$.
  A set $\S\sub [0\dd n-2\tau]$ is a \emph{$\tau$-synchronizing set} of $T$ if it satisfies the following conditions:
  \begin{description}
\item[Consistency:] For all $i,j\in [0\dd n-2\tau]$, if $i\in \S$ and $T[i\dd i+2\tau)\cong T[j\dd j+2\tau)$, then $j\in \S$.
\item[Density:]
For every $i\in [0\dd n-3\tau+1]$, we have $[i\dd i+\tau)\cap \S = \emptyset\;\Longleftrightarrow\; \per(T[i\dd i+3\tau-1)) \le \tfrac13\tau$.
\end{description}
We say that the elements of a fixed $\tau$-synchronizing set $\S$ are $\tau$-\emph{synchronizing positions} and the fragments $T[s\dd s+2\tau)$ for $s\in \S$ are $\tau$-\emph{synchronizing fragments}; the set of $\tau$-synchronizing fragments is denoted~$\SF$.
\end{definition}

\begin{example}
  Let  $\TM_t$ be the Thue--Morse word~\cite{Thue2} of length $2^t$ and $\overline{\TM_t}$ be its bitwise negation.
  For $i\in [0\dd 2^t)$, the character $\TM_t[i]$ is the \emph{pop-count} of $i$ modulo 2 (the parity of ones in the binary representation of $i$).
  For $k\in [0\dd t)$, the following construction yields a $2^{k}$-synchronizing set of $\TM_t$:
  \[\S=\left(\Occ(\TM_k,\TM_t)\cup \Occ(\overline{\TM_k},\TM_t)\right)\cap \left[0\dd 2^t-2^{k+1}\right].\]
  We illustrate the case of $t=5$ and $k=2$, where $\TM_{2}=0110$.
  Then,  $\S\,=\, \{0,4,6,8,12,16,20,22,24\}$ with the $2^2$-synchronizing positions underlined:
  $\TM_5\,=\,
  \underline{0}\,1\,1\,0\,\underline{1}\,0\,\underline{0}\,1\,\underline{1}\,0\,0\,1\,
  \underline{0}\,1\,1\,0\,\underline{1}\,0\,0\,1\,\underline{0}\,1\,\underline{1}\,0\,
  \underline{0}\,1\,1\,0\,1\,0\,0\,1$.
  \end{example}

We say that a fragment is \emph{$p$-periodic} if its smallest period is at most $p$; otherwise
we say it is \emph{$p$-non-periodic}.
Let us note that if $[i\dd i+\tau)\cap \S=\emptyset$ holds for a $\tau$-synchronizing set $\S$,
then the density condition stipulates that $T[i\dd i+3\tau-1)$ is $\frac13\tau$-periodic.
Observe that the starting positions of all the $\frac13\tau$-non-periodic length-$2\tau$ fragments of $T$
form a $\tau$-synchronizing set. However, this set is too large to be useful. 
Kempa and Kociumaka~\cite{Kempa2019} showed how to efficiently construct $\tau$-synchronizing sets
of optimal size $\Oh(n/\tau)$.

\prpsynch*

We present a simultaneous linear-time construction of $\tau$-synchronizing sets for
a whole \emph{hierarchy} of geometrically increasing $\tau$'s.
More formally, we show that, after linear-time preprocessing, a 
$\tau$-synchronizing set for a given $\tau$ can be constructed just in $\Oh(n/\tau)$ time.

\thmsss*

In this section, we give an overview of the construction behind \cref{thm:sss}.
As indicated in \cref{sec:techniques}, we use a modification of the recompression technique~\cite{DBLP:journals/jacm/Jez16} to construct a sequence $(\F_0,\F_1,\dots,\F_q)$ of factorizations of $T$, where
$\F_0$ consists of single letters of $T$, phrases of $\F_k$ are concatenations of phrases of $\F_{k-1}$ for $k \in [1\dd q]$,
and $\F_q$ contains one phrase spanning across entire $T$.
The factorizations are represented by the sets $\B_k$ of \emph{phrase boundaries}: the starting positions of phrases of $\F_k$ except for the leftmost phrase.

\paragraph*{\bf Local Consistency}
The local consistency of recompression is characterized as follows: whether or not two subsequent phrases of $\F_{k-1}$ are concatenated into the same phrase of $\F_k$ depends solely on the \emph{names} of these two phrases (where matching phrases have equal names). 
Unfortunately, since the phrases can get arbitrarily long, we cannot conclude that the resulting set $\B_k$ of phrase boundaries is locally consistent:
for any fixed $k>1$, whether or not a given $i$ position belongs to $\B_k$ may depend on a context of unbounded size.
Consequently, in \cref{sec:recompression}, we develop \emph{restricted} recompression, where two subsequent phrases of $\F_{k}$ may be concatenated into the same phrase of $\F_{k+1}$ only if their lengths do not exceed
\[d_k = \osiemsiedem^{\floor{k/2}}.\]
As a result, every phrase of $\F_k$ has length at most $\frac74 d_k$ or primitive root of length at most $d_k$ (see \cref{fct:recompr}).
Moreover, this guarantees that whether or not a given $i$ position belongs to $\B_k$ depends only on its context of length $\alpha_k \le 16d_{k-1}$.
Formally, we define
\[\alpha_k = \begin{cases} 1 & \text{if }k=0,\\
   \alpha_{k-1}+\floor{d_{k-1}} & \text{otherwise},\end{cases}\]
so that \cref{lem:cons} can be proved in \cref{sec:recompression}.
Our construction further guarantees that $|\B_k| = \Oh(n/d_k)$ and hence $q=\Oh(\log n)$.

\begin{figure}[h]
\centering
\begin{tikzpicture}[xscale=0.6,yscale=0.6,font=\small]
  \def\tauW{3.0}
  \def\boxH{0.4}

  \def\curr{0}
  \foreach [count=\i]
    \w/\c
    in {2/A, 4/B, 1/C, 6/D, 1/C, 1/C, 6/D, 4/E, 1/F} {
    \coordinate (l\i) at (\curr,0);
    \coordinate (b\i) at (\curr+\w,0);
    \coordinate (r\i) at (\curr+\w,0.8);
    \draw (l\i) rectangle (r\i) node[midway] {$\c$};
    \xdef\curr{\number\numexpr\curr+\w\relax}
  }
  \foreach \i in {1,...,8} {
    \node at (r\i) [above] {$b_\i$};
  }

  \foreach \i/\y in {6/3, 3/2, 5/2, 2/1, 4/1, 7/1} {
    \coordinate (bl\i) at ($(b\i)+(-\tauW,-\y)$);
    \coordinate (bll\i) at ($(b\i)+(-\tauW+0.2,-\y+0.2)$);
    \coordinate (br\i) at ($(b\i)+(\tauW,-\y)$);
    \coordinate (bc\i) at ($(b\i)+(0,-\y)$);
    \coordinate (bu\i) at ($(b\i)+(0,-\y+\boxH)$);

    \draw[densely dotted, violet] (bu\i) -- (l\i -| bu\i);
    \draw[thin, blue, -latex] (bll\i) -- (l\i -| bll\i);

    \draw[fill=white] (bl\i) rectangle ($(br\i)+(0,\boxH)$);
    \draw[very thick, violet] (bc\i) -- (bu\i);
    \node[blue] at (bll\i) {$\bullet$};
    \node[blue, above] at (bll\i |- r\i) {$s_\i$};
  }

  \draw[|<->|] ([yshift=-0.2cm]bl6)--([yshift=-0.2cm]bc6) node[midway,below] {$\tau$};
  \draw[|<->|] ([yshift=-0.2cm]bc6)--([yshift=-0.2cm]br6) node[midway,below] {$\tau$};
\end{tikzpicture}
\caption{Assume $\F_k=(A,B,C,D,C,C,D,E,F)$, where the symbols $A,B,C,D,E,F$ are names of phrases.
The set $\B_k$ consists of starting positions of phrases.
The $\tau$-synchronizing set results by 
going back from the start of each phrase by the distance $\tau$ (provided that a length-$2\tau$ fragment starts there).
The set $\S$ consists of positions in $T$ marked by a blue dot. 
The figure shows the case without highly periodic fragments.
}\label{fig:ss}
\end{figure}

Since the density condition in \cref{def:sss} depends on the notion of $\frac13\tau$-periodicity,
our construction needs to capture it as well.
For an integer $\tau\in [1\dd n]$, we define the set of \emph{$\tau$-runs} in $T$ as 
\[\RUNS_{\tau}(T) = \{\gamma \in \RUNS(T) : |\gamma|\ge \tau \text{ and } \per(\gamma)\le \tfrac13 \tau\}.\]

\begin{figure}[h]
\centering
\begin{tikzpicture}[scale=0.7,font=\small]
  \def\tauW{2.0}
  \def\boxH{0.4}

  \coordinate (s) at (0,0);
  \draw (s) rectangle +(10, 0.5);

  \begin{scope}[xshift=2cm,yshift=0.5cm]
    \clip (0,0) rectangle (6.6,0.5);
    \foreach \dx in {0,...,10}{
      \coordinate (c\dx) at (\dx,0);
      \draw (\dx, 0) sin (\dx+0.5, 0.3) cos (\dx+1, 0);
    }
  \end{scope}

  \coordinate (r2) at ($(c6)+(0.75-2*\tauW,-2.5)$);
  \coordinate (rr2) at ($(r2)+(2*\tauW,\boxH)$);
  \draw[thick,fill=white] (r2) rectangle (rr2);
  \node[blue] (s2) at ($(r2)+(0.2,0.45*\boxH)$) {$\bullet$};
  \draw[densely dotted] (rr2) -- (rr2 |- s);
  \draw[thin, blue, -latex] (s2) -- (s2 |- s);

  \coordinate (r1) at ($(c0)+(-0.25,-1.5)$);
  \coordinate (rr1) at ($(r1)+(2*\tauW,\boxH)$);
  \draw[thick,fill=white] (r1) rectangle (rr1);
  \node[blue] (s1) at ($(r1)+(0.2,0.45*\boxH)$) {$\bullet$};
  \draw[thin, blue, -latex] (s1) -- (s1 |- s);

  \draw[|<->|] ([yshift=-0.2cm]r1 -| rr1)--([yshift=-0.2cm]r1) node[midway,below] {$2\tau$};
  \draw[|<->|] ([yshift=-0.2cm]r2 -| rr2)--([yshift=-0.2cm]r2) node[midway,below] {$2\tau$};

\end{tikzpicture}
\caption{Synchronizing positions and synchronizing fragments induced by a $\tau$-run.
The first/last positions of such fragments are one position to the left/right of a $\tau$-run. The set $\S$ includes the first positions
of synchronizing fragments (blue dots).}\label{Sept23}
\end{figure}

The \emph{middle point} of a fragment $T[i \dd i+2\ell)$ is defined as $i+\ell$. 
We define our $\tau$-synchronizing set based on the set $\B_k$ of phrase boundaries at the lowest level 
$k$ such that $\tau \ge 16d_{k-1} \ge \alpha_k$. Formally, we define \[k(\tau)=\max\{j \in \Zz\,:\, j=0 \text{ or }16d_{j-1} \le \tau\}.\]

\begin{construction}[Sets of synchronizing positions and fragments]\label{def:syncfr}
The set $\SF$ of $\tau$-synchronizing fragments consists of all $\frac{\tau}{3}$-non-periodic 
fragments of length $2\tau$ such that 
\begin{enumerate}[label=(\alph*)]
\item\label{it:a} their middle point is in $\B_{k(\tau)}$, or
\item\label{it:b} their first position is one position to the left of a $\tau$-run, or
\item\label{it:c} their last position is one position to the right of a $\tau$-run.
\end{enumerate}
The set $\S$ consists of the first positions
of fragments in $\SF$.
\end{construction}

Schematic illustrations can be found on \cref{fig:ss,Sept23,fig:general}.
The procedure constructing the sets $\B_k$ is described and analyzed in \cref{sec:recompression}.
Then, in \cref{sec:correctness}, we prove that \cref{def:syncfr} indeed yields $\tau$-synchronizing sets
and that these sets can be constructed efficiently.

\begin{figure}[h]
\centering
\vspace*{-0.5cm}
\begin{tikzpicture}[xscale=0.7,yscale=0.7]
  \def\tauW{1.8}
  \def\perW{1.2}
  \def\boxH{0.4}

  \def\curr{0}
  \foreach [count=\i]
    \w/\c
    in {1/A, 1/B, 2/C, 3/D, 3/C, 5/C, 5/D, 1/E} {
    \coordinate (l\i) at (\curr,0);
    \coordinate (b\i) at (\curr+\w,0);
    \coordinate (r\i) at (\curr+\w,0.8);
    \draw (l\i) rectangle (r\i) node[midway] {};
    \xdef\curr{\number\numexpr\curr+\w\relax}
  }
  \foreach \i in {1,...,7} {
    \node at (r\i) [above,yshift=0.1cm] {$b_\i$};
  }

  \begin{scope}[xshift=7.6cm,yshift=0.8cm]
    \clip (-\perW,0) rectangle (10,2);
    \foreach \dx in {0,...,10}{
      \coordinate (c\dx) at (\dx*\perW,0);
      \draw (\dx*\perW, 0) sin ((\dx*\perW+0.5*\perW, 0.3) cos ((\dx*\perW+1*\perW, 0);
    }
    \coordinate (b10) at ($(c0)+(-0.4+\tauW,-0.8)$);
    \coordinate (b11) at ($(c8)+(+0.5-\tauW,-0.8)$);
  \end{scope}

  \foreach \i/\y in {3/2, 2/1, 4/3} {
    \coordinate (bl\i) at ($(b\i)+(-\tauW,-\y)$);
    \coordinate (bll\i) at ($(b\i)+(-\tauW+0.2,-\y+0.2)$);
    \coordinate (br\i) at ($(b\i)+(\tauW,-\y)$);
    \coordinate (bc\i) at ($(b\i)+(0,-\y)$);
    \coordinate (bu\i) at ($(b\i)+(0,-\y+\boxH)$);

    \draw[densely dotted, violet] (bu\i) -- (l\i -| bu\i);
    \draw[thin, blue, -latex] (bll\i) -- (l\i -| bll\i);

    \draw[fill=white] (bl\i) rectangle ($(br\i)+(0,\boxH)$);
    \draw[very thick, violet] (bc\i) -- (bu\i);
    \node[blue] at (bll\i) {$\bullet$};
  }

  \foreach \i/\y in {10/1, 11/2} {
    \coordinate (bl\i) at ($(b\i)+(-\tauW,-\y)$);
    \coordinate (bll\i) at ($(b\i)+(-\tauW+0.2,-\y+0.2)$);
    \coordinate (br\i) at ($(b\i)+(\tauW,-\y)$);
    \coordinate (bc\i) at ($(b\i)+(0,-\y)$);
    \coordinate (bu\i) at ($(b\i)+(0,-\y+\boxH)$);

    \draw[thin, blue, -latex] (bll\i) -- (l1 -| bll\i);

    \draw[fill=white] (bl\i) rectangle ($(br\i)+(0,\boxH)$);
    \draw[very thick, violet] (bc\i) -- (bu\i);
    \node[blue] at (bll\i) {$\bullet$};
  }
  \draw[densely dotted] (br11)--($(br11 |- r1)+(0,0.2)$);

  \node[blue, above] at (bll2 |- b1) {$s_1$};
  \node[blue, above] at (bll3 |- b1) {$s_2$};
  \node[blue, above] at (bll4 |- b1) {$s_3$};
  \node[blue, above] at (bll10 |- b1) {$s_4$};
  \node[blue, above] at (bll11 |- b1) {$s_5$};

  \draw[|<->|] ([yshift=-0.2cm]bl4)--([yshift=-0.2cm]bc4) node[midway,below] {$\tau$};
  \draw[|<->|] ([yshift=-0.2cm]bc4)--([yshift=-0.2cm]br4) node[midway,below] {$\tau$};

\end{tikzpicture}
\caption{
  Illustration of synchronizing fragments of length $2\tau$ in a general situation: the non-periodic case and the case of synchronizing fragments generated by $\tau$-runs.
  The set of $\tau$-synchronizing positions is $\S=\{s_1,\ldots,s_5\}$, and the set 
  of phrase boundaries is $\B_k=\{b_1,b_2,\ldots,b_7\}$.
}\label{fig:general}
\end{figure}

\section{Restricted Recompression: Hierarchy of Phrase Boundaries}\label{sec:recompression}

\newcommand{\Symb}{\mathbf{S}}
\newcommand{\Act}{\mathbf{A}}
\newcommand{\Left}{\mathbf{L}}
\newcommand{\Right}{\mathbf{R}}
\newcommand{\rle}{\mathsf{\mathbf{RunShrink}}}
\newcommand{\pc}{\mathsf{\mathbf{PairShrink}}}
\newcommand{\Zp}{\mathbb{Z}_{+}}

In this subsection, we introduce and analyze a version of the recompression technique~\cite{DBLP:journals/jacm/Jez16}, called here \emph{restricted recompression}.

\subsection{Definition of Restricted Recompression}
Given a string $T\in \Sigma^+$, we create a sequence of factorizations $(\F_0,\ldots,\F_q)$,
where each factorization $\F_k$ decomposes $T$ into $m_k$ \emph{phrases} $T[f_{k,i}\dd f_{k,i+1})$ for $i\in [0\dd m_k)$.
Phrases are not allowed to grow too fast, so our version of recompression is called \emph{restricted}.

We identify the factorization $\F_k$ with a string $T_k\in \Symb^+$ of phrase \emph{names} such that $T_k[i]=T_k[j]$ holds if and only if $T[f_{k,i}\dd f_{k,i+1})\cong T[f_{k,j}\dd f_{k,j+1})$. The alphabet $\Symb$ of \emph{symbols} is defined as the least fixed point of the following equation:
 \[\Symb = \Sigma \cup (\Symb \times \Symb)\cup (\Symb \times \mathbb{Z}_{\ge 2} ).\]
The phrase names  can be converted into strings using an \emph{expansion function} $\val: \Symb \to \Sigma^+$:
\[\val(S) = \begin{cases}
  S & \text{if $S\in \Sigma$},\\
  \val(S_1)\cdot \val(S_2) & \text{if $S=(S_1,S_2)$ for $S_1,S_2\in \Symb$},\\
  \val(S')^m & \text{if $S=(S',m)$ for $S'\in \Symb$ and $m\in \mathbb{Z}_{\ge 2}$},
\end{cases}\]
that is extended to a \emph{morphism} $\val : \Symb^*\to \Sigma^*$ by setting $\val(S_1\cdots S_s)=\val(S_1)\cdots \val(S_s)$ for $S_1,\ldots,S_s\in \Symb$.
Hence, $\val(T_k[j]) \cong T[f_{k,j}\dd f_{k,j+1})$ and $\val(T_k)=T$.

The sets of \emph{phrase boundaries} corresponding to a factorization $\F_k$ can be formally defined as
$\B_k=\{f_{k,1},\ldots,f_{k,m_k-1}\}$.
We have $\B_k \subseteq \B_{k-1}$ for all $k \in [1 \dd q]$.
Note that if $\F_k$ consists of just a single phrase, then $\B_k=\emptyset$; see \cref{fig:schematic}.

\begin{figure}[h]
  \centering
  \begin{tikzpicture}[scale=0.7]

    \def\smallTriangle#1{
      \draw (#1)--+(-0.4,-0.5)--+(+0.4,-0.5)--cycle;
    }
    \def\smallEdge#1{
      \draw (#1)--+(-0,-0.5);
    }

    \def\leftTree#1#2#3#4{
        \begin{scope}[shift={#2}]
        \node[above] at (0,0) {#1};
        \draw (0,0)--(-0.5,-0.5);
        \draw (0,0)--(0.5,-0.5);
        \ifx&#3&%
            \smallEdge{-0.5,-0.5}  
        \else
            \smallTriangle{-0.5,-0.5}
        \fi
        \ifx&#4&%
            \smallEdge{0.5,-0.5}  
        \else
            \smallTriangle{0.5,-0.5}
        \fi
        \end{scope}
    }
    
    \draw[very thick,-latex, blue] (5,-1.5)--+(0,-1);

    \begin{scope}
        \foreach \x in {0.85,3.2,5,7.2,9} {
        \draw[thick,violet] (\x,0)--+(0,-1.2);
        }
        \leftTree{$S_1$}{(0,0)}{x}{}
        \leftTree{$S_2$}{(2,0)}{x}{x}
        \leftTree{$S_3$}{(4,0)}{}{x}
        \leftTree{$S_4$}{(6,0)}{x}{x}
        \leftTree{$S_5$}{(8,0)}{}{x}
        \leftTree{$S_6$}{(10,0)}{x}{x}
    \end{scope}
    
    \begin{scope}[xshift=1cm,yshift=-3cm]
        \foreach \x in {2.1, 3.85, 8.1} {
        \draw[thick,violet] (\x,0)--+(0,-1.7);
        }
        
        \begin{scope}[xshift=0cm]
        \node[above] at (0,0) {$(S_1,S_2)$};
        \draw(0,0)--(-1,-0.5);
        \draw(0,0)--(1,-0.5);
        \leftTree{}{(-1,-0.5)}{x}{}
        \leftTree{}{(1,-0.5)}{x}{x}
        \end{scope}
        
        \begin{scope}[xshift=2.8cm]
        \node[above] at (0,0) {$S_3$};
        \draw(0,0)--(0,-0.5);
        \leftTree{}{(0,-0.5)}{}{x}
        \end{scope}
        
        \begin{scope}[xshift=6cm]
        \node[above] at (0,0) {$(S_4,S_5)$};
        \draw(0,0)--(-1,-0.5);
        \draw(0,0)--(1,-0.5);
        \leftTree{}{(-1,-0.5)}{x}{x}
        \leftTree{}{(1,-0.5)}{}{x}
        \end{scope}
        
        \begin{scope}[xshift=9.2cm]
        \node[above] at (0,0) {$S_6$};
        \draw(0,0)--(0,-0.5);
        \leftTree{}{(0,-0.5)}{x}{x}
        \end{scope}
    
    \end{scope}
    
\end{tikzpicture}
  \caption{
  Schematic view of one iteration of restricted recompression
  (with phrase boundaries indicated).}\label{fig:schematic}
\end{figure}

Next, we describe operations being the basic building blocks of restricted recompression.
They are modifications of operations used in standard recompression~\cite{DBLP:journals/jacm/Jez16}.

\begin{definition}[{\bf Restricted run-length encoding}]\label{it:rle}Given a string $U\in \Symb^*$ and a subset $\Act \sub \Symb$, we define an operation $\rle_{\Act}(U)$ that returns a string in $\Symb^*$ as shown in \cref{algo:RLE}.

\begin{minipage}[t]{.93\textwidth}\vspace{0pt}
\begin{algorithm}[H]
\ForEach{maximal unary run $U[i\dd i+m)\cong A^m$ \KwSty{in} $U$ with $m\ge 2$ and $A\in \Act$}{replace $U[i\dd i+m)$ with $(A,m) \in \Symb$;}
\Return{$U$}\;
\caption{$\rle_{\Act}(U)$}\label{algo:RLE}
\end{algorithm}
\end{minipage}
\end{definition}

\begin{definition}[{\bf Restricted pair compression}]\label{it:pair}
Given a string $U\in \Symb^*$ and disjoint subsets $\Left,\Right \sub \Symb$, we define an operation
$\pc_{\Left,\Right}(U)$ that returns a string in $\Symb^*$ as shown in \cref{algo:PC}.

\begin{minipage}[t]{.93\textwidth}\vspace{0pt}
\begin{algorithm}[H]
\ForEach{fragment $U[i\dd i+1]$ \KwSty{in} $U$ with $U[i]\in \Left$ and $U[i+1]\in \Right$}{
replace $U[i\dd i+1]$ with $(U[i],U[i+1]) \in \Symb$\;
}
\Return{$U$}\;
\caption{$\pc_{\Left,\Right}(U)$}\label{algo:PC}
\end{algorithm}
\end{minipage}
\end{definition}

\begin{figure}[h]
\centering
\begin{center}
\begin{tikzpicture}[xscale=0.6,yscale=0.5]
\foreach \x in {0,1,2.5,4,7,9,10.5,11.5,12.5,13.5,16.5,19.5,21,22.5,24,26}{
  \draw[thick] (\x,0) -- (\x,1);
}
\foreach \x/\j in {1/1,2.5/2,4/3,7/4,9/5,10.5/6,11.5/7,12.5/8,13.5/9,16.5/10,19.5/11,21/12,22.5/13,24/14}{
  \draw[xshift=0.15cm] (\x,0) node[below] {$b_{\j}$};
  \filldraw[xshift=0.15cm] (\x,0.5) circle (0.07cm);
}
\foreach \x/\c in {0.5/D,1.75/A,3.25/B,5.5/E,8/C,9.75/B,11/D,12/D,13/D,15/E,18/E,20.25/A,21.75/B,23.25/A,25/C}{
  \draw (\x,0.5) node {$\c$};
}
\draw[thick] (0,0) -- (26,0)  (0,1) -- (26,1);

\draw[very thick,-latex] (-0.5,0.5) -- (-0.5,4);

\begin{scope}[yshift=3.5cm]
\foreach \x in {0,1,4,7,10.5,13.5,16.5,19.5,22.5,24,26}{
  \draw[thick] (\x,0) -- (\x,1);
}
\foreach \x/\j in {1/1,4/3,7/4,10.5/6,13.5/9,16.5/10,19.5/11,22.5/13,24/14}{
  \draw[xshift=0.15cm] (\x,0) node[below] {$b_{\j}$};
  \filldraw[xshift=0.15cm] (\x,0.5) circle (0.07cm);
}
\foreach \x/\c in {0.5/D,5.5/E,15/E,18/E,23.25/A,25/C}{
  \draw (\x,0.5) node {$\c$};
}
\draw (2.5,0.5) node {$(A,B)$};
\draw (8.875,0.5) node {$(C,B)$};
\draw (12,0.5) node {$(D,3)$};
\draw (21,0.5) node {$(A,B)$};
\draw[thick] (0,0) -- (26,0)  (0,1) -- (26,1);
\end{scope}
\end{tikzpicture}
\end{center}
\caption{Detailed illustration of one iteration of computing the
restricted recompression,
moving from $T_{k}\,=\,D\,A\,B\,E\,C\,B\,D\,D\,D\,E\,E\,A\,B\,A\,C$
to $T_{k+2}\,=\,D\,(A,B)\,E\,(C,B)\,(D,3)\,E\,E\,(A,B)\,A\,C$.
The dots correspond to phrase boundaries.
The set of phrase boundaries is changed as follows: $\B_{k}=\{b_1,b_2,\ldots,b_{14}\}\ \longrightarrow\
\B_{k+2}=\{b_1,b_3,b_4,b_6,b_{9},b_{10},b_{11},b_{13},b_{14}\}$.
We have $\Left_{k}=\{A,C\}$ and $\Right_{k}=\{B\}$. Further, $|\val(A)|,\ldots,|\val(D)|\le d_k < |\val(E)|$, so $EE$ is not shrunk.}\label{fig:parse}
\end{figure}

\begin{definition}[{\bf Restricted recompression}]\label{constr:Jez}
Given a string $T\in \Sigma^+$, the strings $T_k$ for $k\in \Zz$ are constructed as shown in \cref{algo:RECOMPR}; see also \cref{fig:parse}.

\begin{minipage}[t]{.93\textwidth}\vspace{0pt}
\begin{algorithm}[H]
$T_0:=T$;\ \ $k:=0$\;
\While{$|T_k|>1$}{
  $\Act_k := \{S\in \Symb : |\val(S)| \le d_k\}$\;
  $T_{k+1}:=\rle_{\Act_{k}}(T_{k})$\;
  $k:=k+1$\;
  $\Act_{k} := \{S\in \Symb : |\val(S)| \le d_{k}\}$\;
  $T_{k+1}:=\pc_{\Left_{k},\Right_{k}}(T_{k})$, where $\Left_{k},\Right_{k}$ are disjoint subsets of $\Act_k$ specified later\;
  $k:=k+1$\;
}
$q:=k$;
\caption{Constructing  restricted recompression
}
\label{algo:RECOMPR}\mbox{ \ }\\
\end{algorithm}
\end{minipage}
\end{definition}

\subsection{Efficient Implementation of Restricted Recompression}
Before we proceed with an efficient construction algorithm, let us show two basic properties of restricted recompression.
We view each application of a shrinking method ($\rle$ or $\pc$) in \cref{algo:RECOMPR} as a decomposition of $T_k$ into fragments, called \emph{blocks},
such that single-character blocks stay intact, and longer blocks are \emph{collapsed} into single characters in $T_{k+1}$.
We refer to \emph{block boundaries} as positions of $T_k$ where blocks start.

A distinctive feature of recompression (compared to other locally consistent parsing techniques)
is that matching phrases are given equal names.
More generally, the following property is shown by induction.

\begin{lemma}\label{fct:cons}
For every $k\in \Zz$ and fragments $x,x'$ of $T_k$, if the fragments expand to equal strings, that is $\val(x)=\val(x')$,
then $x$ and $x'$ are formed of the same sequences of phrase names, that is $x \cong x'$.
\end{lemma} 
\begin{proof}
We proceed by induction on $k$. 
Let $x,x'$ be fragments of $T_{k}$ satisfying $\val(x)=\val(x')$.
If $k=0$, then $x\cong x'$ holds due to $\val(x)\cong x$ and $\val(x')\cong x'$.
Otherwise, let $y$ and $y'$ be the fragments of $T_{k-1}$ obtained from $x$ and $x'$, respectively, by expanding collapsed blocks.

Note that $\val(y)=\val(x)=\val(x')=\val(y')$, so the inductive assumption guarantees $y\cong y'$.
Inspecting \cref{it:rle,it:pair}, we can observe that if $T_{k-1}[i-1\dd i]\cong T_{k-1}[i'-1\dd i']$ for $i,i'\in [1\dd |T_{k-1}|)$,
then block boundaries at positions $i$ and $i'$ are placed consistently, that is,
either at both of them or at neither of them.
Consequently, block boundaries within $y$ and $y'$ are placed consistently.
Moreover, both $y$ and $y'$ consist of full blocks (since they are collapsed to $x$ and $x'$, respectively). Thus, $y$ and $y'$ are consistently partitioned into blocks. 
Matching blocks get collapsed to matching symbols both in \cref{it:rle,it:pair}, so we derive $x\cong x'$.
\end{proof}

In particular, we conclude that $T_{k}=\rle_{\Act_{k}}(T_{k})$ holds for all odd $k\in \Zz$.
\begin{corollary}\label{cor:distinct}
  For every odd $k\in \Zz$, there is no $j\in [1\dd |T_k|)$ such that $T_k[j-1]=T_k[j]\in \Act_{k}$.
  \end{corollary}
  \begin{proof}
    For a proof by contradiction, suppose that $T_k[j-1]=T_k[j]\in  \Act_{k}$ holds for some $j\in [1\dd |T_k|)$.
    By the definition of $d_k=\osiemsiedem^{\floor{k/2}}$, we have $\Act_k=\Act_{k-1}$.
  Let \[x=T_{k-1}[i-\ell\dd i)\quad \text{and}\quad x'=T_{k-1}[i\dd i+\ell')\] be 
blocks of $T_{k-1}$ collapsed to $T_k[j-1]$ and $T_k[j]$, respectively. 

Due to $\val(x)=\val(T_k[j-1])=\val(T_k[j])=\val(x')$, \cref{fct:cons} guarantees $x \cong x'$ and, in particular, $\ell=\ell'$.
    If $\ell=1$, then \[T_{k-1}[i-1]=T_k[j-1]=T_k[j]=T_{k-1}[i] \in \Act_{k}=\Act_{k-1}.\]
Otherwise, $x\cong x' \cong A^\ell$ for some symbol $A\in \Act_{k-1}$,
    which means that $T_{k-1}[i-1]=T_{k-1}[i]=A\in \Act_{k-1}$.

In either case, $T_{k-1}[i-1]=T_{k-1}[i]\in \Act_{k-1}$, which means that $\rle_{\Act_{k-1}}(T_{k-1})$ does not place a block
    boundary at position $i$ in $T_{k-1}$, a contradiction.
  \end{proof}

The classification into left and right symbols is made similarly as in~\cite[Lemma 6.2]{DBLP:journals/talg/Jez15}.
We formulate an auxiliary problem and use its folklore deterministic linear-time solution employing the so-called method of conditional expectations~\cite[Section 6.3]{MU05}.
A proof of the following lemma is provided in \cref{app:AMDC} for completeness.

\defdsproblem{\textsc{Approximate Maximum Directed Cut}}{
\textbf{Input}: A directed multigraph $G=(V,E)$ without self-loops.\\
\textbf{Output}:
A partition $V = L\cup R$ such that at least $\tfrac14|E|$ arcs lead from $L$ to $R$.
}

\begin{restatable}{lemma}{AMDC}
\label{lem:maxcut}
The \textsc{Approximate Maximum Directed Cut} problem can be solved in $\Oh(|V|+|E|)$ time.
\end{restatable}

We are ready to provide a deterministic algorithm for computing phrase boundaries of restricted recompression, shown in the proposition below.
It implies, in particular, that $\B_k$ is empty for appropriately large $k=\Theta(\log n)$, so the number of iterations $q$
of the restricted recompression is $\Oh(\log n)$.

In the implementation of operation $\pc_{\Left_{k},\Right_{k}}$ in restricted recompression, we construct a multigraph with vertices
$\aalph(T_k)\cap \Act_k$, such that there is an arc from $T_k[i-1]$ to $T_k[i]$ for $i \in [1 \dd |T_k|)$ if both symbols belong to $\Act_k$, i.e., both $\val(T_k[i-1])$ and $\val(T_k[i])$ have lengths at most $d_k$.
The sets $\Left_k, \Right_k$ are the output $L,R$ of the \textsc{Approximate Maximum Directed Cut} on this multigraph.
The number of arcs from $L$ to $R$ is exactly the number of pairs of collapsed blocks; the fact that it is bounded from below
guarantees that $|\B_k|$ decreases exponentially.
The steps using $\rle$ do not need to reduce the number of phrases, but they guarantee that there are no self-loops in the constructed multigraphs
(cf.\ \cref{cor:distinct}).

\begin{proposition}\label{lem:recompr2}
  Given a string $T\in \Sigma^n$, a family
  of phrase boundaries $(\B_k)_{k=0}^q$ representing a restricted recompression $(T_k)_{k=0}^q$ can be constructed in $\Oh(n)$ time,
  with an extra guarantee that $|\B_k| \le \frac{4n}{d_{k}}$ holds for $k\in [0 \dd q]$.
\end{proposition}
\begin{proof}
For subsequent integers $k$, we represent the symbols of $T_k$ via \emph{identifier functions} $\id_k$
mapping distinct symbols to distinct identifiers in $[0\dd |T_k|)$.
Thus, we actually store strings $I_k$ such that $|I_k|=|T_k|$ and $I_k[i]=\id_k(T_k[i])$ for $i\in [0\dd |T_k|)$.
Moreover, we store arrays mapping identifiers to expansion lengths of 
the corresponding symbols: $\len_k(I_k[i]):=|\val(T_k[i])|$. 

From this representation, it is easy to derive the equality
\[\B_k = \left\{\sum_{i=0}^{j-1} \len_k(I_k[i]) : j\in[1\dd |I_k|)\right\}.\]
In order to construct $I_0$ and $\len_0$, we sort the characters of $T$ and assign them consecutive positive integer identifiers;
the lengths are obviously $\len_0(I_0[i])=1$ for all $i \in [0 \dd |I_0|)$.
This step takes $\Oh(n)$ time due to the assumption $\sigma = n^{\Oh(1)}$.
Moreover, $|\B_0| = n-1 < 4n = \frac{4n}{d_0}$ holds as claimed.

In order to construct $I_{k+1}$ and $\len_{k+1}$ for $k\ge 0$, we process $I_{k}$ and $\len_{k}$
depending on the parity of $k$.

\paragraph{Case 1:} $k$ is even, that is, $T_{k+1}=\rle_{\Act_{k}}(T_{k})$.

We scan $I_{k}$ from left to right outputting representations of subsequent symbols of $T_{k+1}$.
The representations are elements of $[0\dd |T_{k}|)\,\cup\, [0\dd |T_{k}|)^2$; later, they will be given identifiers in $[0 \dd |T_{k+1}|)$.
Each symbol $S$ in $T_{k+1}$ is represented as $(\id_{k}(S'),m)$, such that $S' \in \aalph(T_{k})$, $m \ge 2$, and $S=(S',m)$ does not occur in $T_{k}$, or as $\id_{k}(S)$ otherwise.
Suppose that $I_{k}[j\dd |I_{k}|)$ is yet to be processed.
If \[\len_{k}(I_{k}[j]) > d_{k}\quad\text{or}\quad I_{k}[j]\ne I_{k}[j+1],\]
we output $I_{k}[j]$ as the next symbol of $T_{k+1}$ and continue processing $I_{k}[j+1\dd |I_{k}|)$.

Otherwise, we determine the maximum integer $m\ge 2$ such that $I_{k}[j']=I_{k}[j]$ for $j'\in [j\dd j+m)$,
output $(I_{k}[j],m)$ as the next symbol of $T_{k+1}$, and continue processing $I_{k}[j+m\dd |I_{k}|)$.
By \cref{fct:cons}, $(T_{k}[j],m)$ does not occur in $T_{k}$ in this case.

Note that $|T_{k+1}|\le |T_{k}|$. The characters of $I_{k+1}$ are initially represented as elements of $[0\dd |T_{k}|)\cup [0\dd |T_{k}|)^2$. Thus, these characters can be sorted in $\Oh(|T_{k}|)$ time so that consecutive integer identifiers $\id_{k+1}$ are assigned to symbols of $T_{k+1}$. 

We also set 
\begin{itemize}
\item 
$\len_{k+1}(\id_{k+1}(S))=m\cdot \len_{k}(\id_{k}(S'))$ if $S$ is represented as $(\id_{k}(S'),m)$ and 

\item $\len_{k+1}(\id_{k+1}(S))=\len_{k}(\id_{k}(S))$ if $S$ is represented as $\id_{k}(S)$.
\end{itemize}

Overall, this algorithm constructs $I_{k+1}$ in $\Oh(|T_{k}|) = \Oh(|\B_{k}|)$ time.
Note that, by the inductive assumption,
\[|\B_{k+1}|\le |\B_{k}| \le \tfrac{4n}{d_{k}}=\tfrac{4n}{d_{k+1}}.\]

\paragraph{Case 2:} $k$ is odd, that is, $T_{k+1}=\pc_{\Left_{k},\Right_{k}}(T_{k})$.

We first partition $\Act_{k}\cap \aalph(T_{k})$ into $\Left_{k}$ and $\Right_{k}$.
Technically, this step results in appropriately marking $\id_{k}(T_k[i])$ depending on whether $T_k[i]\in \Left_{k}$,
$T_k[i]\in \Right_{k}$, or $T_k[i] \not\in \Act_k$.

For this, we scan the array $I_{k}$ and construct a directed multigraph $G_{k}$ with $V(G_{k})=\aalph(I_{k})$.
For each $i\in [1\dd |T_{k}|)$, we add an arc $I_{k}[i-1]\to I_{k}[i]$
provided that 
\[\len_{k}(I_{k}[i-1])\le d_{k}\quad \text{and}\quad\len_{k}(I_{k}[i])\le d_{k}.\]

By \cref{cor:distinct}, this arc is not a self-loop,
so the algorithm of \cref{lem:maxcut} yields in $\Oh(|\B_{k}|)$ time a partition $V(G_{k})=L\cup R$ with at least $\tfrac14|E(G_{k})|$
arcs from $L$ to $R$. For all symbols $S$ such that $\len_{k}(\id_{k}(S))\le d_{k}$, if $\id_{k}(S)\in L$, then we add $S$ to $\Left_{k}$,
and otherwise we add it to $\Right_{k}$.

Next, we scan $I_{k}$ from left to right outputting representations of subsequent symbols of $T_{k+1}$.
Initially, each symbol $S$ in $T_{k+1}$ is represented as $(\id_{k}(S_1),\id_{k}(S_2))$, such that $S_1,S_2 \in \aalph(T_{k})$ and $S=(S_1,S_2)$ does not occur in $T_{k}$, or as $\id_{k}(S)$ otherwise.
Suppose now that $I_{k}[j\dd |I_{k}|)$ is yet to be processed.
 If
 \[j=|I_{k}|-1, \qquad T_{k}[j]\notin \Left_{k}\quad\text{or}\quad T_{k}[j+1]\notin \Right_{k},\]
we output $I_{k}[j]$ as the next symbol of $T_{k+1}$ and continue processing 
the fragment $I_{k}[j+1\dd |I_{k}|)$.

Otherwise, we output $(I_{k}[j],I_{k}[j+1])$ as the next symbol of $T_{k+1}$ and continue processing $I_{k}[j+2\dd |I_{k}|)$.
By \cref{fct:cons}, $(T_{k}[j],T_{k}[j+1])$ does not occur in $T_{k}$ in this case.

Note that $|T_{k+1}|\le |T_{k}|$ and that the characters of $T_{k+1}$ are initially represented as elements of $[0\dd |T_{k}|)\cup [0\dd |T_{k}|)^2$.
Thus, these characters can be sorted in $\Oh(|T_{k}|)$ time so that consecutive integer identifiers $\id_{k+1}$ can be assigned to symbols of $T_{k+1}$. 

We also set 

\begin{itemize}
\item
$\len_{k+1}(\id_{k+1}(S))=\len_{k}(\id_{k}(S_1))+\len_{k}(\id_{k}(S_2))$ if $S$ is represented as $(\id_{k}(S_1),\id_{k}(S_2))$ and 

\item
$\len_{k+1}(\id_{k+1}(S))=\len_{k}(\id_{k}(S))$ if $S$ is represented as $\id_{k}(S)$.
\end{itemize}

Overall, this algorithm constructs $I_{k+1}$ in $\Oh(|T_{k}|) = \Oh(|\B_{k}|)$ time.
Moreover, we have $|\B_{k+1}| \le |\B_{k}|-\tfrac14 |E(G_{k})|$ by construction of the partition $\Act_{k}\cap \aalph(T_k) = \Left_{k} \cup \Right_{k}$.

Observe that, for $i\in [1\dd |T_{k}|)$, an arc  $I_{k}[i-1]\to I_{k}[i]$ is not added to $G_{k}$ 
only if 
$|\val(T_{k}[i-1])| > d_{k}$ or $|\val(T_{k}[i])| > d_{k}$.
There are at most $\frac{n}{d_{k}}$ indices $i\in [0\dd |T_{k}|)$ such that $|\val(T_{k}[i])| > d_{k}$ and each of them prevents at most two arcs from being added to $G_{k}$.
Thus, $|E(G_{k})| \ge |\B_{k}| - \frac{2n}{d_{k}}$, and consequently 
\[|\B_{k+1}| \le |\B_{k}|-\tfrac14 |E(G_{k})| 
\le |\B_{k}| - \tfrac14\left(|\B_{k}| - \tfrac{2n}{d_{k}}\right)
  = \tfrac34|\B_{k}| + \tfrac{n}{2d_{k}}  \le \tfrac{3n}{d_{k}} + \tfrac{n}{2d_{k}} = \tfrac{7n}{2d_{k}}  = \tfrac{4n}{d_{k+1}}.\]

The algorithm terminates after constructing the first string $I_q$ with $|I_q|=1$ and even $q$ (then, $\B_k=\emptyset$ for $k \ge q$). The overall running time is 
\[\Oh\left(n+\sum_{k=0}^q |\B_k|\right)=\Oh\left(\sum_{k=0}^q \tfrac{4n}{d_{k}}\right)=\Oh\left(\sum_{k=0}^\infty (\tfrac{7}{8})^{k/2}n\right) = \Oh(n).\qedhere \]
\end{proof}

\subsection{Properties of Restricted Recompression}
Restricted recompression guarantees that phrase lengths can be bounded as shown in the following fact.
\begin{fact}\label{fct:recompr}
For every $k\in \Zz$ and every symbol $S\in \Symb$ occurring in $T_k$, the expansion $\val(S)$ has length at most $\frac74{d_k}$ or primitive root of length at most $d_k$.
\end{fact}
\begin{proof}
  We proceed by induction on $k$. 
  Let $S$ be a symbol in $T_k$. If $k=0$, then $|\val(S)|=1 < \frac{7}{4}\cdot 1 = \frac74d_0$.
  Thus, we may assume $k>0$.

  If $S$ also occurs in $T_{k-1}$,
  then the inductive assumption shows that $\val(S)$ 
  is of length at most $\frac74{d_{k-1}} \le \frac74{d_k}$
  or its primitive root is of length at most $d_{k-1}\le d_k$. 
  
Otherwise, we have two possibilities.
  If $k$ is odd, then $S = (A,m)\in \Act_{k-1}\times \mathbb{Z}_{\ge 2}$,
  and thus the primitive root of $S$ is of length at most $|\val(A)| \le {d_{k-1}}=d_k$.
  If $k$ is even, on the other hand, then $S = (S_1,S_2)\in \Act_{k-1}\times \Act_{k-1}$,
  so $|\val(S)|=|\val(S_1)|+|\val(S_2)|\le 2{d_{k-1}}=\frac74d_k$.
\end{proof}

What is even more important, restricted recompression guarantees that the sets $\B_k$ are locally consistent.
\lemcons*
\begin{proof}
  We proceed by induction on $k$. The base case of $k=0$ is trivially satisfied
  due to $\B_0 = [\alpha_0\dd n-\alpha_0]$.
  
  Let $k>0$.
  For a proof by contradiction, suppose that \[i\in \B_k\quad \text{and}\quad T[i-\alpha_k \dd i+\alpha_k)\cong 
  T[j-\alpha_k \dd j+\alpha_k)\quad \text{yet}\quad j\notin \B_{k}.\]
  By $\alpha_k>\alpha_{k-1}$ and the inductive assumption, $i\in \B_{k} \sub \B_{k-1}$ implies $j\in \B_{k-1}$.
  
  Let us set $i',j'$ so that
  $i$ and $j$ are the first positions of the phrases induced by $T_{k-1}[i']$ and $T_{k-1}[j']$, respectively,
  that is, $i =f_{k-1,i'}=|\val(T_{k-1}[0 \dd i'))|$ and $j = f_{k-1,j'}= |\val(T_{k-1}[0 \dd j'))|$.
  Since a block boundary was not placed at $T_{k-1}[j']$,
  we have $T_{k-1}[j'-1],T_{k-1}[j']\in \Act_{k-1}$ (see \cref{it:rle,it:pair}). 
  Therefore, the phrases $T[j-\ell\dd j)\cong \val(T_{k-1}[j'-1])$ and $T[j\dd j+r)\cong \val(T_{k-1}[j'])$ around position $j$ are of length at most $\floor{d_{k-1}}$. 
  
  Since $\alpha_k = \alpha_{k-1}+\floor{d_{k-1}}$, by the inductive assumption, $j-\ell \in \B_{k-1}$ and $j+r\in \B_{k-1}$ imply $i-\ell\in \B_{k-1}$ and $i+r\in \B_{k-1}$, respectively.
  Due to \cref{fct:cons}, this yields 
  \[
  T_{k-1}[i'-1]=T_{k-1}[j'-1]\ \text{and}\ T_{k-1}[i']=T_{k-1}[j'].\]
  Consequently, a block boundary was not placed at $T_{k-1}[i']$,
  which contradicts $i\in \B_{k}$.
\end{proof}

Finally, let us show that the elements of the sequence $\alpha_k$ are bounded by the elements of the sequence $d_k$ as follows.
\begin{observation}\label{obs:d_alpha}
For every $k\in \Zz$, we have $\alpha_{k+1} \le 16d_{k}$.
\end{observation}
\begin{proof}
We denote $Z=\left\{\floor{\tfrac{k}{2}},\floor{\tfrac{k-1}{2}}\right\}$ and observe that
 \[\alpha_{k+1}= 1 + \sum_{t=0}^k \floor{d_t} \le 1+\sum_{t=0}^{k}\,\osiemsiedem^{\floor{t/2}} = 1+\sum_{z \in Z}\sum_{t=0}^z \osiemsiedem^{t}
= 1+\sum_{z \in Z} 7\cdot (\osiemsiedem^{z+1}-1)
< 8d_{k}+8d_{k-1} \le 16d_{k}.\qedhere\]
\end{proof}

\section{Details of the Synchronizing Sets Hierarchy Construction}\label{sec:correctness}

In this section, we use the properties of the hierarchy of phrase boundaries $\B_k$ and those of the family $\RUNS_\tau(T)$ of $\tau$-runs to prove that \cref{def:syncfr} yields synchronizing sets of desired size. Moreover, we show how to efficiently build these synchronizing sets.

\begin{lemma}\label{lem:sss}
Let $\S$ be as in \cref{def:syncfr} for a given $\tau\in [1\dd \floor{\frac{n}{2}}]$.
Then, $\S$ is a $\tau$-synchronizing set of size $|\S| < \tfrac{70n}{\tau}$.
\end{lemma}
\begin{proof}
We prove that $\S$ satisfies the two conditions of \cref{def:sss} and analyze the size of $\S$.

\paragraph*{\bf Consistency:}
Suppose that two length-$2\tau$ fragments $x=T[i-\tau \dd i+\tau)$ and $x'=T[j-\tau \dd j+\tau)$ satisfy $x \cong x'$ and $x \in \SF$.
We will show that if $x$ satisfies condition~\ref{it:a}--\ref{it:c} in \cref{def:syncfr}, then $x'$ satisfies the same condition and $x' \in \SF$.

If $x$ satisfies condition~\ref{it:a}, then $i\in \B_{k(\tau)}$, and we need to prove that $j\in \B_{k(\tau)}$. This statement is trivial if $k(\tau)=0$ due to $\B_0 = [1\dd n)$. Otherwise, the statement holds due to \cref{lem:cons} applied for $k=k(\tau)$ because $\alpha_k \le 16d_{k-1}\le \tau$ holds by \cref{obs:d_alpha}, and hence $T[i-\alpha_k\dd i+\alpha_k)\cong T[j-\alpha_k\dd j+\alpha_k)$. Moreover, if $x$ is $\frac{\tau}{3}$-non-periodic, then so is $x' \cong x$.

By the next claim, in conditions~\ref{it:b} and~\ref{it:c} of \cref{def:syncfr}, we do not need to explicitly mention that the respective synchronizing fragment is $\frac{\tau}{3}$-non-periodic. The claim follows from~\cref{fct:uni}.

\begin{claim}\label{fct:bc}
If the first (last) position of a length-$2\tau$ fragment of $T$ is one position to the left (right, respectively) of a $\tau$-run, then the fragment is $\frac{\tau}{3}$-non-periodic.
\end{claim}

If $x$ satisfies condition~\ref{it:b}, then
\[\per(T[i-\tau+1 \dd i])=\per(T[j-\tau+1 \dd j])\le\tfrac{\tau}{3}<\per(T[i-\tau \dd i])=\per(T[j-\tau \dd j]),\]
so there is a $\tau$-run extending the fragment $T[j-\tau+1 \dd j]$ to the right in $T$; therefore, $x'$ satisfies condition~\ref{it:b} and $x' \in \SF$
(cf.~\cref{fct:bc}).

Condition~\ref{it:c} is symmetric to condition~\ref{it:b}; in this case, there is a $\tau$-run extending the fragment $T[j-1 \dd j+\tau-2]$ to the left in $T$.

\paragraph*{\bf Density:}
Consider a position $i\in [0\dd n-3\tau+1]$.
We first show that if $[i\dd i+\tau)\cap \S = \emptyset$, then $\per(T[i\dd i+3\tau-1)) \le \tfrac{\tau}{3}$.

We start by identifying  a $\tau$-run $T[p\dd q)$ with
\[p \le i+\tau-1 \quad \text{and}\quad q\ge i+2\tau. \]
First, suppose that there exists a position $b\in [i+\tau\dd i+2\tau)\cap \B_k$.
Then, $\per(T[b-\tau\dd b+\tau))\le \frac{\tau}{3}$ since otherwise $T[b-\tau\dd b+\tau)$ would have been added to $\SF$ by condition~\ref{it:a},
and thus $b-\tau\in [i\dd i+\tau)$ would have been added to $\S$.
Hence, the run $\run(T[b-\tau\dd b+\tau))$ starts at position $p\le b-\tau \le i+\tau-1$ and ends at position $q\ge b+\tau \ge i+2\tau$.

Next, suppose that $[i+\tau\dd i+2\tau)\cap \B_k = \emptyset$.
Then, $T[i+\tau-1\dd i+2\tau)$ is contained in a single phrase of~$\F_k$.
The length of this phrase is at least $\tau+1$.
If $k=0$, then $\tau+1>1$ contradicts the fact that all phrases in $\F_0$ are of length $1$.
Otherwise, $\tau+1\ge 16d_{k-1} + 1 \ge 14d_k + 1> \frac74 d_k$, so \cref{fct:recompr} yields 
\[\per(T[i+\tau-1\dd i+2\tau))\le d_k < \tfrac{14}{3}d_k \le \tfrac{16}{3}{d_{k-1}} \le \tfrac{\tau}{3}. \]
Now, the $\tau$-run $\run(T[i+\tau-1\dd i+2\tau))$ satisfies the desired requirement.

Note that the fragments $T[p-1\dd p+2\tau-1)$ and $T(q-2\tau\dd q]$ satisfy conditions~\ref{it:b} and~\ref{it:c}, respectively (cf.\ \cref{fct:bc}).
Due to $[i\dd i+\tau)\cap \S = \emptyset$, this implies $p\le i$ and $q\ge i+3\tau-1$,
which means that $\per(T[i\dd i+3\tau-1))\le \frac{\tau}{3}$ holds as claimed.

\medskip
It remains to show, for all $i\in [0\dd n-3\tau+1]$, that
if $\per(T[i\dd i+3\tau-1)) \le \tfrac{\tau}{3}$, then $[i\dd i+\tau)\cap \S = \emptyset$;
equivalently, we need to argue that if $[i\dd i+\tau)\cap \S \ne \emptyset$, then $\per(T[i\dd i+3\tau-1)) > \tfrac{\tau}{3}$.
Indeed, if $s\in [i\dd i+\tau)\cap \S$,
then \[\per(T[i\dd i+3\tau-1))\ge \per(T[s\dd s+2\tau)) > \tfrac{\tau}{3}.\] 

\paragraph*{\bf Size:}
Observe that $|\S| \le |\B_k| + 2|\RUNS_{\tau}(T)|$. 
The second term can be bounded using \cref{fct:uni}:
\begin{claim}
  $|\RUNS_{\tau}(T)| < \frac{3n}{\tau}$.
\end{claim}
\begin{proof}
  By \cref{fct:uni}, distinct $\tau$-runs $\gamma,\gamma'$ satisfy $|\gamma \cap \gamma'|< \frac23\tau$.
  Consequently, each $\tau$-run $\gamma$ contains more than $\frac13\tau$
  (trailing) positions which are disjoint from all $\tau$-runs starting to the left of $\gamma$.
\end{proof}
As for $|\B_k|$, we rely on the upper bound of \cref{lem:recompr2}
and note that $\tau < 16d_k$ holds by the definition of $k=k(\tau)$.
Hence,
\[|\S|\le |\B_k| + 2|\RUNS_{\tau}(T)|
< \tfrac{4n}{d_{k}} + 2\cdot \tfrac{3n}{\tau}
< \tfrac{64n}{\tau}+\tfrac{6n}{\tau}=\tfrac{70n}{\tau}.\qedhere\]
\end{proof}

Before we provide an implementation of \cref{def:syncfr}, we need to make sure that the sets $\RUNS_\tau(T)$ can be built efficiently.
\begin{lemma}\label{lem:tauruns}
  After $\Oh(n)$-time preprocessing of a text $T$ of length $n$,
  given an integer $\tau\in [1\dd n]$, 
  one can construct in $\Oh(\frac{n}{\tau})$ time the set $\RUNS_{\tau}(T)$ (with runs ordered by their start positions). 
\end{lemma}
\begin{proof}
  For every $k\in \Zz$, let us define 
  \[\Runs_k = \{\gamma \in \RUNS(T) : |\gamma|\ge (\tfrac43)^k\text{ and }\per(\gamma)< \tfrac49 (\tfrac43)^k\}.\]
  By \cref{fct:uni}, distinct runs $\gamma,\gamma'\in \Runs_k$ satisfy $|\gamma \cap \gamma'| \le \frac89(\tfrac43)^k$,
  and therefore $|\Runs_k| \le 9\cdot (\tfrac34)^k n$.
  Moreover, note that $\RUNS_\tau(T)\sub \Runs_k$ holds for $k = \lfloor{\log_{4/3} \tau}\rfloor$ due to
  \[\tau \ge (\tfrac43)^k\quad \text{and} \quad \tfrac{\tau}{3}=\tfrac49 (\tfrac43)^{(\log_{4/3}\tau)-1} < \tfrac49(\tfrac43)^k.\]
 
  \paragraph{\bf Preprocessing.}  In the preprocessing phase, the algorithm computes the values $\lfloor{\log_{4/3} \tau}\rfloor$ for $\tau \in [1\dd n]$
  and the sets $\Runs_k$ for $k\in [0\dd \lfloor{\log_{4/3} n}\rfloor]$,
  with runs ordered by their start positions.
  In order to construct the sets $\Runs_k$, the algorithm builds $\RUNS(T)$ using \cref{fct:runs}
  and sorts the runs in $\RUNS(T)$ according to their start positions using bucket sort.
  Then, each $\gamma \in \RUNS(T)$ is added to the appropriate sets $\Runs_k$,
  i.e., whenever $k\in [\lfloor{\log_{4/3} (\frac94 \per(\gamma))}\rfloor\dd \lfloor{\log_{4/3} |\gamma|}\rfloor]$.
  The preprocessing time is therefore $\Oh(n + |\RUNS(T)| + \sum_{k=0}^\infty |\Runs_k|)
  = \Oh(n)$.
 \paragraph{\bf Queries.}  At query time, given an integer $\tau$, the algorithm retrieves $k= \lfloor{\log_{4/3} \tau}\rfloor$
  and iterates over $\gamma \in \Runs_k$, outputting $\gamma$ whenever $\gamma \in \RUNS_{\tau}(T)$.
  The correctness follows from  $\RUNS_\tau(T)\sub \Runs_k$, and the query time is $\Oh(1+|\Runs_k|)
  = \Oh(\frac{n}{\tau})$ due to $|\Runs_k| \le 9\cdot (\tfrac34)^k n = 12\cdot (\tfrac34)^{\lfloor \log_{4/3} \tau\rfloor + 1}n < \frac{12n}{\tau}$.
\end{proof}

We are ready to prove the main result of this section, that is, $\Oh(n)$-time construction of a data structure that allows computing
a $\tau$-synchronizing set in time $\Oh(n/\tau)$, for any $\tau\in [1\dd \floor{\frac{n}{2}}]$.

\thmsss*
\begin{proof}
  In the preprocessing phase, the algorithm constructs the sets $(\B_k)_{k=0}^q$ using \cref{lem:recompr2},
  performs the preprocessing of \cref{lem:tauruns},
  and computes
  $k(\tau) = \max\{j \in \Zz\,:\, j=0 \text{ or }16d_{j-1} \le \tau\}$ for all $\tau \in [1\dd \floor{\tfrac{n}{2}}]$.

  The query algorithm follows \cref{def:syncfr}. Synchronizing fragments satisfying conditions~\ref{it:a}--\ref{it:c}
  are constructed independently (in the left-to-right order) and then the three sorted lists are merged.
  The construction of all three lists relies on the list $\RUNS_\tau(T)$ of $\tau$-runs obtained from \cref{lem:tauruns}. The list is sorted by start positions, but since no $\tau$-run is contained within another $\tau$-run, also by end positions.

  The synchronizing positions satisfying condition~\ref{it:a} are $j-\tau$ for each $j\in \B_{k(\tau)}\cap [\tau\dd n-\tau]$ such that $T[j-\tau\dd j+\tau)$ is not contained in any $\tau$-run,
  i.e., \[\{T[\ell\dd r)\in \RUNS_{\tau}(T)\,:\, \ell \le j-\tau\text{ and } r \ge j+\tau\}\,=\, \emptyset.\]
  To check this condition, the algorithm simultaneously iterates over positions in $\B_{k(\tau)}$ and 
  the list $\RUNS_{\tau}(T)$.

  The positions satisfying condition~\ref{it:b} are $\ell-1$ for every $T[\ell\dd r)\in \RUNS_{\tau}(T)$ with $\ell \in [1\dd n-2\tau+1]$.
  (Recall \cref{fct:bc}.)
  Thus, they can be generated by iterating over the list $\RUNS_{\tau}(T)$.

  Finally, the positions satisfying condition~\ref{it:c} are $r-2\tau+1$ for every $T[\ell\dd r)\in \RUNS_{\tau}(T)$ with $r \in [2\tau-1\dd n-1]$.
  Thus, they can be generated by iterating over the list $\RUNS_{\tau}(T)$.

  Overall, constructing $\S$ costs $\Oh(1+|\B_{k(\tau)}|+|\RUNS_{\tau}(T)|)$ time.
  As observed in the proof of \cref{lem:sss}, each of these terms can be bounded by $\Oh(\frac{n}{\tau})$.
  Finally, the same lemma guarantees that $\S$ is a $\tau$-synchronizing set of size $|\S| < \tfrac{70n}{\tau}$.
\end{proof}

\section{\IPM with $\N$-Patterns}\label{chp:ipm}
\newcommand{\rx}{\hat{x}}

As discussed in \cref{sec:techniques}, in order to support \IPM in the text~$T$, we use a classic idea of pattern matching by deterministic sampling~\cite{DBLP:journals/siamcomp/Vishkin91} in a novel way.
The main trick is to select a consistent family $\R$ of \emph{samples}.
This allows answering \emph{restricted} \IPM, with $x\in \R$, using a relatively simple approach
in $\Oh(|\R|)$ space; see \cref{sec:impl}.

\begin{figure}[h]
  \begin{center}
\begin{tikzpicture}[xscale=.445,yscale=.6,font=\footnotesize]
\draw (3.5,0) rectangle (12.5, 1);
\draw[dashed] (3.5, 0) -- (1.5,0);
\draw[dashed] (3.5, 1) -- (1.5,1);
\draw[dashed] (12.5, 0) -- (14,0);
\draw[dashed] (12.5, 1) -- (14,1);

\draw (3, .5) node[anchor=mid] {$x$:};

\draw[fill=black!10] (5, .2) rectangle  (10, .8);
\draw (7.5, 0.5) node[anchor=mid]{\scriptsize $\rx$};

\draw[latex-latex] (12.5, 1.3) --  (10, 1.3);
\draw[latex-latex] (3.5, 1.3) --  (5, 1.3);

\foreach \x in {4.25, 11.1, 11.25,11.4}
	\draw (\x,1.15) -- (\x,1.45);

\draw[densely dotted] (3.5, 0) -- (3.5, 1.6);
\draw[densely dotted] (5, 0) -- (5, 1.6);
\draw[densely dotted] (10, 0) -- (10, 1.6);
\draw[densely dotted] (12.5, 0) -- (12.5, 1.6);

\begin{scope}[xshift = 14cm]
\draw (3.5,0) rectangle (21.5, 1);
\draw[dashed] (3.5, 0) -- (0,0);
\draw[dashed] (3.5, 1) -- (0,1);
\draw[dashed] (12.5, 0) -- (23,0);
\draw[dashed] (12.5, 1) -- (23,1);

\draw (3, .5) node[anchor=mid] {$y$:};

\foreach \x/\y in {4/0.09, 12/0.17} {
\draw[fill=black!10] (\x+1.5, .25) rectangle  (\x+6.5, .75);
\draw[densely dashed] (\x, \y) rectangle (\x+9, \y+0.75);
\draw[densely dotted] (\x, 0.5) -- (\x, 1.6);
\draw[densely dotted] (\x+1.5, 0.5) -- (\x+1.5, 1.6);
}

\foreach \x/\y in {4/1.3, 12/1.7}{
\foreach \z in {\x, \x+1.5, \x+6.5, \x+9}{
	\draw[densely dotted] (\z, 0.5) -- (\z, \y+0.2);
}

\draw[latex-latex] (\x, \y) --  (\x+1.5, \y);
\draw[latex-latex] (\x+9, \y) --  (\x+6.5, \y);
\foreach \z in {\x+0.75, \x+7.6, \x+7.75,\x+7.9}{
	\draw (\z,\y-0.15) -- (\z,\y+0.15);
}
}

\end{scope}
\end{tikzpicture}
\end{center}
  \caption{\label{fig:repr}
  The main idea of the query algorithm for fragments $x$ and $y$.
  We find the occurrences of $\rx \in \R$ contained in $y$ (depicted as gray rectangles).
  If there is an occurrence $x'$ of $x$ contained in $y$, then $x'$
  can be obtained by extending an occurrence of $\rx$ within $y$.
  Hence, $x'$ must be one of the fragments marked with dashed rectangles. 
  }
  \end{figure}

In the general case, we first select an arbitrary sample $\rx\in \R$ contained in $x$
and search for the occurrences of $\rx$ within $y$. 
Then, our query algorithm checks which occurrences of $\rx$ can be extended to occurrences of $x$; see \cref{fig:repr}.
In order to achieve constant query time with this approach, we need to guarantee that $\rx$ has $\Oh(1)$  occurrences in $y$. 
In general, already $x$ might have $\omega(1)$ occurrences within $y$.
Nevertheless, the following observation can be applied to bound the number of occurrences provided that $x$ does not have a short period.
In particular, if $x\in \N$ and $|y|<2|x|$, then $x$ has at most three occurrences within~$y$.

\begin{fact}[Sparsity of occurrences]\label{fct:far}
  If a substring $u$ occurs in a text $T$ at distinct positions $i,i'$, then $|i-i'|\ge \per(u)$.
  \end{fact}
  \begin{proof}
  If $i,i+d\in\Occ(u,T)$ with $d\in [1\dd |u|]$, then $u[j+d]=T[i+j+d]=T[i+j]=u[j]$ for every $j\in[0\dd |u|-d)$, i.e., $d$ is a period of $u$.
  \end{proof}

\subsection{Selection of Samples}\label{sec:repr}
Let us start with a formal definition based on the discussion above.
\begin{definition}\label{def:repr}
  A family $\R$ of fragments of $T$ is a \emph{samples family} if it satisfies the following conditions:
  \begin{enumerate}[label=(\alph*)]
    \item\label{it:repr:cons} For all fragments $x,x'$ of $T$, if $x\cong x'$ and $x\in \R$, then $x'\in \R$.

\smallskip
    \item\label{it:repr:dens} For every $x\in \N$, there exists a fragment $\rx\in \R$ contained in $x$ such that $\per(\rx)=\Theta(|x|)$.
  \end{enumerate}
\end{definition}

Let us first analyze simple ways to select samples. 
Arguably, the most naive choice is $\R=\N$.
In this case, the correctness is obvious, with each $x\in \N$ being its own sample.
However, the number of samples can only be bounded by $\Oh(n^2)$.

An easy way to dramatically reduce the number of samples is to set $\R = \{\rx \in \N : \log|\rx|\in \mathbb{Z}\}$ so that $|\R|=\Oh(n \log n)$. 
Then, as a sample of $x\in \N$, we can select an arbitrary fragment $\rx\in \N$
of length $2^{\floor{\log |x|}}$ contained in $x$. 
By the following result, such a fragment always exists.
\begin{fact}\label{fct:nonp}
  For every fragment $x\in \N$ and length $\ell\in [1\dd |x|]$,
  there is a fragment $\rx\in \N$ of length $\ell$ contained in $x$.
  \end{fact}
  \begin{proof}
  Suppose that $x[0\dd \ell)\notin \N$.
  Then, there exists a run $\gamma:=\run(x[0\dd \ell))$ with $\per(\gamma)\le \frac13\ell \le \frac13|x| < \per(x)$. 
  Hence, $x\cap \gamma$ is a proper prefix of $x$,
  i.e., $x\cap \gamma = x[0\dd i)$ for some $i\in [\ell\dd |x|)$.
  We then define $y=x(i-\ell\dd i]$.
  If $y\notin \N$, then there would be another run $\gamma':=\run(y)\ne \gamma$ in $x$ with $\per(\gamma')\le \frac13\ell$.
  Now, $|\gamma \cap \gamma'|\ge |x(i-\ell\dd i)| = \ell-1$ contradicts  \cref{fct:uni}.
  \end{proof}

We aim to select just $\Oh(n)$ samples.
For this, we use synchronizing sets of \cref{thm:sss}.
\begin{lemma}\label{lem:repr}
The family $\R = \{T[i\dd i] : i\in [0\dd n)\} \cup \bigcup_{k=1}^{\floor{\log n}} \SF_k$,
where $\SF_k$ is the set of $2^{k-1}$-synchronizing fragments of \cref{thm:sss}, satisfies \cref{def:repr}.
\end{lemma}
\begin{proof}
Let $\S_k$ denote the $2^{k-1}$-synchronizing set containing starting positions of fragments in $\SF_k$.
The consistency property~\ref{it:repr:cons} follows immediately from the 
corresponding property of synchronizing sets.

As for the existence of samples (property~\ref{it:repr:dens}),
let us fix $x\in \N$ and choose $k\in [0\dd \floor{\log n}]$
so that $3\cdot 2^{k-1}-1 \le |x|<  3\cdot 2^k-1$.

If $k=0$, then $|x|=1$, so $x\in \R$ can be chosen as its own sample.

Otherwise, let $z\in \N$ be a fragment contained in $x$ such that $|z|=3\cdot 2^{k-1}-1$; such a fragment $z$ exists due to \cref{fct:nonp}.
Observe that 
\[\per(z) \ge \tfrac13(|z|+1) = 2^{k-1} > \tfrac13 \cdot 2^{k-1}.\]
Let $z=T[i\dd i+|z|)$; by \cref{def:sss}, $\S_k\cap [i\dd i+2^{k-1})\ne \emptyset$,
so we select an arbitrary fragment $\rx:= T[s\dd s+2^k) \in \SF_k$ with $s\in [i\dd i+2^{k-1})\cap \S_k$ as the sample of $x$.
Note that $\rx$ is contained in $z$ and $z$ is contained in $x$.
Moreover, \[ \per(\rx)> \tfrac13\cdot 2^{k-1}= \tfrac{1}{18}\cdot 3\cdot 2^k > \tfrac{1}{18}|x|\]
by the extra condition of \cref{thm:sss}, and thus $\per(\rx)=\Theta(|x|)$ holds as required.
\end{proof}

We conclude with an algorithmic construction based on \cref{lem:repr}.
\begin{proposition}\label{prp:repr}
Given a text $T$ of length $n$, one can in $\Oh(n)$ time construct
a samples family $\R$ along with a data structure that,
given a fragment $x\in \N$,
  in $\Oh(1)$ time reports a sample $\rx\in \R$ contained in $x$ and satisfying $\per(\rx)>\frac{1}{18}|x|$.
Moreover, for each $m\in [1\dd n]$, we have $|\{\rx \in \R : |\rx|\ge m\}|=\Oh(n/m)$.
\end{proposition}
\begin{proof}
The construction builds $2^{k-1}$-synchronizing sets $\S_k$
for $k\in [1\dd \floor{\log n}]$ and the family $\R$ as specified in \cref{lem:repr}.
The synchronizing sets are built using \cref{thm:sss}, which gives $|\S_k| = \Oh(\frac{n}{2^{k-1}})$.
Moreover, the construction time is $\Oh(n + \sum_{k=1}^{\floor{\log n}}\frac{n}{2^{k-1}})=\Oh(n)$.
For every $m\in[1\dd n]$, the number of samples of length $m$ or more
is $\Oh(\sum_{k=\ceil{\log m}}^{\floor{\log n}}\frac{n}{2^{k-1}})=\Oh(n/m)$ as claimed.

It remains to efficiently implement assigning samples
to fragments $x\in \N$, following the approach described in the proof of \cref{lem:repr}.

The case of $|x|=1$, when $x$ is its own sample, does not require any infrastructure. 

To efficiently find samples for $|x|\in [3\cdot 2^{k-1}-1 \dd 3\cdot 2^k-1)$
and $k\in [1\dd \floor{\log n}]$, we
store \[\pred(\S_k,i)=\max\{j \le i : j\in \S_k\}\quad\text{and}\quad
\suc(\S_k,i)=\min\{j > i : j\in \S_k\}\] for each position $i$ divisible by $2^{k-1}$.
The size and construction time of this component is $\Oh(\frac{n}{2^k})$,
which is $\Oh(n)$ in total across all values of $k$.

A query for a sample of $x=T[\ell\dd r)\in \N$ is answered as follows. 
As described in the proof of \cref{lem:repr}, we have $\S_{k}\cap [\ell\dd r-2^{k}] \ne \emptyset$. 
Moreover, $r-\ell=|x|\ge 3\cdot 2^{k-1}-1$ also yields 
\[i:=2^{k-1}\ceil{\frac{\ell}{2^{k-1}}}\,\in\, [\ell\dd r-2^{k}].\]
Consequently, $\pred(\S_k,i)\in [\ell\dd r-2^{k}]$
or  $\suc(\S_k,i)\in [\ell\dd r-2^{k}]$, and the underlying fragment can be reported 
as the sample of $x$.
The query time is constant.
\end{proof}

\subsection{Implementation of the Data Structure}\label{sec:impl}
As outlined at the beginning of this section, to search for the occurrences of $x$ in $y$, we first find
in $y$ the  occurrences of the sample $\rx$ of $x$.
This step is implemented using auxiliary \RIPM specified below.
Next, we apply \LCEQ (see \cref{prop:lce}) to check which occurrences of $\rx$ can be extended to occurrences of $x$;
see also \cref{fig:repr}.

\defdsproblem{\RIPM}{
\textbf{Input}: A text $T$ and a family $\R$ of fragments of $T$.\\
\textbf{Queries}: Given a fragment $x\in \R$ and a fragment $y$ of $T$, report all fragments $x'\in \R$
contained in $y$ and matching $x$.
}

Due to the sparsity of occurrences (\cref{fct:far}),
it is relatively easy to implement \RIPM in $\Oh(\frac{|y|}{\per(x)})$ time using static dictionaries.

\begin{lemma}\label{lem:locator}
For any family $\R$ of fragments of a length-$n$ text $T$, there exists a data structure of size $\Oh(n+|\R|)$
that answers \RIPM in $\Oh(\frac{|y|}{\per(x)})$ time. It can be constructed in $\Oh(n+|\R|\log^2 \log |\R|)$ time
in general and in $\Oh(n+|\R|)$ time if $|x|=\omega^{\Oh(1)}$ for each $x\in \R$, where $\omega$ is the machine word size.
\end{lemma}
\begin{proof}
Given the family $\R$, we construct an identifier function $\id:\R\to\mathbb{Z}$ such that $\id(x)=\id(x')$ if and only if the fragments $x,x'\in \R$ match.
For this, we order the fragments $x=T[\ell\dd r)\in \R$ by the length $|x|$ and the lexicographic rank of the suffix $T[\ell\dd n)$ among the suffixes of $T$ (this rank is the $\ell$th entry of the inverse suffix array of $T$, which can be built in $\Oh(n)$ time~\cite{DBLP:journals/jacm/KarkkainenSB06}).
Matching fragments $x\in\R$ appear consecutively in this order, so we use \LCEQ (see~\cref{prop:lce}) to determine the boundaries between the equivalence classes.

We store two collections of dictionaries.
The first collection allows converting samples to identifiers.
The second collection stores non-empty answers to selected \RIPM.

\paragraph{Dictionaries of identifiers.}
We store the $\id$ function in multiple static dictionaries jointly mapping each fragment $x\in \R$  to the identifier $\id(x)$.
Specifically, for each position $\ell$ in $T$, we store a dictionary $\mathcal{D}(\ell)$ mapping $r$ to $\id(x)$
for every fragment $x=T[\ell \dd r)\in \R$. 

In the general case, we use deterministic dictionaries by Ružić~\cite{DBLP:conf/icalp/Ruzic08},
which provide $\Oh(1)$ query time, take $\Oh(m)$ space, and have $\Oh(m\log^2 \log m)$
construction time, where $m$ is the dictionary size. 
Across all positions $\ell$, this brings the overall space consumption to $\Oh(n+|\R|)$ and the overall construction time to $\Oh(n+|\R|\log^2 \log |\R|)$.
In case of short fragments $|x|=\omega^{\Oh(1)}$, we use fusion trees~\cite{DBLP:conf/focs/PatrascuT14},
which provide $\Oh(1+\log_\omega m)$ query time, take $\Oh(m)$ space, and require $\Oh(m(1+\log_\omega m))$
construction time. Since the number of fragments in $\R$
starting at any given position $\ell$ is $\omega^{\Oh(1)}$ in this case, this brings the overall space consumption and
the overall construction time to $\Oh(n+|\R|)$, whereas the query time is $\Oh(1)$.

\paragraph{Dictionaries of answers to selected queries.}
For each $k\in [0\dd \ceil{\log n}]$, we cover the text $T$ with blocks (fragments) of length $2^{k+1}-1$ with overlaps of length $2^k-1$ (the last block can be shorter)
and denote the resulting family of blocks by $\Y(k)$.
We store the non-empty answers to \RIPM for $x\in \R$ and $y\in \Y(\ceil{\log |x|})$ in multiple static dictionaries.
Specifically, for each integer $k\in [0\dd \ceil{\log n}]$ and each fragment $y\in \Y(k)$,
we store a dictionary $\mathcal{D}'(k,y)$.
For each sample $x \in \R$ that is contained in $y$ and satisfies $\ceil{\log |x|}=k$,
the dictionary $\mathcal{D}'(k,y)$ maps the identifier $\id(x)$ to the answer to a \RIPMOne for $x$ and $y$.
Note that each fragment $x$ is contained in one or two blocks $y\in Y(\ceil{\log |x|})$, 
so each $x\in \R$ appears in $\Oh(1)$ precomputed answers, and thus the total number of dictionary entries is $\Oh(|\R|)$.

In the general case, we use deterministic dictionaries by Ružić~\cite{DBLP:conf/icalp/Ruzic08},
which provide $\Oh(1)$ query time, take $\Oh(n+|\R|)$ space in total, and have $\Oh(n+|\R|\log^2 \log |\R|)$ 
overall construction time.
In case of short patterns, we use fusion trees~\cite{DBLP:conf/focs/PatrascuT14},
which provide $\Oh(1)$ query time, take $\Oh(n+|\R|)$ space in total, and have $\Oh(n + |\R|)$ construction time,
since each individual dictionary is of size $\omega^{\Oh(1)}$ in this case (because each $y\in \Y(k)$ contains $\omega^{\Oh(1)}$ fragments $x\in \R$ with $\ceil{\log |x|}=k$).

\paragraph{Query algorithm.}
To answer a query for $x=T[\ell \dd r) \in \R$ and $y$, we first compute $k=\ceil{\log |x|}$ and $\id(x)$ using $\mathcal{D}(\ell)$.
Next, we use simple arithmetics to obtain $\Oh(\frac{|y|}{|x|})$ blocks $y'\in \Y(k)$ 
that collectively contain all length-$|x|$ fragments contained in $y$.
We take the union of the precomputed answers for the identifier $\id(x)$ in dictionaries $\mathcal{D}'(k,y')$ to obtain a collection of fragments $x'\in \R$ matching $x$ and contained in one of the blocks $y'$.
By \cref{fct:far}, there are $\Oh(\frac{|y|}{\per(x)})$ such fragments $x'$, so we can filter and report those contained in $y$ spending $\Oh(1)$ time on each candidate $x'$.
\end{proof}

\begin{corollary}\label{cor:locator}
  Given the family $\R$ of samples of a length-$n$ text $T$
  constructed through \cref{prp:repr},
  one can in $\Oh(n)$ time construct a data structure that answers \RIPM
  in $\Oh(\frac{|y|}{\per(x)})$ time.
\end{corollary}
\begin{proof}
We store two instances of the data structure of \cref{lem:locator}: the first instance,
for fragments of length more than $\omega$, contains $\Oh(\frac{n}{\omega})$ samples,
and thus its construction time is $\Oh(n+\frac{n}{\omega}\log^2 \log \frac{n}{\omega})=\Oh(n)$.
The instance for the remaining fragments, of length at most $\omega$, contains $\Oh(n)$
samples, and thus its construction time is also $\Oh(n)$.
\end{proof}

We conclude with a full description of the data structure for \IPM for $\N$-patterns.
\begin{proposition}\label{prp:ipm}
  For every text $T$ of length $n$, there exists a data structure of size $\Oh(n)$ which answers \IPM in $\Oh(1)$ time provided that $x\in \N$. The data structure can be constructed in $\Oh(n)$ time.
\end{proposition}
\begin{proof}
  The core of our data structure is the samples family $\R$, constructed
  using \cref{prp:repr} along with a component for efficiently selecting a sample,
  plus the data structure answering \RIPM for $\R$, constructed using \cref{cor:locator}.
  Additionally, we include a data structure for \LCEQ (\cref{prop:lce}).
  Each of these ingredients takes $\Oh(n)$ space and $\Oh(n)$ time to build. 
  
  The query algorithm for fragments $x\in \N$ and $y$ works as follows.
  First, we use \cref{prp:repr} to obtain in $\Oh(1)$ time a sample $\rx\in \R$ contained in $x$ and satisfying $\per(\rx)=\Theta(|x|)$.
   Next, we query the component of \cref{cor:locator} to find all occurrences of $\rx$ contained in $y$;
   this takes $\Oh(\frac{|y|}{\per(\rx)})=\Oh(1)$ time.
  As a result, we obtain a constant number of candidate positions where $x$ may occur in $y$.
  We verify them using \LCEQ in $\Oh(1)$ time each.
  Recall that the occurrences of $x$ in $y$ form an arithmetic progression (\cref{fct:single}).
  \maybeqed \end{proof}

\section{\IPM with $\HP$-Patterns}\label{chp:per}

Our approach to \IPM with $\HP$-patterns relies on the structure of $\HP$-runs in the text. In order to answer a query, we look for runs that may arise as $\run(x')$ for the occurrences $x'$ of $x$ within $y$.
Due to the assumption $|y|<2|x|$, all these runs contain the middle position of $y$  (the position $T\big[\floor{\frac{\ell+r}{2}}\big]$ if  $y=T[\ell\dd r)$).
All such runs need to have length at least $|x|$ and period at most $\per(x)\le \frac13|x|$, which allows us to show that there are $\Oh(1)$ such runs.
In \cref{sec:finder}, we develop a component listing such runs in $\Oh(1)$ time after $\Oh(n)$-time preprocessing.

 Next, we look for the occurrences of $x$ contained in each candidate run $\gamma$.
 For $x$ to have any occurrence in $\gamma$, the periods of $x$ and $\gamma$ need to be equal;
 furthermore, the \emph{string periods} of $x$ and $\gamma$---that is, the prefixes of the two fragments of length equal to their periods---need to be cyclic rotations of each other.
 We then say that $x$ and $\gamma$ are \emph{compatible}.
 To check this condition, we compare the lexicographically minimal rotations of the string periods, called \emph{Lyndon roots}.
 We use techniques originating from a paper by Crochemore et al.~\cite{DBLP:journals/tcs/CrochemoreIKRRW14}, listed in \cref{sec:comp},
 to check compatibility of substrings of $T$ and list occurrences of an $\HP$-pattern in a compatible $\HP$-text;
 in this case, the set of occurrences forms an arithmetic progression.
 
 The query algorithm is described in \cref{sec:per_sum}.
 In brief, for each run $\gamma$ in $y$ that is compatible with~$x$, we can find all occurrences of $x$ in $y\cap \gamma$ using the toolbox of~\cite{DBLP:journals/tcs/CrochemoreIKRRW14} (recalled in \cref{sec:comp}).
 We may obtain $\Oh(1)$ arithmetic progressions representing the occurrences of $x$ in $y$,
 but \cref{fct:single} guarantees that they can be merged into a single progression.

\subsection{Special $\HP$-Runs}\label{sec:finder}
If $x$ is an $\HP$ fragment, then \cref{obs:char} yields the following useful characterization of $\run(x)$.

\begin{observation}
Consider a text $T$ and an $\HP$-fragment $x$ with $k=\lfloor \log |x|\rfloor$. 
Then, $\gamma=\run(x)$ satisfies the following condition:
\begin{equation}\label{eq:obs:sprun}
\per(\gamma)< \tfrac13 \cdot 2^{k+1},\qquad |\gamma|\ge 2^k,\quad\text{and}\quad\text{$\gamma$ is a $\HP$-run}.
\end{equation}
\end{observation} 

For a positive integer $k$, we say that a run $\gamma$ is \emph{$k$-special} if 
it satisfies condition \eqref{eq:obs:sprun} above.
Denote by $\SPEC_k(i)$ the set of $k$-special runs covering position $i$ in $T$.
Below, we develop a data structure for efficiently answering the following queries:

\defdsproblem{\textsc{Special Run Locating Queries}}{
Given a position $i$ of $T$ and an integer $k\in [0\dd \floor{\log n}]$, 
compute $\SPEC_k(i)$.
}

We first prove that the answers must be of constant size. Then, we develop a component for answering \textsc{Special Run Locating Queries} based on the fact that, even though there are $\Theta(n\log n)$ possible queries, it suffices to precompute answers to $\Oh(n)$ of them.
\begin{lemma}\label{lem:atmost5}
$|\SPEC_k(i)|\le 5$ 
for every integer $k\in [0\dd \floor{\log n}]$ and position $i$ in $T$.
\end{lemma}
\begin{proof}
For a proof by contradiction, suppose that there are at least six such runs $\gamma_j = T[\ell_j\dd r_j)$
with $p_j = \per(\gamma_j)$ for $j\in [1\dd 6]$ and $\ell_1\le \cdots \le \ell_6$.
For each $j\in [1\dd 6)$, \cref{fct:uni} yields 
\[|\gamma_j\cap \gamma_{j+1}|< p_j + p_{j+1} \le \tfrac13(2^{k+1}+|\gamma_{j+1}|)\le |\gamma_{j+1}|,\]
 so $\gamma_{j+1}$ is not contained in $\gamma_j$.
Observe also that $|\gamma_j|-2p_j\ge \tfrac13 |\gamma_j|\ge \tfrac13\cdot 2^k$.
Consequently, 
\begin{multline*}
\ell_{6}-\ell_1 \,=\, \sum_{j=1}^{5}(\ell_{j+1}-\ell_j) \,=\, \sum_{j=1}^5 (|\gamma_j| - |\gamma_j\cap \gamma_{j+1}|)\,>\, 
\sum_{j=1}^5 (|\gamma_j| - (p_j+p_{j+1}))\,=\,\\
 |\gamma_1|-(p_1+p_6) + \sum_{j=2}^5 (|\gamma_j| -2 p_j) \ge\; |\gamma_1| - 2\cdot \tfrac13 \cdot 2^{k+1} + \sum_{j=2}^5 \tfrac13\cdot 2^k
\,=\, |\gamma_1| - \tfrac23\cdot 2^{k+1} + \tfrac43\cdot 2^k \,=\, |\gamma_1|.
\end{multline*}
We derived $\ell_{6}-\ell_1   > |\gamma_1|$, which contradicts $\gamma_1\cap \gamma_6 \ne \emptyset$.
\end{proof}

\begin{lemma}\label{lem:run_finder}
  For every text $T$, there exists a data structure that answers \textsc{Special Run Locating Queries} in $\Oh(1)$ time,
  takes $\Oh(n)$ space, and can be constructed in $\Oh(n)$ time.
\end{lemma}
\begin{proof}
  For every integer $k\in [0\dd \floor{\log n}]$, the data structure contains the precomputed answers for all positions $i$ divisible by $2^k$.
  Each of these answers takes $\Oh(1)$ space by \cref{lem:atmost5}, so the total size of the data structure is $\Oh(n)$.

  As for the query algorithm, we note that if a $k$-special run $\gamma$ covers position $i$, then,
  due to $|\gamma|\ge 2^{k}$, it also covers position $2^k \floor{\frac{i}{2^k}}$ or $2^k \ceil{\frac{i}{2^k}}$.
  Thus, the query algorithm retrieves the answers for these two positions and, among the obtained $k$-special runs,
  reports those covering position $i$. The query time is constant by \cref{lem:atmost5}.

  As for the construction algorithm, we enumerate all runs using \cref{fct:runs}.
  For each $\HP$-run $\gamma=T[\ell\dd r)$ and each $k\in [\floor{\log(3\per(\gamma))}\dd \floor{\log |\gamma|}]$,
  we append $\gamma$ to the precomputed answers for all positions $i\in [\ell\dd r)$ that are multiples of $2^k$.
  Since there is at least one such position for each considered pair $(\gamma,k)$, the running time of this process
  is proportional to the time complexity of the algorithm of \cref{fct:runs} plus the total size of the precomputed answers, both of which are $\Oh(n)$.
\end{proof}

\subsection{Compatibility of Strings and Runs}\label{sec:comp}
A primitive string $w\in \Sigma^+$ is called a \emph{Lyndon word}~\cite{Lyndon1954,chen1958free} if $w\preceq w'$ for every rotation $w'$ of $w$.
Let $u$ be a string with the smallest period $\per(u)=p$.
The \emph{Lyndon root} $\lambda$ of $u$ is the lexicographically smallest rotation of the prefix $u[0\dd p)$.
We say that two strings are \emph{compatible} if they have the same Lyndon root.

A string $u$ with Lyndon root $\lambda$ can be uniquely represented as $\lambda'\lambda^k \lambda''$, where $\lambda'$ is a proper suffix of $\lambda$,
 $\lambda''$ is a proper prefix of $\lambda$, and $k\in \integ_{\ge 0}$ is a non-negative integer.
The \emph{Lyndon signature} of $u$ is defined as $(|\lambda'|,k,|\lambda''|)$.
Note that the Lyndon signature uniquely determines $u$ within its compatibility class.
This representation is very convenient for pattern matching if the text is compatible with the pattern.

\begin{lemma}\label{lem:lyndon}
Let $x$ and $y$ be compatible strings.
The set of positions where $x$ occurs in $y$ is an arithmetic progression that
can be computed in $\Oh(1)$ time given the Lyndon signatures of $x$ and $y$.
\end{lemma}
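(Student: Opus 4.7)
Let $\lambda$ denote the common Lyndon root and set $p=|\lambda|$. Writing the Lyndon representations as $x=\lambda_1\lambda^{k_x}\lambda_2$ and $y=\lambda_3\lambda^{k_y}\lambda_4$ (with $\lambda_1,\lambda_3$ proper suffixes and $\lambda_2,\lambda_4$ proper prefixes of $\lambda$), both $x$ and $y$ inherit period $p$, so their characters admit the closed forms
\[
 x[j]=\lambda\bigl[((j-|\lambda_1|-1)\bmod p)+1\bigr],\qquad
 y[j]=\lambda\bigl[((j-|\lambda_3|-1)\bmod p)+1\bigr].
\]
My plan is to prove that the set of starting positions of occurrences of $x$ in $y$ equals
\[
 \bigl\{\,i\in[1,|y|-|x|+1]\ :\ i\equiv |\lambda_3|-|\lambda_1|+1 \pmod p\,\bigr\},
\]
which is visibly an arithmetic progression of common difference $p$, and whose first and last elements can be obtained from $|\lambda_1|,|\lambda_3|,|x|,|y|,p$ by a couple of modular arithmetic operations in $O(1)$ time. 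Note that $|x|$ and $|y|$ are immediate from the Lyndon representations, and $p$ is determined by the length of the Lyndon root (which is identified by either representation).

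For \emph{sufficiency}, if $i$ lies in the displayed set then plugging the closed forms above into $y[i+j-1]=x[j]$ reduces the required equality to the single congruence $i\equiv |\lambda_3|-|\lambda_1|+1\pmod p$, so the match holds for every $j\in[1,|x|]$. This is a routine computation and uses only that both factors lie in the $\lambda$-periodic environment.

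The main obstacle is the \emph{necessity} direction: matching characters could in principle occur ``out of phase'' through coincidental repetitions in $\lambda$. Here I invoke the Synchronization Property (\cref{lem:sync}): since $\lambda$ is primitive, the only occurrences of $\lambda$ inside any window of $y$ of length $2p$ (such a window exists around every occurrence of $\lambda$ in $y$ whenever $|y|\ge 2p$, which holds because $y$ is periodic) are the two canonical ones. A standard argument then shows that the occurrences of $\lambda$ in $y$ are exactly the positions $|\lambda_3|+1+jp$ for $j=0,\ldots,k_y-1$. Because $x$ is periodic we have $k_x\ge 1$, so $x$ contains at least one occurrence of $\lambda$ at position $|\lambda_1|+1$; an occurrence of $x$ at position $i$ in $y$ maps this $\lambda$-occurrence to a $\lambda$-occurrence in $y$ at position $i+|\lambda_1|$, which by the above must be congruent to $|\lambda_3|+1\pmod p$. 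This yields precisely the required congruence on $i$.

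Finally, combining necessity and sufficiency, the occurrence set is exactly the arithmetic progression above. Computing its first element amounts to finding the smallest $i\ge 1$ with $i\equiv |\lambda_3|-|\lambda_1|+1\pmod p$, and the last element is the largest such $i\le |y|-|x|+1$; both are obtained in constant time, giving the claimed $O(1)$-time algorithm.
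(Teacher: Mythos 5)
Your proposal is correct and follows essentially the same route as the paper: the Synchronization Property pins the candidate starting positions to a single residue class modulo $|\lambda|$ (necessity), and the shared $\lambda$-periodic structure makes every such in-range position an actual occurrence (sufficiency). Your write-up is merely more explicit about the closed-form character formulas and the fact that $k_x\ge 1$.
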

\begin{proof}
Let $\lambda$ be the common Lyndon root of $x$ and $y$
and let their Lyndon signatures be $(p, k, s)$ and $(p', k', s')$ respectively.
\cref{lem:synchr} (synchronization property) implies that $\lambda$ occurs in $y$
only at positions $i$ such that $i\equiv p'\pmod{|\lambda|}$.
Consequently, $x$ occurs in $y$ only at positions $i$ such that $i\equiv p'-p\pmod{|\lambda|}$.
Clearly, $x$ occurs in $y$ at all such positions $i$ within the interval $[0\dd |y|-|x|]$.
Therefore, it is a matter of simple calculations to compute the arithmetic progression of these positions.
\maybeqed \end{proof}

Crochemore et al.~\cite{DBLP:journals/tcs/CrochemoreIKRRW14} showed how to efficiently compute
Lyndon signatures of the runs of a~given~text.

\begin{fact}[Crochemore et al.~\cite{DBLP:journals/tcs/CrochemoreIKRRW14}]\label{fct:compat}
There exists an algorithm that, given a text $T$ of length~$n$, in $\Oh(n)$ time
computes Lyndon signatures of all runs in $T$.
\end{fact}

Finally, we note that the Lyndon signature of a periodic fragment $x$ 
can be derived from the Lyndon signature of $\run(x)$.
\begin{observation}\label{obs:compat}
Let $u$ be a fragment of a periodic string $w$ such that  $|u|\ge 2\per(w)$.
Then, $u$ is compatible with $w$.
Moreover, given the Lyndon signature of $w$, one can compute the Lyndon signature of $u$ in $\Oh(1)$ time.
\end{observation}

\subsection{Answering Queries}\label{sec:per_sum}

Our data structure consists of the set of runs $\RUNS(T)$ (\cref{fct:runs}), with each run accompanied by its period and Lyndon signature (\cref{fct:compat}), the data structure for \LCEQ (\cref{prop:lce}), and the data structure of \cref{lem:run_finder} for \textsc{Special Run Locating Queries}.
The entire data structure takes $\Oh(n)$ space and can be constructed in $\Oh(n)$ time.

As outlined at the beginning of \cref{chp:per}, the query algorithm consists of the following steps:

\begin{center}
\begin{minipage}{14cm}
\noindent {\bf Algorithm} answering \IPM with $\HP$-patterns

  \begin{enumerate}[label=(\Alph*)]
    \item\label{step:A} Compute the Lyndon signature of $x$.
    \item\label{step:B} Find all $\floor{\log |x|}$-special runs $\gamma$ containing the middle position of $y$,
      defined as $T[\lceil{\frac{\ell+r}{2}}\rceil]$ for $y=T[\ell\dd r)$, along with their Lyndon signatures.
    \item\label{step:C} Filter out runs $\gamma$ incompatible with $x$.
    \item\label{step:D}
	  For each of the compatible runs $\gamma$, compute an arithmetic progression representing the occurrences of $x$ in $y\cap \gamma$.
	  \item\label{step:E}
      Combine the resulting occurrences of $x$ in $y$ into a single arithmetic progression.
  \end{enumerate}
  \end{minipage}
\end{center}

\paragraph*{Correctness.} Clearly, a fragment matching $x$ (and contained in $y$) starts at each of the reported positions. 
It remains to prove that no occurrence $x'$ is missed.
Since $|y|<2|x|$, each occurrence $x'$ of $x$ in $y$ contains the middle point of $y$.
Therefore, $\gamma=\run(x')$ is among the runs found in step~\ref{step:B}.
By \cref{obs:compat}, $\gamma$~is compatible with $x'$ and, since $x$ and $x'$ match, $\gamma$ must be compatible with $x$. Hence, $\gamma$ is considered in step~\ref{step:D} and the starting position of $x'$ is reported in step~\ref{step:E}.

\paragraph*{Implementation.}
In step~\ref{step:A}, we use a \textsc{Special Run Locating Query} to list all $\floor{\log |x|}$-special runs containing the first position of $x$, and then we check if any of these runs contains the whole $x$ and has period not exceeding $\frac13|x|$. If so, this run is $\run(x)$ by \cref{obs:char}; otherwise, we raise an error to indicate that $x\in \N$.
We then use \cref{fct:compat,obs:compat} to retrieve the Lyndon signature of $\run(x)$ and $x$, respectively.
In step~\ref{step:B}, we use another \textsc{Special Run Locating Query} to identify all  $\floor{\log |x|}$-special runs $\gamma$ containing the middle position of $y$. \Cref{lem:atmost5} guarantees that we obtain at most five runs $\gamma$.
In step~\ref{step:C}, for each run $\gamma$, we use the Lyndon signatures 
to identify the Lyndon roots of $x$ and $\gamma$, represented as fragments of $T$,
and we ask an \textsc{LCE Query} to check if the Lyndon roots match.

 For the remaining (compatible) runs $\gamma$, we apply \cref{lem:lyndon} to find the occurrences of $x$ in $y\cap \gamma$.
 There is nothing to do if $|x|>|y\cap \gamma|$. 
Otherwise, $|y\cap\gamma|\ge |x|\ge 3\per(x)= 3\per(\gamma)$, so \cref{obs:compat} lets us retrieve the Lyndon signature of $y\cap \gamma$.
 We are left with at most five arithmetic progressions, one for each compatible run $\gamma$.
 As argued above, their union represents all occurrences of $x$ in $y$.
 By \cref{fct:single}, this set must form a single arithmetic progression.
 If the progressions are stored by (at most) three elements---the last one and the first two---it is easy to compute the union in constant time.

The above query algorithm also checks if the fragment $x$ is highly periodic. 
This concludes the proof of the following result:

\begin{proposition}\label{thm:per}
For every text $T$ of length $n$, there exists a data structure of size $\Oh(n)$ which answers \IPM in $\Oh(1)$ time provided that $x\in \HP$,
and reports an error whenever $x\in \N$.
The data structure can be constructed in $\Oh(n)$ time.
\end{proposition}

Combining \cref{thm:per} with \cref{prp:ipm}, we obtain an efficient data structure for \IPM over integer alphabets of polynomial size.
\begin{theorem}\label{thm:ipm0}
  For every text $T$ of length $n$, there exists a data structure of size $\Oh(n)$ which answers \IPM in $\Oh(1)$ time.
  The data structure can be constructed in $\Oh(n)$ time.
\end{theorem}

In the following section, we show how the data structure can be improved in the case of a small alphabet.

\section{\IPM in Texts over Small Alphabets}\label{sec:packed}
In this section, we assume that $\sigma \le \sqrt[19]{n}$; otherwise, \cref{thm:ipm} follows directly from \cref{thm:ipm0}.
We transform the string $T$ into a string $T'$ of length $\Oh(n/\tau)$, where $\tau = \floor{\frac1{19} \log_\sigma n}$.
Then, each \IPMOne in $T$ is reduced to a constant number of \IPM in $T'$.
Without loss of generality, we assume that $T$ starts and ends with unique characters to avoid degenerate cases.

A naive idea to construct the string $T'$ would be to partition $T$ into blocks of length $\tau$ and interpret each block as an integer with $\tau\cdot \ceil {\log \sigma}= \Oh(\log n)$ bits.
Unfortunately, this approach is not helpful: even if a pattern fragment $x$ has an occurrence in a text fragment $y$ of $T$,
the longest fragment of $x$ consisting of full blocks may have no ``aligned'' occurrence in the longest fragment of $y$ consisting of full blocks.
Therefore, we partition the string into blocks using the elements of a $\tau$-synchronizing set, which can be constructed in $\Oh(n / \log_\sigma n)$ time for the aforementioned value of $\tau$; see \cref{prp:synch}.

\subsection{Constructing Data Structure}
We use a $\tau$-synchronizing set 
$\S = \{s_0,\ldots, s_{m-1}\}$ of $T$, where $s_0 < \cdots < s_{m-1}$, and the corresponding set of synchronizing fragments $\SF=\{f_0,\ldots,f_{m-1}\}$, where
$f_i=T[s_i \dd s_i+2\tau)$. 
The aforementioned assumption on $T$ implies $s_0 \in [0\dd \tau)$ and $s_{m-1} \in (n-3\tau\dd n-2\tau]$; in particular, $\S \ne \emptyset$.

Denote $\Delta_i=s_{i+1}-s_i$ for $i\in [0\dd m-1)$.
We construct a length-$(2m-1)$ string $T'$ such that 
\[\begin{aligned}
T'[2i] &= f_i & \text{for }i\in [0\dd m),\\
T'[2i+1] &= \Delta_i & \text{for }i\in [0\dd m-1).
\end{aligned}
\]
Every character of $T'$ is either an integer in $[0\dd n]$ or a length-$2\tau$ substring of $T$ (which can be interpreted as an integer with $2\tau \ceil{\log n} = \Oh(\log n)$ bits); see \cref{fig:jak_chcesz}.

\begin{figure}[h]
  \centering
  \begin{tikzpicture}[xscale=0.63,yscale=0.74]
  \def\tauW{1.5}
  \def\perW{1.2}
  \def\boxH{0.5}

  \node at (0,0.5) [left] {$T=$};

  \def\curr{0}
  \foreach [count=\i]
    \w/\c
    in {1/A, 1/B, 2/C, 2/D, 2/C, 6/C, 4/D, 1/E} {
    \coordinate (l\i) at (\curr,0);
    \coordinate (b\i) at (\curr+\w,0);
    \coordinate (r\i) at (\curr+\w,0.8);
    \xdef\curr{\number\numexpr\curr+\w\relax}
  }
  \draw[thick] (l1) rectangle (r7) node[midway] {};  

  \begin{scope}[xshift=7cm,yshift=0.8cm]
    \clip (-\perW,0) rectangle (8.7,0.3);
    \foreach \dx in {0,...,10}{
      \coordinate (c\dx) at (\dx*\perW,0);
      \draw (\dx*\perW, 0) sin ((\dx*\perW+0.5*\perW, 0.3) cos ((\dx*\perW+1*\perW, 0);
    }
    \coordinate (b10) at ($(c0)+(-0.4+\tauW,-0.8)$);
    \coordinate (b11) at ($(c7)+(+0.5-\tauW,-0.8)$);
  \end{scope}

  \foreach \i/\y in {3/2, 2/1, 4/3} {
    \coordinate (bl\i) at ($(b\i)+(-\tauW,-\y)$);
    \coordinate (bll\i) at ($(b\i)+(-\tauW+0.2,-\y+0.2)$);
    \coordinate (br\i) at ($(b\i)+(\tauW,-\y)$);
    \coordinate (bc\i) at ($(b\i)+(0,-\y)$);
    \coordinate (bu\i) at ($(b\i)+(0,-\y+\boxH)$);

    \draw[thin, blue, -latex] (bll\i) -- (l\i -| bll\i);

    \draw[fill=white] (bl\i) rectangle ($(br\i)+(0,\boxH)$);
  }

  \foreach \i/\y in {10/1, 11/2} {
    \coordinate (bl\i) at ($(b\i)+(-\tauW,-\y)$);
    \coordinate (bll\i) at ($(b\i)+(-\tauW+0.2,-\y+0.2)$);
    \coordinate (br\i) at ($(b\i)+(\tauW,-\y)$);
    \coordinate (bc\i) at ($(b\i)+(0,-\y)$);
    \coordinate (bu\i) at ($(b\i)+(0,-\y+\boxH)$);

    \draw[thin, blue, -latex] (bll\i) -- (l1 -| bll\i);

    \draw[fill=white] (bl\i) rectangle ($(br\i)+(0,\boxH)$);
  }
  \draw[densely dotted] (br11)--($(br11 |- r1)+(0,0.2)$);

  \node[blue, above] at (bll2 |- b1) {\small $s_0$};
  \node[blue, above] at (bll3 |- b1) {\small $s_1$};
  \node[blue, above] at (bll4 |- b1) {\small $s_2$};
  \node[blue, above] at (bll10 |- b1) {\small $s_3$};
  \node[blue, above] at (bll11 |- b1) {\small $s_4$};

  \node[red, above] (j) at ($(bll4 |- b1) + (-1cm,0)$) {$j$};
  \draw[violet,-latex, bend left=45] (j |- r1) to (bll4 |- r1);
  \node[above] at ($(bll4 |- r1) + (-0.5cm,0.4cm)$) {\small $\BLOCK(j)=2$};

  \node[blue,yshift=0.17cm] at (bc2) {\small $f_0$};
  \node[blue,yshift=0.17cm] at (bc3) {\small $f_1$};
  \node[blue,yshift=0.17cm] at (bc4) {\small $f_2$};
  \node[blue,yshift=0.17cm] at (bc10) {\small $f_3$};
  \node[blue,yshift=0.16cm] at (bc11) {\small $f_4$};

\begin{scope}[yshift=-4cm,xshift=-2cm]
\node at (0,0.05) {$T'=$};
\foreach [count=\i] \l in {$f_0$, $f_1$, $f_2$, $f_3$, $f_4$} {
  \coordinate (q\i) at (-3.2cm+4.2cm*\i,0);
  \draw ($(q\i)+(0, -\boxH*0.5)$) rectangle +(2*\tauW,\boxH) node[midway,blue] {\small \l};
}
\foreach [count=\i] \l in {\Delta_0, \Delta_1, \Delta_2, \Delta_3} {
  \node at ($(q\i)+(2*\tauW,0)+(.6cm,0)$) {\small $,\textcolor{blue}{\l},$};
}
\end{scope}

\end{tikzpicture}
  \caption{A schematic view of the transformation from $T$ to $T'$, assuming $\S=\{s_0,\ldots,s_4\}$ and $\Delta_i=s_i-s_{i-1}$.
  The $\BLOCK$ operation returns the index of the nearest synchronizing position if it exists; see below.}\label{fig:jak_chcesz}
\end{figure}

We say that a fragment of $T$ is \emph{regular} if it is
of the form $T[s_i\dd s_j+2\tau)$ for $0 \le i \le j < m$. For such a fragment $z$, we denote the fragment $\code(z)=T'[2i\dd 2j]$.
\begin{fact}\label{fct:tp}
If $z$ and $z'$ are regular fragments, then \[z\cong z'\Longleftrightarrow \code(z)\cong \code(z').\]
\end{fact}
\begin{proof}
Let $z=T[s_i\dd s_j+2\tau)$ and  $z'=T[s_{i'}\dd s_{j'}+2\tau)$.

\paragraph*{(Implication $\Rightarrow$)}
Suppose that $T[s_i\dd s_j+2\tau)\cong T[s_{i'}\dd s_{j'}+2\tau)$.
We conclude from the consistency property of $\S$ that $j-i=j'-i'$ and $(\Delta_i,\ldots,\Delta_{j-1})=(\Delta_{i'},\ldots,\Delta_{j'-1})$.
This implies that $(f_i,\ldots,f_j)\cong (f_{i'},\ldots,f_{j'})$.
Hence, $T'[2i\dd 2j]\cong T'[2i'\dd 2j']$ holds as claimed.

\paragraph*{(Implication $\Leftarrow$)}
Suppose that $T'[2i\dd 2j]\cong T'[2i'\dd 2j']$.
Equivalently, we have $(f_i,\ldots,f_j)\cong (f_{i'},\ldots,f_{j'})$ and $(\Delta_i,\ldots,\Delta_{j-1})=(\Delta_{i'},\ldots,\Delta_{j'-1})$.
To obtain $T[s_i\dd s_j+2\tau)\cong T[s_{i'}\dd s_{j'}+2\tau)$, we only need to show that if $\Delta_{a}=\Delta_{a'}>2\tau$ holds for some $a$, $a'$ such that $0 \le a-i = a'-i' \le j-i$, then $T[s_{a} \dd s_{a+1}) = T[s_{a'} \dd s_{a'+1})$.
In this case, by the density property of $\S$, we have \[p:=\per(T(s_{a}\dd s_{a+1}+2\tau-1))\le \tfrac13\tau\quad\text{and}\quad
p':=\per(T(s_{a'}\dd s_{a'+1}+2\tau-1))\le \tfrac13\tau.\]
Since $T[s_{a}\dd s_{a}+2\tau)=f_a \cong f_{a'}=T[s_{a'}\dd s_{a'}+2\tau)$,
\cref{obs:char} implies $p=p'$ and therefore $T(s_{a}\dd s_{a+1}+2\tau-1)\cong T(s_{a'}\dd s_{a'+1}+2\tau-1)$.
As $f_a \cong f_{a'}$ also yields $T[s_a]=T[s_{a'}]$, this concludes the proof.
\end{proof}

Let $s_m=n$ and $s_{-1}=0$ be sentinels. 
The sequence $s_{-1},s_0,\ldots,s_{m-1},s_m$ of synchronizing positions
together with the sentinels partitions $[0\dd n)$ into
intervals called \emph{blocks}: 
\[[s_{-1} \dd s_0),\;[s_0\dd s_1),\; [s_1\dd s_2),\;
[s_2\dd s_3),\ldots, [s_{m-1}\dd s_m).\]
We define the following operation mapping each position $j\in [1\dd n)$ to the index of the block it belongs to;
see \cref{fig:jak_chcesz}:
\[\BLOCK(j)=i\quad \text{such that}\quad j\in [s_{i-1}\dd s_i).\]
We build a length-$n$ bitmask representing $\S$.
Then the $\BLOCK(j)=|\{i\in [0\dd m) : s_i \le j\}|$ operation can be viewed as a $\rank$ query on this bitmask.
These queries can be answered in $\Oh(1)$ time using a data structure of size $\Oh(n/\log n)$
that can be constructed in $\Oh(|\S|+n/\log n)=\Oh(n/\log_\sigma n)$ time~\cite{DBLP:conf/focs/Jacobson89,WaveletSuffixTree,DBLP:journals/tcs/MunroNV16}.

Recall that a run $\gamma$ is a $\tau$-run if $|\gamma| \ge \tau$ and $\per(\gamma) \le \frac13 \tau$.
We say that a $\tau$-run $\gamma$ is \emph{long} if $|\gamma| \ge 3\tau-1$ and use the following proposition for the considered $\tau$.

\begin{proposition}[{\cite[Section 6.1.2]{Kempa2019}}]\label{prp:taurons}
For a positive integer $\tau$, a string $T \in [0 \dd \sigma)^n$ contains $\Oh(n/\tau)$ long $\tau$-runs.
Moreover, if $\tau \le \frac19 \log_\sigma n$, we can compute all long $\tau$-runs in
$T$, compute their Lyndon signatures, and group the long $\tau$-runs by their Lyndon roots in $\Oh(n/\tau)$ time.
\end{proposition}

Let us note that a $\tau$-run is $\floor{\log\tau}$-special.
Hence, each position in $T$ belongs to at most five $\tau$-runs (\cref{lem:atmost5}).
Moreover, we can locate long $\tau$-runs using the same data structure as in \cref{lem:run_finder}, but constructed only for
$k=\floor{\log\tau}$ (and using \cref{prp:taurons} to compute all long $\tau$-tuns):
\begin{lemma}\label{lem:run_finder2}
  For every text $T$, there exists a data structure that can compute all long $\tau$-runs containing a given position in $\Oh(1)$ time,
  takes $\Oh(n/\tau)$ space, and can be constructed in $\Oh(n/\tau)$ time.
\end{lemma}

\noindent
We also perform preprocessing for \LCEQ (\cref{prop:lce}).

Finally, for every pair $(x,y)$ of strings such that $|x| < 8\tau$ and $|y| < 10\tau$, we precompute
the set of occurrences of $x$ in $y$, represented as an arithmetic progression.
The total number of such pairs $(x,y)$ is $\Oh(\sigma^{18\tau})=\Oh(n^{18/19})$, and for each such pair,
the output can be computed in $\Oh(\tau^{\Oh(1)})=n^{o(1)}$ time, so this preprocessing can be performed
in $\Oh(n/\tau)$ time.

\subsection{Answering Queries} 
We denote by $\MaxReg(u)$ the longest regular fragment 
contained in the fragment $u$ of $T$ and denote  $\Phi(u)=\code(\MaxReg(u))$.
Note that, for any two matching fragments $u\cong u'$, we have $\MaxReg(u) \cong \MaxReg(u')$ (by consistency of $\S$)
and $\Phi(u)\cong \Phi(u')$ (by \cref{fct:tp}); see \cref{fig:zz1}.

\begin{figure}[h]
 \centering
 \begin{tikzpicture}[xscale=1.2,yscale=1.2]
  \def\barW{1.0}
  
  \coordinate (l) at (0,0);
  \coordinate (lc) at (0, 0.2);
  \coordinate (r) at (10, 0.4);
  
  \node at ($(l)+(-1,0.25)$) [right] {$T=$};
  
  \draw[thick] (l) rectangle (r);
  
  \coordinate (x) at ($(l |- r)+(0.5,0)$);
  \coordinate (xr) at ($(x)+(3.2,0.3)$);
  \draw (x) rectangle (xr);
  \node at ([xshift=0.2cm,yshift=0.15cm]x) {$x$};
  
  \coordinate (y) at ($(l |- r)+(5.1,0)$);
  \coordinate (yr) at ($(y)+(4.5,0.3)$);
  \draw (y) rectangle (yr);
  \node at ([xshift=0.2cm,yshift=0.15cm]y) {$y$};

  \foreach \dx/\dy [count=\i from 0] in {-0.1/0.5, 0.85/1, 2.0/1, 1.2/0.5, 2.6/0.5} {
    \coordinate (sl\i) at ($(x |- l)+(\dx,-\dy-0.2)$);
    \coordinate (sr\i) at ($(sl\i)+(\barW, 0.2)$);
    \coordinate (su\i) at (sl\i |- l);
  
    \draw[blue,fill=lightgray] (sl\i) rectangle (sr\i);
    \draw[blue] (sl\i)--(su\i);
  }
  \draw[blue,fill=black!10!white] (sl0) rectangle (sr0);
  \draw[blue,fill=black!10!white] (sl4) rectangle (sr4);

  \node at (su1) [blue, above] { };
  \node at (su2) [blue, above] {};

  
  \coordinate (zl) at (sl1 |- xr);
  \coordinate (zr) at ($(sr2 |- xr)+(0,0.3)$);
  \draw[densely dotted] (sl1)--(zl |- xr);
  \draw[densely dotted] (sr2)--(zr |- xr);
  \draw (zl) rectangle (zr);
  \node at ([xshift=0.2cm,yshift=0.15cm]zl) {$z$};
  
  \coordinate (zc) at ($(zl)+(0,0.15cm)$);
  \draw[|<->|] (x |- zc)--(zc);

  \coordinate (y1) at (y |- l);
  \coordinate (y2) at ($(y1) + (1.1cm,0)$);
  \coordinate (y3) at ($(y2) + (0.7cm,0)$);
  
  \foreach \dx/\dy [count=\i] in {0.85/0.1, 2.0/0.1, 1.2/-0.35} {
    \coordinate (ssl\i) at ($(y3)+(\dx-0.85,\dy)$);
    \coordinate (ssr\i) at ($(ssl\i)+(\barW, 0.2)$);
  
    \draw[blue,fill=lightgray] (ssl\i) rectangle (ssr\i);
  }
  
  \node at (ssl1 |- ssr1) [above,yshift=0.03cm] { };
  \node at (ssl2 |- ssr2) [above] { };
  
  \coordinate (zzl) at ($(ssl1 |- zl) + (0, 0.3cm)$);
  \coordinate (zzr) at ($(zzl -| ssr2) + (0, 0.3cm)$);
  
  \coordinate (xxl) at ($(zzl) + (0, -0.3) - (0.85,0)$);
  \coordinate (xxr) at ($(xxl) + (3.2,0.3)$);

  \draw (zzl) rectangle (zzr);
  \node at ([xshift=0.2cm,yshift=0.17cm]zzl) { $z'$};
  
  \draw[densely dotted] (ssl1) -- (zzl);
  \draw[densely dotted] (ssr2) -- (zzr);

  \draw (xxl) rectangle (xxr);
  \node at ([xshift=0.2cm,yshift=0.17cm]xxl) { $x'$};
  
  \coordinate (zzc) at ($(zzl)+(0,0.15cm)$);
  \draw[|<->|] (xxl |- zzc)--(zzc);
  
  \end{tikzpicture}
 \caption{We have $z=\MaxReg(x)$ and $z'=\MaxReg(x')$.
The arrows correspond to the same distances due to synchronization. Each fragment $z'$ matching $z$ and contained in $y$ determines a single location of a candidate match $x'$.}\label{fig:zz1}
\end{figure}

\begin{observation}
After $\Oh(n/\tau)$-time preprocessing, given fragments $x$, $y$ of $T$,
we can compute the fragments $\MaxReg(x)$, $\MaxReg(y)$ in $T$ and 
their codes $\Phi(x)$, $\Phi(y)$ in $T'$ in $\Oh(1)$ time.
\end{observation}

First, we consider the case when $|x| \ge 8\tau$ and $\MaxReg(x)\ne \varepsilon$; the remaining corner cases will be addressed later.
In \IPM, we assume that the length of $y$ is proportional to the length of $x$. 
Here, we will make a stronger assumption that $|x| \le |y| \le \frac54|x|$. 
However, this assumption does not imply immediately that $|\Phi(y)|$ is proportional to $|\Phi(x)|$.
The latter condition is needed to apply \IPM to fragments $\Phi(x)$ and $\Phi(y)$ because $|\Phi(y)|$ could be too large compared with $|\Phi(x)|$.

Denote $y=T[\ell\dd r)$ and let $\Mid(y)=\floor{\frac{\ell+r}{2}}$ be the middle position of $y$.
Our approach, in this case, is to restrict the search to 
an appropriate fragment of length at most $2|\Phi(x)|+1$,
\[\Phi'(y)=T'[\,2i-|\Phi(x)|-1\dd 2i+|\Phi(x)|-1\,] \cap \Phi(y),\ \text{where}\ i=\BLOCK(\Mid(y)),\]
equal to an approximately the ``middle'' part of $\Phi(y)$; see \cref{new}.

\begin{figure}[h]
 \centering
 \begin{tikzpicture}[xscale=0.9,yscale=0.4]

\filldraw[white!85!black] (1,0) rectangle (6.5,1);
\draw[thick] (0,0) rectangle (8,1);
\draw (0,0.5) node[left] {$y=$};
\draw[-latex] (4,2) -- (4,1);
\draw (4,2) node[above] {$\Mid(y)=j$};
\draw[snake=brace] (6.5,-0.2) -- node[below] {$\MaxReg(y)$} (1,-0.2);

\begin{scope}[xshift=10cm]
\filldraw[white!85!black] (1.5,0) rectangle (5.5,1);
\draw[thick](0,0) rectangle (6.5,1);
\draw (0,0.5) node[left] {$x=$};
\draw[snake=brace] (5.5,-0.2) -- node[below] {$z=\MaxReg(x)$} (1.5,-0.2);
\end{scope}

\begin{scope}[yshift=-6cm]
\begin{scope}[xshift=1cm]
\filldraw[white!70!black] (0,0) rectangle (4.5,1);
\draw (0,0) rectangle (4.5,1);
\draw (0,0.5) node[left] {$\Phi(y)=$};
\draw[-latex] (2.75,2) -- (2.75,1);
\draw (2.75,2) node[above] {$i=\BLOCK(j)$};
\begin{scope}[yshift=-2cm]
\filldraw[white!70!black] (1.5,0) rectangle (4,1);
\draw (1.5,0.5) node[left] {$\Phi'(y)=$};
\draw (1.5,0) rectangle (4,1);
\draw (2.75,0) node[below] {$i$};
\draw[latex-latex] (1.5,-0.9) -- node[below] {$|\Phi(x)|$} (2.5,-0.9);
\draw[latex-latex] (3,-0.9) -- node[below] {$|\Phi(x)|$} (4,-0.9);
\end{scope}
\end{scope}
\begin{scope}[xshift=13cm]
\filldraw[white!70!black] (0,0) rectangle (1,1);
\draw (0,0) rectangle (1,1);
\draw (0.5,0) node[below] {$\Phi(x)=\code(z)$};
\end{scope}
\end{scope}
  
\end{tikzpicture}  
 \caption{Instead of searching for $\Phi(x)$ within $\Phi(y)$, we only search within a fragment $\Phi'(y)$ of length at most $2|\Phi(x)|+1$.
 Recall that each character of $\Phi(x)$ and $\Phi(y)$ fits in a single machine word.
}\label{new}
\end{figure}

\begin{fact}\label{fct:easy}
Let $x$ and $y$ be fragments of $T$ such that $8\tau \le |x| \le |y| \le \frac54|x|$, and
let $x'$ be a fragment matching $x$ and contained in $y$.
If $\MaxReg(x)\ne \eps$, then $\Phi(x')$ contains $T'[2i-2]$ or $T'[2i]$, where $i=\BLOCK(\Mid(y))$.
\end{fact}
\begin{proof}
Recall that $\MaxReg(x)\cong \MaxReg(x')$, so $\MaxReg(x')\ne \eps$.
Let $z'=\MaxReg(x')=T[s_{i'} \dd s_{j'}+2\tau)$ so that $\Phi(x')=T[2i'\dd 2j']$.
It suffices to prove that $\{i-1,i\} \cap [i'\dd j'] \ne \emptyset$.
\begin{itemize}
\item
If $\Mid(y)\in (s_{i'-1} \dd s_{j'+1})$, then $s_i \in [s_{i'}\dd s_{j'+1}]$, so $i \in [i' \dd j'+1]$ and $\{i-1,i\} \cap [i'\dd j'] \ne \emptyset$.

\smallskip
\item If $\Mid(y) \le s_{i'-1}$, then, since both $T[\Mid(y)]$ and $z'=T[s_{i'}\dd s_{j'}+2\tau)$ are contained in $x'$,
we conclude that $T[s_{i'-1}\dd s_{j'}+2\tau)$ is contained in $x'$, contradicting the definition of $\MaxReg(x')$.

\smallskip
\item Finally, if $\Mid(y)\ge s_{j'+1}$, then we claim that $T[s_{i}\dd s_{j'+1}+2\tau)$ is contained in $x'$,
contradicting the definition of $\MaxReg(x')$.
Indeed, $\tfrac12 |y|+2\tau \le \frac58|x| + \frac14|x| < |x|$,
so $x'$ contains both $T[s_{i'}\dd s_{j'}+2\tau)$ and $T[\Mid(y)\dd \Mid(y)+2\tau)$,
and thus also $T[s_{i'}\dd s_{j'+1}+2\tau)$. \qedhere
\end{itemize}
\end{proof}

In the query algorithm below, we assume without loss of generality that $|x| \le |y| \le \frac54|x|$.
(To handle $|y|\le 2|x|$, we combine the answers of up to four queries.)

\begin{center}
\begin{minipage}{14cm}
\noindent {\bf Algorithm} answering \IPM over small alphabets

  \begin{enumerate}[label=(\Alph*)]
    \item\label{Step:A} If $|x|<8\tau$, return a precomputed answer.
    \item\label{Step:B} If $\MaxReg(x)=\varepsilon$, apply the query algorithm for $\HP$-patterns of \cref{sec:per_sum}.
    \item\label{Step:C} Apply \IPM for pattern $\Phi(x)$ and text $\Phi'(y)$ (\cref{thm:ipm0}), obtaining
    at most two arithmetic progressions of occurrences.
    \item\label{Step:D} 
    Apply \cref{lem:peralg}\ref{it:permax} for $T$ and $T^R$ to test which of these occurrences extend to occurrences of $x$.
  \end{enumerate}
  \end{minipage}
\end{center}

\paragraph*{Correctness.}
In step~\ref{Step:B}, if $\MaxReg(x)=\varepsilon$, then the density of $\S$ implies $\per(x)\le \frac13\tau$, so indeed $x$ is an $\HP$-pattern.

In step~\ref{Step:C}, we have $|\Phi'(y)| \le 2|\Phi(x)|+1$, so \cref{fct:single} implies that there are up to two arithmetic progressions.

In step~\ref{Step:D}, we only care about occurrences of $\Phi(x)$ starting at even positions of $\Phi'(y)$.
Each arithmetic progression from step~\ref{Step:C} forms a periodic progression $\mathbf{p}=(p_i)_{i=0}^{k-1}$ in~$T$.
This is because, by \cref{fct:tp}, subsequent occurrences of $z=\MaxReg(x)$ start at all these positions.
Let $x=w z w'$ and $y=T[\ell \dd r)$.
We apply \cref{lem:peralg}\ref{it:permax} in $T$ to sequence $\mathbf{p}$, position $r$, and fragment $zw'$ to check which positions $p_i$ contain occurrences of $zw'$.
Then, we apply \cref{lem:peralg}\ref{it:permax} in $T^R$ to sequence
$\mathbf{p}'=(n-p_{k-1-i})_{i=0}^{k-1}$, position $n-\ell$, and fragment $w^R$
to check which positions $p_i-1$ in $T$ are endpoints of occurrences of $w$.
In each case, the lemma returns an integer interval of indices.
The intersection of the two intervals can be transformed into an arithmetic progression of positions.
The two resulting arithmetic progressions can be joined together to one progression by \cref{fct:single}.

\paragraph*{Implementation.}
Recall that $T$ is given in a packed representation.
In step~\ref{Step:A}, this lets us retrieve any fragment of length at most $10\tau$, encoded in a single machine word, in $\Oh(1)$ time.
Then we can use the precomputed answer.

In step~\ref{Step:B}, the preprocessing of the query algorithm of \cref{sec:per_sum} takes only $\Oh(n/\tau)$ time and space as we use \cref{lem:run_finder2} to find long $\tau$-runs and the LCE queries of \cref{prop:lce}.

In step~\ref{Step:C}, we have $|T'|=\Oh(n/\tau)$ and $T'$ can be extracted from the packed representation of $T$ in $\Oh(n/\tau)$ time.
The preprocessing of \IPM of \cref{thm:ipm0} on $T'$ takes $\Oh(n/\tau)$ time and space.

\medskip
\noindent
This concludes the description of our data structure for \IPM.

\thmipm*

\section{Applications of IPM and LCE Queries}\label{chp:app}
We present some applications of our data structure for \IPMFull.
This includes answering \PQ (\cref{sec:BQ}), \FC (\cref{sec:FC}), and variants of \LSC (\cref{sec:GSC}).
Before that, we prove \cref{lem:peralg}, which is useful in all our applications, as discussed in \cref{sec:techniques}.

\subsection{Proof of \cref{lem:peralg}}\label{sec:app:comb}

Let us recall that a sequence $\mathbf{p}$ of positions $p_0< p_1 < \cdots < p_{k-1}$ in a string $w$
is a  \emph{periodic progression} of length $k$ (in $w$) if $w[p_0 \dd  p_{1})\cong \cdots \cong w[p_{k-2}\dd  p_{k-1})$.
If $k\ge 2$, we call the string  $v\cong w[p_i \dd  p_{i+1})$ the \emph{(string) period} of $\mathbf{p}$, while its length $p_{i+1}-p_i$ is the \emph{difference} of $\mathbf{p}$.
Periodic progressions $\mathbf{p},\mathbf{p}'$ are called \emph{non-overlapping} if the last term of $\mathbf{p}$ is smaller than the first term of $\mathbf{p}'$ or vice versa: the last term of $\mathbf{p}'$ is smaller than the first term of $\mathbf{p}$.
Every periodic progression is an arithmetic progression and, consequently, can be represented by three integers,
e.g., the terms $p_0$, $p_1$, and $p_{k-1}$ (with $p_1$ omitted if $k=1$, i.e., if $p_{k-1}=p_0$).

All our applications of \IPM rely on the structure of the values $\LCE(p_i,q)$ for a periodic progression $(p_i)_{i=0}^{k-1}$.
In \cref{lem:peralg} below, we give a combinatorial characterization of this structure (in a slightly more general form) amended with its immediate algorithmic applications.
Let us start with a simple combinatorial result.
\begin{fact}\label{fct:lcp}
For strings $u,v\in \Sigma^*$ and $\rho\in\Sigma^+$, let $d_u = \lcpinf{\rho}{u}$ and $d_v=\lcpinf{\rho}{v}$.
\begin{enumerate}[label=(\alph*)]
  \item If $d_u \ne d_v$, then $\lcp(u,v)=\min(d_u,d_v)$.
  \item If $d_u = d_v$, then $\lcp(u,v)\ge d_u=d_v$.
\end{enumerate}
\end{fact}
\begin{proof}
Let $d=\min(d_u,d_v)$. Note that $u[0\dd d)\cong (\rho^\infty)[0\dd d)\cong v[0\dd d)$, so $\lcp(u,v)\ge d$.
If $d=d_u < d_v$, then $v[d]=(\rho^\infty)[d]\ne u[d]$, so $\lcp(u,v)=d$. The case of $d=d_v < d_u$ is symmetric.
\end{proof}

\begin{fact}[Applications of \LCEQ]\label{fct:lce}
Assume that we have access to a text $T$ equipped with a data structure answering \LCEQ in $\Oh(1)$ time.
Given fragments $x,y$ of $T$, the values $\lcp(x,y)$ and $\lcpinf{x}{y}$
can be computed in $\Oh(1)$ time.
\end{fact}
\begin{proof}
If $x=T[i_x\dd j_x)$ and $y=T[i_y\dd j_y)$,
then $\lcp(x,y)=\min(\LCE(i_x,i_y),|x|,|y|)$.
Hence, $\lcp(x,y)$ can be computed in constant time.

If $\lcp(x,y)< |x|$, i.e., $x$ is not a prefix of $y$, then $\lcpinf{x}{y}=\lcp(x,y)$.
Otherwise, consider a fragment $y'=T[i_y+|x|\dd j_y)$. A simple inductive proof shows that 
$\lcpinf{x}{y} = |x|+\lcpinf{x}{y'}=|x|+\lcp(y,y')$.
In either case, $\lcpinf{x}{y}$ can be computed in constant time.
\end{proof}

\begin{figure}[ht]
\begin{center}
\begin{tikzpicture}[scale=0.7]


    \draw (3.2,0) rectangle (10, 0.6);

    \draw (2.7, .3) node[anchor=mid] {\footnotesize$u_0$};
    
    \draw (4.1,-.6) rectangle (10, 0);
    \draw (3.6, -.3) node[anchor=mid] {\footnotesize$u_1$};
    \draw (5,-1.2) rectangle (10, -0.6);
    \draw (4.5, -.9) node[anchor=mid] {\footnotesize$u_2$};
    
    \filldraw[fill=black!20] (3.2,0) rectangle (8.2,0.6);
    \filldraw[fill=black!20] (4.1,-.6) rectangle (9.5,0);
    \filldraw[fill=black!20] (5,-1.2) rectangle (9.1,-0.6);

    \begin{scope}
        \clip (3.2,0) rectangle (9.1, 1);
        \foreach \x in {0, 0.9, ..., 10} {
          \draw (3.2+\x, .6) sin (3.2+\x+.45, .85) cos (3.2+\x+.9, .6);
        }
    \end{scope}

        \draw[<->] (3.2, 1.5) -- node[anchor=mid,above=-0.09]{\footnotesize $d_u$} (9.1, 1.5);
         \draw[dotted] (3.2,0.6)--(3.2,1.7)   (9.1, -1.2)--(9.1, 1.7);
        \draw (3.65, 1.05) node {\footnotesize $\rho$};

    \begin{scope}[xshift = 14 cm]
    \foreach \x in {0,1,2}{
        \draw (0,0-\x*0.6) rectangle (8,.6-\x*0.6);
        \draw (-.5, .3-\x*0.6) node[anchor=mid] {\footnotesize$v$};
        }
        
        \filldraw[fill=black!20] (0,0) rectangle (5,0.6);
    \filldraw[fill=black!20] (0,-.6) rectangle (5.4,0);
    \filldraw[fill=black!20] (0,-1.2) rectangle (4.1,-0.6);

        \draw[<->] (0, 1.5) -- node[anchor=mid,above=-0.09]{\footnotesize $d_v$} (5, 1.5);
        \draw (0.45, 1.05) node {\footnotesize $\rho$};

        \draw[dotted] (0,0.6)--(0,1.7)   (5, .6)--(5, 1.7);

        \begin{scope}
            \clip (0,0) rectangle (5, 1);
            \foreach \x in {0, 0.9, ..., 11} {
                \draw (\x, .6) sin (\x+.45, .85) cos (\x+.9, .6);
            }
        \end{scope}

    \end{scope}

\end{tikzpicture}
\end{center}
\caption{An illustration of notions used in \cref{lem:peralg}.
Shaded rectangles represent the common prefixes of $u_i$ and $v$.
In this case, $\frac{d_u-d_v}{|\rho|}=1$.}\label{fig:perlcp}
\end{figure}

Although our applications use \cref{fct:lcp} for two different purposes,
the overall scheme is the same each time. Consequently, we group two application-specific queries
in a single algorithmic lemma; see \cref{fig:perlcp}.

\peralg*
\begin{proof}
There is nothing to do for $k=0$.
For $k=1$, \cref{fct:lce} lets us check if $\lcp(u_0,v) = |u_0|$, that is, whether $u_0$ matches a prefix of $v$.
Moreover, $i=0$ maximizes $\lcp(u_i,v)$.

Henceforth, we shall assume $k\ge 2$. In this case, we retrieve an occurrence $T[p_0\dd p_1)$ of the string period $\rho$ of $\mathbf{p}$ and apply \cref{fct:lce} to determine $d_u = \lcpinf{\rho}{u_0}$ and $d_v = \lcpinf{\rho}{v}$.
We also compute $i_{t} = \frac{d_u-d_v}{|\rho|}$.

Let us observe that for $i \in [0 \dd k)$, $u_0 = \rho^i u_i$, so $\lcpinf{\rho}{u_i}=\lcpinf{\rho}{u_0}-i|\rho|=d_u -i|\rho|$.
If $i_t\in [0 \dd k)$, then $\lcpinf{\rho}{u_{i_t}}=d_v$.
Hence, by \cref{fct:lcp}, we have $\lcp(u_i,v)=d_v$ for $i<i_t$,
$\lcp(u_i,v)\ge d_v$ for $i=i_t$ (if $i_t \in [0 \dd k)$),
and $\lcp(u_i,v) = d_u - i|\rho|<d_v$ for $i>i_t$.

\ref{it:perpref}
We shall report $i\in [0\dd k)$ such that $\lcp(u_i,v)=|u_i|$.
For $i<i_t$, we have $\lcp(u_i,v)=d_v < d_u-i|\rho|\le |u_0|-i|\rho|=|u_i|$, so these indices are never reported.
If $i_t\in [0\dd k)$, we compute $\lcp(u_{i_t},v)$ using \cref{fct:lce} and this index may need to be reported.
For $i>i_t$, we have $\lcp(u_i,v)= d_u-i|\rho|$ and $|u_i|=|u_0|-i|\rho|$, so we either report all these indices (if $d_u=|u_0|$) or none of them (otherwise).

\ref{it:permax}
If $i_t \in [0\dd k)$, we check whether $\lcp(u_{i_t},v)>d_v$ using \cref{fct:lce}.
If so, we report $i_t$ as the only index maximizing $\lcp(u_i,v)$ because $\lcp(u_i,v)\le d_v$ holds unless $i=i_t$.
Otherwise, the maximum of $\lcp(u_i,v)$ is~$d_v$, attained for all $i\in [0\dd k)$ such that $i \le i_t$ (if $i_t \ge 0$),
or~$d_u$, attained for $i=0$ (if $i_t \le 0$).
\end{proof}

\subsection{Prefix-Suffix Queries and Their Applications}\label{sec:BQ}
In this section, we show the solutions for \BQ, \PQ, and \PEQ using \IPM.

We start with \BQ. Assume that $|x|,|y|\ge d$; otherwise, there are no suffixes to be reported.
Let $x'$ be the prefix of $x$ of length $d$ and $y'$ be the suffix of $y$ of length $\min(2d-1,|y|)$.
Suppose that a suffix $z$ of $y$ matches a prefix of $x$. If $|z|\ge d$, then $z$ must start with a fragment matching~$x'$.
Moreover, if $|z|\le 2d-1$, then $z$ is a suffix of $y'$, so this yields an occurrence of $x'$ in $y'$.
We find all such occurrences with a single \textsc{IPM Query} and then use \cref{lem:peralg} to find out which of them
can be extended to the sought suffixes $z$ of $y$.

\begin{figure}[ht]
\begin{center}
\begin{tikzpicture}[scale=0.7]

    \draw[dashed] (-1, 0)--(21, 0)  (-1, 0.6)--(21, 0.6);

    \draw (0,0) rectangle (10, 0.6);

    \draw (-.5, .3) node[anchor=mid] {\footnotesize$y$};

    \filldraw[fill=black!20] (2,0) rectangle (10, .6);
    \draw (6, .3) node[anchor=mid] {\footnotesize$y'$};

    \draw[dotted] (2, 0.6) -- (2, 1.2) (10, 0.6) -- (10, 1.2);
        \draw[latex-latex] (2, 1) -- node[anchor=mid,above=-0.09]{\footnotesize $2d-1$} (10, 1);
    

    \foreach \i in {0,1,2} {
        \draw (3.2+0.9*\i, -0.6-0.6*\i) rectangle (10,-0.6*\i);
        \filldraw[fill=black!20] (3.2+0.9*\i, -0.6-0.6*\i) rectangle node[anchor=mid]{\footnotesize $x'$}(7.2+0.9*\i,-0.6*\i);
        \draw (2.7+0.9*\i, -0.3-0.6*\i) node[anchor=mid]{\footnotesize $y_\i$};
    }

    \begin{scope}[xshift = 12 cm]
        \draw (0,0) rectangle (8,.6);
        \draw (-.5, .3) node[anchor=mid] {\footnotesize$x$};

        \filldraw[fill=black!20] (0,0) rectangle (4, .6);
        \draw (2, .3) node[anchor=mid] {\footnotesize$x'$};

        \draw[latex-latex] (0, 1) -- node[anchor=mid, above=-0.09]{\footnotesize $d$} (4, 1);

        \draw[dotted] (0,0.6)--(0,1.2)  (4,0.6)--(4,1.2);

    \end{scope}

\end{tikzpicture}
\end{center}
\caption{The notions used in the algorithms answering \BQ and \BLCP (for the latter, see \cref{sec:GSC}).}\label{fig:app-blcp}
\end{figure}

By \cref{fct:single}, the starting positions of the occurrences of $x'$ in $y'$ form a periodic progression in~$T$.
Let $y_i$ be the suffix of $y$ starting with the $i$th occurrence of $x'$; see \cref{fig:app-blcp}.
We need to check which of the fragments $y_i$ occur as prefixes of $x$.
This is possible using \cref{lem:peralg}\ref{it:perpref}, which lets us find all indices $i$ such that $y_i$ is a prefix of $x$.
The result is an integer interval of indices, which can be transformed into an arithmetic progression of lengths $|y_i|$.
Consequently, the data structure of \cref{thm:ipm} (which already contains the component of \cref{prop:lce} for \LCEQ) can answer \BQ in $\Oh(1)$ time.
Hence, we obtain the following results.

\thmappbq*

\thmapppq*

\begin{proof}
\PQ can be answered using \BQ for $y=x$.
To compute all periods of $x$, we use \BQ to find all borders of $x$ of length within $[2^k\dd 2^{k+1})$ for each $k\in [0\dd \floor{\log |x|}]$.
The lengths of borders can be easily transformed to periods since $x$ has period $p$ if and only if it has a border of length $|x|-p$.
\maybeqed \end{proof}

\thmrun*
\begin{proof}
  \PEQ can be answered using \BQ and \LCEQ in $T$ and $T^R$.
  Given a fragment $x=T[\ell\dd r)$, we use a \BQone to find the longest proper border of $x$, provided that its length is at least $\tfrac12|x|$.
  If no such border exists, we report that $\run(x)=\bot$.
  Otherwise, the length of the longest border yields the period $p=\per(x)\le \tfrac12|x|$.
  In this case, \[\run(x)=T[\ell - \lcs(T[0\dd \ell),T[0\dd \ell+p))\dd \ell + p + \lcp(T[\ell\dd n),T[\ell+p\dd n))),\]
  where $\lcs(uv,w)$ denotes the length of their longest common suffix of strings $v$ and $w$.
\maybeqed \end{proof}

\subsection{Cyclic Equivalence Queries}\label{sec:FC}

Recall that, for a non-empty string $w\in \Sigma^n$, we define a string $\Rot(w) = w[n-1]w[0]\cdots w[n-2]$.
First, we prove that the sought set $\ROT(x,y)=\{j\in \integ\,:\, y= \Rot^j(x)\}$
indeed forms an arithmetic progression.

\begin{fact}\label{fct:FC}
  If $\ROT(x,y)\ne \emptyset$, then $\ROT(x,y)$ is an infinite arithmetic progression whose difference divides~$|x|$.
\end{fact}
\begin{proof}
  Let us note that if $j,j'\in \ROT(x,x)$, then $j-j'\in \ROT(x,x)$ and thus also $\gcd(j,j')\in \ROT(x,x)$.
  Consequently, $\ROT(x,x)$ consists of multiples of some integer $m$.
  Due to $|x|\in \ROT(x,x)$, this integer $m$ is a divisor of $|x|$.

  Next, observe that if $j\in \ROT(x,y)$, then $\ROT(x,y)=\{j + j' : j'\in \ROT(x,x)\}$.
  Hence, if $\ROT(x,y)\ne \emptyset$, then $\ROT(x,y)$ is an infinite arithmetic progression whose difference divides $|x|$.
\end{proof}

While answering \FC, we can assume that $|x|=|y|$;
we denote the common length of $x$ and $y$ by $d$.
Our query algorithm is based on the following characterization of $\ROT(x,y)$:

\begin{observation}\label{obs:rot}
Let $x,y$ be strings of common length $d$. For every $j\in [0\dd d]$,
we have $j\in \ROT(x,y)$ if and only if $y[0\dd j) \cong x[d-j\dd d)$ and $y[j\dd d)\cong x[0\dd d-j)$.
\end{observation}

Below, we provide an algorithm that computes $\ROT(x,y)\cap [\ceil{\frac{d}{2}}\dd d)$.
By \cref{obs:rot}, $j\in \ROT(x,y)$ if and only if $d-j\in \ROT(y,x)$, so running this algorithm both for $(x,y)$
and $(y,x)$ lets us retrieve $\ROT(x,y)\cap [1\dd d)$.
This is sufficient to determine $\ROT(x,y)$ because an \textsc{LCE Query} lets us easily check if $x \cong y$, i.e., whether $0\in \ROT(x,y)$, and \cref{fct:FC} yields $\ROT(x,y) = \{j \in \integ : j\bmod d \in \ROT(x,y) \cap [0\dd d)\}$.

By \cref{obs:rot}, if $j\in [\ceil{\frac{d}{2}}\dd d)\cap \ROT(x,y)$,
then the length-$j$ suffix of $x$ matches a prefix of $y$.
Since $d-1 \le 2\ceil{\frac{d}{2}}-1$, all lengths $j_0< \cdots < j_{k-1}$ satisfying the latter condition form an arithmetic progression and can be retrieved with a single 
\textsc{Prefix-Suffix Query}. 
Moreover, \cref{obs:rot} yields $[\ceil{\frac{d}{2}}\dd d)\cap \ROT(x,y) = \{j_i : i\in [0\dd k)\text{ and }y[j_i\dd d) \cong x[0\dd d-j_i)\}$.
Hence, it suffices to check for which indices $i$ the suffix $y_i := y[j_i\dd d)$ of $y$ matches a prefix of $x$.
For this, we note that $(j_i)_{i=0}^{k-1}$ is a periodic progression in $y$:
for each $i\in [1\dd k)$, the string $y[j_{i-1}\dd j_{i})$ matches a suffix of $x$
whose length is the difference of the arithmetic progression $(j_i)_{i=0}^{k-1}$.
Hence, \cref{lem:peralg}\ref{it:perpref} lets us retrieve an integer interval
consisting of indices $i$ such that $j_i \in \ROT(x,y)$, and this interval can be easily
transformed into an arithmetic progression of the corresponding values $j_i$.
Consequently, the data structure of \cref{thm:ipm} (which already contains the component of \cref{prop:lce} for \LCEQ and which can answer \BQ in $\Oh(1)$ time; cf.\ \cref{thm:app-bq})
can also answer \FC in $\Oh(1)$ time.
 
\thmfc*

\subsection{Queries Related to Lempel--Ziv Compression}\label{sec:GSC}
\textsc{Substring Compression Queries} are internal queries asking for a compressed representation of a substring or the (exact or approximate) size of this representation.
This family of problems was introduced by Cormode and Muthukrishnan~\cite{DBLP:conf/soda/CormodeM05},
and some of their results were later improved by Keller et al.~\cite{DBLP:journals/tcs/KellerKFL14}.
\textsc{Substring Compression Queries} have a fairly direct motivation: 
Consider a server holding a long repetitive text $T$ and clients asking for substrings of~$T$ (e.g., chunks that should be displayed).
A limited capacity of the communication channel justifies compressing these substrings.

The aforementioned papers~\cite{DBLP:conf/soda/CormodeM05,DBLP:journals/tcs/KellerKFL14} apply the classic LZ77 compression scheme~\cite{DBLP:journals/tit/ZivL77}.
Among other problems, they consider internal queries for the LZ factorization of a given fragment $x$
and for the generalized LZ factorization of one fragment $x$ in the context of another fragment $y$.
The latter is defined as the part representing $x$ in the LZ factorization of a string $y\# x$, 
where $\#$ is a special sentinel symbol not present in the text. 
A server can send such a generalized LZ factorization to a client who requests $y$ and has previously received $x$.

Thus, in this section, we consider variants of \LSC.
Let us first recall Lempel--Ziv LZ77 algorithm~\cite{DBLP:journals/tit/ZivL77} and range successor queries that will be used in our solution.

\paragraph{LZ77 Compression}
Consider a string $w\in \Sigma^*$. We say that a fragment $w[\ell\dd r)$ has a \emph{previous occurrence} (or is a \emph{previous fragment}) if $w[\ell\dd r)\cong w[\ell'\dd r')$ for some positions $\ell'<\ell$ and $r'<r$. The fragment $w[\ell\dd r)$ has a \emph{non-overlapping previous occurrence} (or is a \emph{non-overlapping previous fragment}) if additionally $r'\le \ell$.

The Lempel--Ziv factorization $\LZ(w)$ is a factorization $w=f_1 \cdots f_k$ into fragments (called \emph{phrases}) such that 
each phrase $f_i$ is the longest previous fragment starting at position $|f_1\cdots f_{i-1}|$,
or a single letter if there is no such previous fragment.
The non-overlapping Lempel--Ziv factorization $\LZ_N(w)$ is defined analogously, allowing for non-overlapping previous fragments only.
Both factorizations (and several closely related variants) are useful for compression
because a previous fragment can be represented using a reference to the previous occurrence (e.g., the positions of its endpoints).

Strings $w\in \Sigma^*$ are sometimes compressed with respect to a \emph{context string} (or \emph{dictionary string}) $v\in \Sigma^*$.
Essentially, there are two ways to define the factorization $\LZ(w \mid v)$ of $w$ with respect to $v$.
In the \emph{relative LZ factorization}~\cite{DBLP:journals/tit/ZivM93,DBLP:conf/spire/KuruppuPZ10} $\LZ_R(w \mid v)$, each phrase is the longest fragment of $w$ which starts at the given position and occurs in $v$ (or a single letter if there is no such fragment).
An alternative approach is to allow both substrings of $v$ and previous fragments of $w$ as phrases. This results in the \emph{generalized LZ factorization}, denoted $\LZ_G(w \mid v)$; see~\cite{DBLP:conf/soda/CormodeM05,DBLP:journals/tcs/KellerKFL14}.
Equivalently,  $\LZ_G(w \mid v)$ can be defined as the suffix of $\LZ(v\#w)$ corresponding to $w$, where $\#$ is a special symbol that is present neither in $v$ nor in $w$.
The previous fragments in the non-overlapping generalized LZ factorization $\LZ_{NG}(w \mid v)$ must be non-overlapping.
\begin{example}
Let $w=\mathtt{aaaabaabaaaa}$ and $v=\mathtt{baabab}$.
We have
\begin{align*}
\LZ(w)\,&=\, \mathtt{a}\cdot \mathtt{aaa}\cdot \mathtt{b}\cdot \mathtt{aabaa}\cdot\mathtt{aa},&
\LZ_N(w)\,&=\, \mathtt{a}\cdot \mathtt{a}\cdot \mathtt{aa}\cdot \mathtt{b}\cdot \mathtt{aab}\cdot\mathtt{aaaa},\\
\LZ_R(w\mid v)\,&=\, \mathtt{aa}\cdot\mathtt{aaba}\cdot \mathtt{aba}\cdot \mathtt{aa}\cdot \mathtt{a},&
\LZ_{G}(w\mid v)\,&=\, \mathtt{aa}\cdot \mathtt{aaba}\cdot \mathtt{abaa}\cdot\mathtt{aa},\\
\LZ_{GN}(w\mid v)\,&=\, \mathtt{aa}\cdot \mathtt{aaba}\cdot \mathtt{aba} \cdot \mathtt{aaa}.
\end{align*}
\end{example}

\paragraph{Range Successor Queries}
We define the \emph{successor} of an integer $x$ in a set $A$ as
$\suc_{A}(x)=\min\{y\in A : y > x\}$.
Successor queries on a range $A[\ell \dd r]$ of an array $A$ are defined as follows.
\defdsproblem{\textsc{Range Successor Queries (Range Next Value Queries)}}{
  \textbf{Input}: An array $A$ of $n$ integers.\\
\textbf{Queries} Given a range $[\ell \dd r]$ and an integer $x$, compute $\suc_{A[\ell \dd r]}(x)$ (and an index $j\in [\ell\dd r]$ such that $A[j]=\suc_{A[\ell \dd r]}(x)$, if any).}

The following three trade-offs describe the current state of the art for such queries.
\begin{proposition}\label{prp:range_successor_ub}
For any constant $\eps>0$ and the functions $S_{\rsucc}$, $Q_\rsucc$, and $C_\rsucc$ specified below, there is a data structure of size 
 $S_{\rsucc}(n)$ that answers range successor queries in $Q_\rsucc(n)$ time and can be constructed in $C_\rsucc(n)$ time:
 \begin{enumerate}[label=(\alph*)]
   \item\label{it:succn} $S_{\rsucc}(n)=\Oh(n)$, $Q_{\rsucc}(n)=\Oh(\log^{\eps}{n})$,
   and $C_\rsucc(n)=\Oh(n\sqrt{\log n})$~\cite{DBLP:conf/swat/NekrichN12,DBLP:conf/soda/BelazzouguiP16};
   \item\label{it:succloglog} $S_{\rsucc}(n)=\Oh(n\log\log n)$, $Q_{\rsucc}(n)=\Oh(\log\log n)$, and $C_\rsucc(n)=\Oh(n\sqrt{\log n})$~\cite{DBLP:journals/ipl/Zhou16,Gao2020};
   \item\label{it:succeps} $S_{\rsucc}(n)=\Oh(n^{1+\eps})$, $Q_{\rsucc}(n)=\Oh(1)$,
   and $C_\rsucc(n)=\Oh(n^{1+\eps})$~\cite{DBLP:journals/tcs/CrochemoreIKRTW12}.
 \end{enumerate}
\end{proposition}

\paragraph{LZ Substring Compression Queries}
We consider the following types of queries.
\defdsproblem{\textsc{(Non-Overlapping) LZ Substring Compression Queries}}{
Given a fragment $x$ of $T$, compute the (non-overlapping) LZ factorization of $x$, i.e., $LZ(x)$ ($LZ_N(x)$, respectively).
}
\defdsproblem{\textsc{Relative LZ Substring Compression Queries}}{
Given two fragments $x$ and $y$ of $T$, compute the relative LZ factorization of $x$ with respect to $y$, i.e., $LZ_R(x \mid y)$.
}

\defdsproblem{\textsc{Generalized (Non-Overlapping) LZ Substring Compression Queries}}{
Given two fragments $x$ and $y$ of $T$, compute the generalized (non-overlapping) LZ factorization of $x$ with respect to $y$, i.e., $LZ_G(x \mid y)$
($LZ_{GN}(x \mid y)$, respectively).
}

Our query algorithms heavily rely on the results of Keller et al.~\cite{DBLP:journals/tcs/KellerKFL14} for \LSC and \GSC. 
The main improvement is a more efficient solution for the following auxiliary problem:
\defdsproblem{\BLCPFull}{Given two fragments $x$ and $y$ of $T$, find the longest prefix $p$ of $x$ which  occurs in $y$.}
The other, easier auxiliary problem defined in~\cite{DBLP:journals/tcs/KellerKFL14} is used as a black box.
\defdsproblem{\ILCPFull}{Given a fragment $x$ of $T$ and an interval $[\ell\dd r]$ of positions in $T$, 
find the longest prefix $p$ of $x$ which occurs in $T$  at some position within $[\ell\dd r]$.}

The data structure for \ILCP uses range successor queries, so we state the complexity in an abstract form.
This convention gets propagated to further results in this section.

\begin{lemma}[Keller et al.~\cite{DBLP:journals/tcs/KellerKFL14}]\label{lem:app-ilcp}
For a text $T$ of length $n$, there exists a data structure of size $\Oh(n+S_{\rsucc}(n))$ that answers \ILCP\ in $\Oh(Q_{\rsucc}(n))$ time.
The data structure can be constructed in $\Oh(n+C_{\rsucc}(n))$ time.
\end{lemma}

As observed in~\cite{DBLP:journals/tcs/KellerKFL14}, the decision version of \IPM easily reduces to \ILCP. For $x=T[\ell_x\dd r_x)$
and $y=T[\ell_y\dd r_y)$, it suffices to check if the longest prefix of $x$ occurring at some position in $[\ell_y\dd r_y-|x|]$ of $T$ is $x$ itself.%
\begin{corollary}[Keller et al.~\cite{DBLP:journals/tcs/KellerKFL14}]\label{lem:any}
For a text $T$ of length $n$, there exists a data structure of size $\Oh(n+S_{\rsucc}(n))$ that, given fragments $x,y$ of $T$, can decide in $\Oh(Q_{\rsucc}(n))$ time whether $x$ occurs in $y$.
The data structure can be constructed in $\Oh(n+C_{\rsucc}(n))$ time.
\end{corollary}

We proceed with our solution for \BLCP.
Let $x = T[\ell_x\dd r_x)$ and $y = T[\ell_y\dd r_y)$.
First, we search for the largest $k$ such that the prefix of $x$ of length~$2^k$ (i.e., $T[\ell_x\dd \ell_x+2^k)$) occurs in $y$.
We use a variant of the binary search involving exponential search (also called galloping search),
which requires $\Oh(\log K)$ steps, where $K$ is the optimal value of $k$.
At each step, for a fixed $k$ we need to decide if $T[\ell_x\dd \ell_x+2^k)$ occurs in $y$.
This can be done in $\Oh(Q_{\rsucc}(n))$ time using \cref{lem:any}.
At this point, we have an integer $K$ such that the optimal prefix $p$ has length $|p|\in [2^K\dd 2^{K+1})$.
The running time is $\Oh(Q_{\rsucc}(n) \log K) = \Oh(Q_{\rsucc}(n) \log\log |p|)$ so far.

Let $p'$ be the prefix obtained from an \textsc{Interval Longest Common Prefix Query} for $x$ and $[\ell_y\dd r_y-2^{K+1}]$.
We have $|p'|<2^{K+1}$ and thus the occurrence of $p'$ starting in $[\ell_y\dd r_y-2^{K+1}]$ lies within $y$.
Consequently, $|p|\ge |p'|$; moreover, if $p$ occurs at a position within $[\ell_y\dd r_y-2^{K+1}]$,
then $p=p'$.

The other possibility is that $p$ only occurs near the end of $y$, i.e., within the suffix of $y$ of length $2^{K+1}-1$, which we denote as $y'$.
We use a similar approach as for \BQ with $d=2^K$ to detect $p$ in this case.
We define $x'$ as the prefix of $x$ of length $2^K$.
An occurrence of $p$ must start with an occurrence of $x'$, so we find all occurrences of $x'$ in $y'$.
If there are no such occurrences, we conclude that $p=p'$.

Otherwise, we define $y_i$ as the suffix of $y$ starting with the $i$th occurrence of $x$; see~\cref{fig:app-blcp}.
Next, we apply \cref{lem:peralg}\ref{it:permax} to compute $\max_i \lcp(y_i,x)$. By the discussion above,
this must be the length of the longest prefix of $x$ which occurs in $y'$.
We compare its length to $|p'|$ and choose the final answer $p$ as the longer of the two candidates.

Thus, the data structure for \IPM, accompanied by the components of \cref{lem:app-ilcp,lem:any,prop:lce},
yields the following result:

\thmblcp*

Finally, we generalize the approach of~\cite{DBLP:journals/tcs/KellerKFL14} to support multiple types of \LSC using \cref{thm:blcp} to improve the running time.

\begin{theorem}\label{thm:lsc}
For every text $T$ of length $n$, there is a data structure of size $\Oh(n+S_{\rsucc}(n))$
that answers:
\begin{enumerate}[label=(\alph*)]
  \item \textsc{Non-Overlapping} \LSC,
  \item \textsc{Relative} \LSC,
  \item \GSC, and
    \item \textsc{Generalized Non-Overlapping} \LSC,
\end{enumerate}
each in $\Oh\bigl(F\cdot Q_{\rsucc}(n)\log\log\frac{|x|}{F}\bigr)$ time,
where $F$ is the number of phrases reported.
The data structure can be constructed in $\Oh(n+C_{\rsucc}(n))$ time.
\end{theorem}
\begin{proof}
Let $x=T[\ell_x\dd r_x)$ and suppose that we have already factorized $x'=T[\ell_x\dd m)$,
i.e., the next phrase needs to be a prefix of $x''=T[m\dd r_x)$.
Depending on the factorization type, it is chosen among the longest prefix of $x''$ that is a previous fragment of $x$ (i.e., has an occurrence starting within $[\ell_x\dd m)$), the longest prefix of $x''$ that is a non-overlapping previous fragment of $x$ (i.e., occurs in $x'$),
or the longest prefix of $x''$ that occurs in $y$.
The first case reduces to an \textsc{Interval Longest Common Prefix Query}, while the latter two---to \BLCP.
For each factorization type, we compute the relevant candidates and choose the longest one as the phrase;
if there are no valid candidates, the next phrase is a single letter, i.e., $T[m\dd m]$.

Thus, regardless of the factorization type, we report each phrase $f_i$ of the factorization $x=f_1\cdots f_F$
in $\Oh(Q_{\rsucc}(n) \log \log |f_i|)$ time. 
This way, the total running time is $\Oh\big(\sum_{i=1}^F Q_{\rsucc}(n)\log\log |f_i|\big)$, which is $\Oh\bigl(F\cdot Q_{\rsucc}(n)\log\log\frac{|x|}{F}\bigr)$  due to Jensen's equality applied to the concave $\log \log $ function.
\maybeqed \end{proof}

Let us note that in the case of ordinary \LSC the approach presented in \cref{thm:lsc} would result in $\Oh(F\cdot Q_{\rsucc}(n))$ query time
because only \ILCP would be used; this is exactly the algorithm for \LSC provided in~\cite{DBLP:journals/tcs/KellerKFL14}.

Hence, despite our improvements, there is still an overhead for using variants of the LZ factorization other than the standard one.
Nevertheless, the overhead disappears if we use the state-of-the-art $\Oh(n)$-size data structure for range successor queries. This is because the $\Oh(\log^{\eps} n)$ time complexity lets us hide $\log^{o(1)} n$ factors
by choosing a slightly greater $\eps$.
Formally, \cref{thm:lsc,prp:range_successor_ub} yield the following result:
\begin{corollary}\label{cor:app-blcp}
For every text $T$ of length $n$ over an alphabet $[0 \dd n^{\Oh(1)})$ and constant $\eps>0$, there is a data structure of size $\Oh(n)$
that answers \BLCP\ in $\Oh(\log^{\eps }n)$ time
and \LSC (for all five factorization types defined above) in $\Oh(\log^{\eps }n)$ time per phrase reported.
Moreover, the data structure can be constructed in $\Oh(n\sqrt{\log n})$ time.
\end{corollary}

\section*{Acknowledgement}
The authors wish to thank Dominik Kempa for helpful discussions regarding synchronizing sets
and Moshe Lewenstein for a suggestion to work on the Generalized Substring Compression problem.
Jakub Radoszewski was supported by the Polish National Science Center, grant no.\ 2022/46/E/ST6/00463.

\bibliographystyle{alphaurl}
\bibliography{ipm}

\appendix

\section{Proof of \cref{lem:maxcut}}\label{app:AMDC}

\AMDC*
\begin{proof}
First, we preprocess $G$ so that each $v\in V$ stores both incoming and outgoing arcs.
For $A,B\sub V$, we denote $E(A,B)\sub E$ to be the set of arcs leading from $A$ to $B$;
recall that the goal is to make sure that $|E(L,R)|\ge \tfrac14|E|$.
Given $v\in V$ and $A\sub V$, define $\deg^+_A(v):=|E(\{v\},A)|$
and $\deg^-_A(v):= |E(A,\{v\})|$.

We maintain a partition $V = L \cup M \cup R$ into three disjoint classes.
Initially, $M=V$ and, as long as $M\ne \emptyset$, we pick an arbitrary vertex $v\in M$
and move $v$ to $L$ or $R$, depending on whether \[2\deg^+_R(v)+\deg^+_M(v)\ge 2\deg^-_L(v)+\deg^-_M(v)\]
or not.
This decision can be implemented in $\Oh(1+\deg^+_V(v)+\deg^-_V(v))$ time, which yields a total running time of $\Oh(|V|+|E|)$.

As for correctness, we shall prove that \[\Phi:=4|E(L,R)| + 2|E(L,M)| +2|E(M,R)|+|E(M,M)|\]
cannot decrease throughout the algorithm.
Consider the effect of moving $v$ from $M$ to $L$ on the four terms of~$\Phi$ (recall that there are no self-loops $v\to v$):
\begin{itemize}
  \item $|E(L,R)|$ increases by $\deg^+_R(v)$;
  \item $|E(L,M)|$ increases by $\deg^+_M(v)$ and decreases by $\deg^-_L(v)$;
  \item $|E(M,R)|$ decreases by $\deg^+_R(v)$;
  \item $|E(M,M)|$ decreases by $\deg^+_M(v)$ and decreases by $\deg^-_M(v)$.
\end{itemize}
Overall, $\Phi$ increases by 
  \begin{align*}4\cdot \deg^+_R(v) + 2\cdot (\deg^+_M(v) - \deg^-_L(v)) + 2\cdot (-\deg^+_R(v)) + (- \deg^+_M(v) - \deg^-_M(v))\\  = 2\deg^+_R(v) + \deg^+_M(v) - 2\deg^-_L(v) - \deg^-_M(v),\end{align*}
and this quantity is non-negative when the algorithm decides to move $v$ to $L$. 

Similarly, if $v$ is moved from $M$ to $R$, then $\Phi$ does not decrease.
Upon the end of the algorithm, we have $\Phi=4|E(L,R)|$ due to $M=\emptyset$,
whereas, initially, $\Phi = |E(M,M)|=|E|$ due to $M=V$. 
Since $\Phi$ is non-decreasing, we conclude that $4|E(L,R)|\ge |E|$ holds as claimed. 
\end{proof}

\end{document}